\documentclass[a4paper,fleqn]{cas-sc}

\usepackage[authoryear]{natbib}

\def\tsc#1{\csdef{#1}{\textsc{\lowercase{#1}}\xspace}}
\tsc{WGM}
\tsc{QE}

\usepackage{float}
\usepackage{geometry}
\usepackage{array}
\usepackage{booktabs}
\usepackage{graphicx}
\usepackage{bm}
\usepackage{xcolor}
\usepackage{caption}
\usepackage{amsthm}
\usepackage{algorithm}
\usepackage{algpseudocode}
\usepackage{caption}
\usepackage{subcaption}
\usepackage{natbib}
\usepackage{accents}
\usepackage{placeins} 
\usepackage{longtable}
\usepackage{float}

\newtheorem{theorem}{Theorem}
\newtheorem{lemma}{Lemma}

\newtheorem{remark}{Remark}
\usepackage{hyperref}
 \hypersetup{
    colorlinks=true,
    linkcolor=blue,
    filecolor=blue,      
    urlcolor=blue,
    citecolor=blue
}
\usepackage{enumitem}
\numberwithin{equation}{section}
\newcounter{assump} 
\renewcommand{\theassump}{AN\arabic{assump}}

\def\tsc#1{\csdef{#1}{\textsc{\lowercase{#1}}\xspace}}
\tsc{WGM}
\tsc{QE}
\tsc{EP}
\tsc{PMS}
\tsc{BEC}
\tsc{DE}

\usepackage{amssymb}

\begin{document}
\let\WriteBookmarks\relax
\def\floatpagepagefraction{1}
\def\textpagefraction{.001}
\shorttitle{}
\shortauthors{Sarkar et~al.}
\title[mode=title]{\textbf{A Three-Stage PCA Procedure for Sequentially Arriving High-Dimensional Data}}  



%
\author[1]{Partha Sarkar}[orcid=0009-0006-7526-1362] 
\cormark[1]
\author[2]{Sairam Rayapolu}[orcid=]
\author[3]{Bhargab Chattopadhyay}[orcid=0000-0001-7713-3125]
\affiliation[1]{organization={Department of Statistics}, Department and Organization
            addressline={Florida State University,}, 
           city={Tallahassee},
           state={Florida},
            postcode={32301},
            country={USA}}
            \affiliation[2]{organization={Data Science and Information Systemss},Department and Organization
            addressline={Krea University,}, 
           city={Sri City},
           state={Andhra Pradesh},
            postcode={517646},
            country={India}}
\affiliation[3]{organization={School of Management and Entrepreneurship},Department and Organization
            addressline={Indian Institute of Technology Jodhpur,}, 
           city={Karwar},
            postcode={342030},
            country={India}}
\cortext[cor1]{Corresponding author: ps24v@fsu.edu}

















\begin{abstract}
We develop a three-stage adaptive procedure for principal component analysis (PCA) when high-dimensional observations are collected sequentially and additional sampling incurs a cost. The procedure balances PCA compression loss against sampling cost while selecting the retained dimension through a prescribed explained-variance criterion. Starting from a pilot sample, an intermediate stage updates the PCA quantities before determining the final sample size, thereby avoiding reliance on unknown population eigenvalues. Under suitable regularity conditions, we establish both first- and second-order efficiency relative to the population oracle. Comparison with the corresponding two-stage rule shows that the additional recalibration yields sharper second-order control and reduces the influence of the pilot stage on the final sampling decision. The theory allows the ambient dimension to exceed the sample size under appropriate covariance and spectral conditions. Simulation studies demonstrate the strong finite-sample performance of the procedure across increasing dimensions and several dense covariance structures. As a real-data application, we conduct a retrospective study of gene-expression data from 32 cancer-type cohorts in The Cancer Genome Atlas, illustrating both cost-effective early stopping and settings in which additional observations are recommended.
\end{abstract}


\begin{keywords}
Effective rank \sep High-dimensional data \sep Principal component analysis \sep Sequential sampling \sep Three-stage adaptive procedure
\end{keywords}
\maketitle


\section{Introduction}
\label{sec_intro}

\noindent Principal component analysis (PCA) is one of the most widely used methods for reducing the dimension of multivariate data by representing observations through a smaller set of directions that retain most of the variation \citep[e.g.,]{jolliffe2016principal}. Its usefulness is particularly evident in modern high-dimensional applications, such as gene-expression studies, where each observation may contain measurements on tens of thousands of variables (\(p\)) while the number of available subjects (\(n\)) is comparatively small \citep[e.g.,]{johnstone2009,aoshima2018survey}. In many such studies, observations are also collected sequentially and at a non-negligible cost. Additional sampling can improve estimation of the covariance structure and the resulting principal components, but it also increases study cost. This leads to a natural sequential question: how long should sampling continue while updating the PCA representation needed to achieve a prescribed level of explained variation?

The sampling decision is closely tied to the unknown covariance structure. The spectrum determines both how many components are needed to attain the target level of explained variation and how much residual variation remains after compression, which in turn governs the benefit of collecting more observations. This dependence on the unknown spectrum becomes more challenging when \(p\) grows with \(n\) and may greatly exceed it. In such regimes, PCA behavior depends strongly on the underlying spectral structure and on how population variation is distributed across the principal directions \citep[e.g.,][]{paul2007asymptotics,aoshima2018survey,ReissWahl2020}. The purpose of this paper is therefore to develop a sequential procedure that determines the required sample size while updating the retained PCA dimension as new observations are acquired.

Several strands of literature are related to this problem. One concerns PCA for streaming or sequentially arriving data. The recursive learning rule of \citet{Oja1982} is an early foundation for online PCA, and subsequent work has developed finite-sample guarantees and computationally efficient methods for updating principal components as new observations arrive \citep[e.g.,][]{JainEtAl2016,cardot2018online}. Online formulations based on cumulative compression or reconstruction loss have also been studied; see, for example, \citet{nieonlinePCA}. These methods primarily address how the principal components should be updated along a data stream. Our problem contains an additional decision: when observations themselves are costly, one must also determine when data collection should stop.

\noindent A second line of work concerns the behavior of PCA in high dimensions. Classical PCA intuition can change substantially when the dimension and sample size grow together, as demonstrated in the spiked covariance setting by \citet{paul2007asymptotics}; related difficulties with standard PCA when the dimension is comparable to or much larger than the sample size are discussed by \citet{johnstone2009} and surveyed by \citet{aoshima2018survey}. More generally, the accuracy of sample covariance estimation can depend on the effective dimension of the covariance structure rather than on the ambient dimension alone \citep[e.g.,][]{koltchinskii2017}, while PCA reconstruction accuracy is similarly governed by the underlying eigenvalue structure \citep[e.g.,][]{ReissWahl2020}. This perspective is particularly relevant to the application in Section~\ref{application}, where the number of measured variables is much larger than the sample size, yet the dominant variation can still be concentrated in a relatively small number of principal directions.

The sampling aspect of our problem is closely related to the classical literature on sequential and multistage estimation, where an initial sample is used to obtain a preliminary estimate of the required sample size and additional stages are introduced to refine that estimate as more information becomes available \citep[e.g.,][]{hall1981asymptotic,ghosh201997sequential, chattopadhyay2025sequential}. Within PCA, cost-compression criteria have been studied through procedures that balance the loss incurred by dimension reduction against the cost of collecting additional observations \citep[e.g.,][]{chaban2022,BanerjeeChattopadhyay2025}. These earlier developments primarily consider two-stage procedures in fixed-dimensional settings; the available second-order analysis also assumes a prespecified retained dimension and uses additional population spectral information in calibrating the sampling rule. Our setting is different in that the ambient dimension is allowed to increase with the sample size and the number of principal components is itself selected from the accumulating data according to an explained-variance criterion. Thus, uncertainty in the covariance structure affects both the required sample size and the dimension of the final PCA representation.

To address these features, we introduce the \textit{three-stage procedure} in Algorithm~\ref{algo_1}. The pilot sample is used only to determine an intermediate sample size rather than the final target directly. After the additional observations are collected, the covariance structure, retained PCA dimension, and residual variation are re-estimated using the enlarged sample, and this updated information is then used to determine the final sample size. The additional recalibration therefore reduces the influence of estimation error at the pilot stage and allows both the dimension-selection and sample-size decisions to adapt to the information accumulated during the study. Importantly, the procedure is fully data driven and does not require population eigenvalues or other unknown spectral quantities as inputs.

Our theoretical analysis shows that, under the standard regularity conditions, the proposed procedure achieves both \textit{first- and second-order efficiency} relative to the population oracle, even in \textit{high-dimensional settings}. Theorems~\ref{th:th_almost_sure_cons}--\ref{th:risk_cons} establish first-order efficiency through almost sure sample-size consistency, mean sample-size efficiency, and risk efficiency, while Theorems~\ref{th:mean_diff} and~\ref{th:risk_diff} establish second-order efficiency by showing that the expected stopping sample size remains within a bounded distance of the oracle target and that the risk regret is of the same order as the sampling cost. A key advantage of the three-stage construction is the final recalibration: for the corresponding two-stage rule, both the expected sample-size discrepancy and the risk regret retain an additional factor involving the ratio of the oracle target to the pilot size and can therefore be substantially larger when the pilot is small relative to the oracle target; see Section~\ref{two_stage_comparison}. The third stage removes this pilot-dependent inflation, yielding the sharper second-order efficiency of the proposed procedure.

We examine the empirical performance of the proposed method through simulation studies with increasing ambient dimensions and several dense covariance structures exhibiting different patterns of eigenvalue decay. The simulations assess sample-size selection, risk performance, and recovery of the population PCA dimension, providing a strong finite-sample illustration of the theoretical results.

As our real-data application, we conduct a retrospective study using gene-expression data from 32 cancer-type cohorts in The Cancer Genome Atlas \citep[e.g.,][]{Weinstein2013,Hoadley2018,GDCDataPortal2026}. Each patient has measurements on approximately \(20{,}000\) genes, while the number of available patients is far smaller, providing a natural high-dimensional setting. We first examine whether most of the observed variation is concentrated in a relatively small number of principal directions and then retrospectively apply our \textit{three-stage procedure}, under a specified sampling-cost trade-off, to determine a data-driven target sample size for each cohort. The results illustrate both possible outcomes: for larger cohorts, the procedure can recommend stopping well before all available patients are used, whereas for smaller cohorts the recommended target can exceed the available sample size, indicating that additional observations would be desirable.

\noindent The remainder of the paper is organized as follows. Section~\ref{sec:notation} introduces the notation and basic definitions, and Section~\ref{sec:setup_oracle} formulates the statistical framework and oracle benchmark. Section~\ref{sec:three_stage} presents the proposed three-stage procedure, Section~\ref{sec:asym_efficiency} develops its first- and second-order efficiency theory and comparison with two-stage sampling, Section~\ref{sec_simu} reports the simulation studies, and Section~\ref{application} presents the TCGA real-data application. Proofs of the main results and additional technical results are provided in Appendices~\ref{app_A} and~\ref{app_B}.

\subsection{Notation and definitions}\label{sec:notation}

\noindent Let us introduce the notations and definitions used throughout the paper. For positive sequences $a_n$ and $b_n$, we write $a_n = O(b_n)$ if there exists a constant $C$ such that $a_n \leq C b_n$ for all $n \in \mathbb{N}$, and $a_n = \Omega(b_n)$ if there exists a constant $C$ such that $a_n \geq C b_n$ for all $n \in \mathbb{N}$. We write $a_n = o(b_n)$ if $\lim_{n \to \infty} a_n/b_n = 0$, and $a_n \sim b_n$ if $a_n/b_n \to 1$ as $n \to \infty$.

For matrices and vectors, $I_p$ denotes the identity matrix of order $p$, and for a symmetric $p \times p$ matrix $A$, $\lambda_1(A) \geq \lambda_2(A) \geq \cdots \geq \lambda_p(A)$ denote its ordered eigenvalues. The \textit{effective rank} of a positive semi-definite matrix $\Sigma_0$ \citep[e.g.,][]{koltchinskii2017} is defined by $\widetilde{r}(\Sigma_0) = \operatorname{tr}(\Sigma_0)/\lambda_1(\Sigma_0)$. We write $\mathcal{S}^{p-1}$ for the unit Euclidean sphere in $\mathbb{R}^p$. For a vector $x \in \mathbb{R}^p$, its $r$-th norm is $\|x\|_r = \big(\sum_{j=1}^p |x_j|^r\big)^{1/r}$, with $\|x\|$ denoting the Euclidean norm. For a $p \times p$ matrix $A = (A_{ij})_{1 \leq i,j \leq p}$, the spectral norm is $\|A\| := \sup_{u \in \mathcal{S}^{p-1}} \|Au\|$. For a scalar random variable $Y$, we denote its sub-Gaussian norm by $\|Y\|_{\psi_2}$, and for a random vector $X\in\mathbb{R}^p$ we use $\|X\|_{\psi_2}:=\sup_{u\in\mathcal{S}^{p-1}}\|\langle X,u\rangle\|_{\psi_2}$ \citep[e.g.,][]{Vershynin2018,vershynin2010introduction}.

\section{Statistical Framework and Oracle Benchmark}\label{sec:setup_oracle}

\noindent In this section, we introduce the statistical framework and the PCA targets used throughout the paper, formulate the cost-compression criterion \cite{chaban2022} in the high-dimensional setting, and define the corresponding population oracle that serves as the benchmark for the sequential procedure. We then state the regularity conditions and establish the basic oracle properties needed for the development and analysis of the proposed three-stage methodology.

\subsection{Statistical setup and PCA targets}\label{setup_target}

\noindent Suppose that \(X_1,\ldots,X_n\) are \(p_n\)-dimensional independent and identically distributed mean-zero sub-Gaussian random vectors \citep[e.g.,][]{Vershynin2018}, with
\(\operatorname{Var}(X_1)=\Sigma_{0n}\).
The notation \(p_n\) emphasizes that the ambient dimension may grow with the sample size \(n\). Accordingly, the population covariance matrices
\(\Sigma_{0n}\in\mathbb R^{p_n\times p_n}\) are also allowed to vary with \(n\). For a prescribed explained-variance proportion \(\eta\in(0,1)\), define
\begin{equation}
k_{0n}
:=
\min\left\{
k\in\{1,\ldots,p_n\}:
\sum_{j=1}^{k}\lambda_j(\Sigma_{0n})
\ge
\eta\,\operatorname{tr}(\Sigma_{0n})
\right\}.
\label{eq:k0n}
\end{equation}
Thus, \(k_{0n}\) is the smallest number of population principal components required to explain at least \(100\eta\%\) of the total population variance.

\noindent The natural empirical counterpart of \(\Sigma_{0n}\) is the sample covariance matrix \(S_n=n^{-1}\sum_{i=1}^{n}X_iX_i^\top\). The corresponding empirical counterpart of \(k_{0n}\) is denoted by \(\hat{k}_n\). Specifically, \(\hat{k}_n\) is the smallest number of sample principal components required to explain at least \(100\eta\%\) of the total sample variance, that is,
\begin{equation}
\hat{k}_n
:=
\min\left\{
k\in\{1,\ldots,p_n\}:
\sum_{j=1}^{k}\lambda_j(S_n)
\ge
\eta\,\operatorname{tr}(S_n)
\right\}.
\label{eq:khatn}
\end{equation}

\noindent Let \(u_{1n},\ldots,u_{p_n n}\) and
\(\hat{u}_{1n},\ldots,\hat{u}_{p_n n}\) be orthonormal eigenvectors of \(\Sigma_{0n}\) and \(S_n\), respectively, ordered according to their corresponding eigenvalues. For \(k\in\{1,\ldots,p_n\}\), define the population and empirical projection matrices by
\[
P_{0n}^{k}
:=
\sum_{j=1}^{k}u_{jn}u_{jn}^{\top},
\qquad
P_n^{k}
:=
\sum_{j=1}^{k}\hat{u}_{jn}\hat{u}_{jn}^{\top}.
\]
These matrices project onto the subspaces spanned by the first \(k\) population and sample principal component directions, respectively. The variance-based selection rules therefore yield the projections
\(P_{0n}^{k_{0n}}\) and \(P_n^{\hat{k}_n}\). Theerefore, the population residual variance after retaining \(k_{0n}\) components and its empirical counterpart are defined, respectively,
\begin{equation}\label{V_Xi}
\xi_n:=\operatorname{tr}\!\left[\left(I_{p_n}-P_{0n}^{k_{0n}}\right)\Sigma_{0n}\right]
=\sum_{j=k_{0n}+1}^{p_n}\lambda_j(\Sigma_{0n}),
\qquad
V_n(\hat{k}_n):=\operatorname{tr}\!\left[\left(I_{p_n}-P_n^{\hat{k}_n}\right)S_n\right]
=\sum_{j=\hat{k}_n+1}^{p_n}\lambda_j(S_n).
\end{equation}
Thus, \(\xi_n\) and \(V_n(\hat{k}_n)\) quantify the population and sample variance, respectively, left unexplained by the retained principal components, with empty sums interpreted as zero. We also define the population and sample explained-variance curves by \(\gamma_k:=\sum_{j=1}^k\lambda_j(\Sigma_{0n})/\operatorname{tr}(\Sigma_{0n})\) and \(\hat\gamma_k:=\sum_{j=1}^k\lambda_j(S_n)/\operatorname{tr}(S_n)\), respectively. 

\subsection{Cost-compression loss and risk}\label{cost_compression}
\noindent Unlike fixed sample size procedures, sequential procedures accumulate the data in stages and the final sample size is determined using a stopping rule. This is especially useful when each additional observation carries a real cost, as in applications involving expensive data acquisition. In such settings, the sample size can then be chosen to balance the sampling cost against the value of collecting more data. For example, in Section~\ref{application} we analyze gene-expression data from The Cancer Genome Atlas (TCGA) program. Genetic and genomic testing in oncology can be costly, with reported costs ranging from \$76.91 to \$11,431.66, and whole-exome sequencing (WES) and whole-genome sequencing (WGS) among the most expensive options \citep[e.g.,][]{Goh2026}. Thus, collecting more observations may improve estimation of the covariance structure and stabilize the estimated number of principal components, but it also increases the total sampling cost. To capture this trade-off, we use the cost-compression loss introduced by \cite{chaban2022},
\begin{equation}
L_n(\hat{k}_n)
=
\frac{A\,V_n(\hat{k}_n)}{n}+cn,
\qquad A>0,\quad c>0,
\label{eq:cost_compression_loss}
\end{equation}
where \(V_n(\hat{k}_n)\) is the sample residual variance defined in \eqref{V_Xi}. This is a natural loss function for our setting, since the goal is to choose the sample size while performing PCA-based dimension selection. The quantity \(V_n(\hat{k}_n)\) measures the amount of variation not captured by the retained principal components. With a larger sample, this residual variation is assessed using more information, and its contribution to the overall statistical loss is therefore discounted by the factor \(1/n\), with its relative importance controlled by \(A>0\). Thus, increasing \(n\) reduces the effective penalty associated with PCA compression, but only at the cost of collecting additional observations. The term \(cn\), with \(c>0\) representing the per-observation cost, prevents the procedure from increasing the sample size indefinitely. The resulting criterion therefore seeks a sample size that balances the statistical gain from additional data against its acquisition cost. For a fixed sample size \(n\), the corresponding risk is
\begin{equation}
R_n^{\mathrm{fixed}}(\hat{k}_n)
=
\frac{A\,\mathbb{E}[V_n(\hat{k}_n)]}{n}
+cn.
\label{fixed_risk_function}
\end{equation}
In a sequential procedure, however, the sample size at which estimation is performed is generally data-dependent and hence random. To accommodate this additional randomness, for a generic random sample size \(M\) we define
\begin{equation}
R_M(\hat{k}_M)
=
A\,\mathbb{E}\!\left[\frac{V_M(\hat{k}_M)}{M}\right]
+c\,\mathbb{E}[M].
\label{risk_function}
\end{equation}
When \(M=n\) is non-random, this reduces to the fixed-sample risk in \eqref{fixed_risk_function}. The corresponding population risk at a fixed sample size \(n\) is
\begin{equation}\label{oracle_risk}
R_n^\star
=
\frac{A\xi_n}{n}+cn,
\end{equation}
where \(\xi_n\) is the population residual variance defined in \eqref{V_Xi}. This formulation will later be used to evaluate the data-dependent sample sizes generated by the proposed sequential procedure (see Section \ref{sec:asym_efficiency} and \ref{sec_simu}). The oracle sample size is then the smallest sample size minimizing this population risk,
\begin{equation}
\label{optimum_sample_size}
n_0(c)
:=
\min\left\{
n\in\mathbb N:
n\in\arg\min_{r\in\mathbb N}R_r^\star
\right\}.
\end{equation}
Thus, \(n_0(c)\) provides the population benchmark against which the sample sizes generated by the sequential procedure will be compared. However, the definition alone does not guarantee that such a minimizer exists. In the fixed-dimensional setting considered by \cite{BanerjeeChattopadhyay2025}, the residual variance \(\xi\) does not depend on \(n\), so the criterion \(A\xi/n+cn\) is convex in \(n\) and yields the familiar oracle scale \(n_0(c)\asymp\sqrt{A\xi/c}\), with the integer optimum obtained by discretizing the continuous solution. In our high-dimensional setting, both \(p_n\) and \(\Sigma_{0n}\) may vary with \(n\), and consequently \(\xi_n\) also depends on \(n\). Hence, no analogous closed-form solution is available in general. We therefore need to characterize \(n_0(c)\) more carefully, establish its existence, and derive the properties required for the subsequent sequential analysis; this is done in the next section.
\subsection{Oracle characterization}\label{oracle_char}

\noindent In this section, we characterize the oracle sample size \(n_0(c)\) and establish several properties needed for the subsequent analysis. We first introduce the assumptions imposed on the population-level data-generating mechanism that will be used throughout the paper.

\begin{enumerate}[label=(\textbf{AN\arabic*})]

\item \label{as:spec_subg}
There exists a constant $\kappa_\sigma \in (0,1]$, not depending on $n$, such
that
$\kappa_\sigma \leq \lambda_1(\Sigma_{0n}) \leq \kappa_\sigma^{-1}.$
That is, the maximum eigenvalue of $\Sigma_0$ is bounded. Second, let  $X_1 = \Sigma_{0n}^{1/2} Y_1$
where $Y_1$ is a isotropic sub-Gaussian random variable. Then, there is a constant $\sigma_0 > 0$, not depending on $n$, such that $\|Y_1\|_{\psi_2} \le \sigma_0$.

\item \label{ass:non_incr_tail_mass}
The sequence $\{\xi_n\}_{n\geq1}$ is positive and nonincreasing, that is, $\xi_n>0$ and $\xi_{n+1}\leq \xi_n$ for every $n\in\mathbb{N}$, with $\inf_{n\geq1}\xi_n>0$.
\item \label{ass:oracle_flatness}
As $c \downarrow 0$, $\xi_{n_0(c)-1}/\xi_{n_0(c)}
=
1 + o\left(1/n_0(c)\right).$

\item \label{ass:gap_k0}
Fix $\eta \in (0,1)$ and let $\gamma_k$ be defined as in Section \ref{setup_target}. Assume
\[
g_0
:=
\inf_n
\min\left\{
\eta-\gamma_{k_{0n}-1},
\gamma_{k_{0n}}-\eta
\right\}
>0.
\]

\end{enumerate}
The uniform upper bound on the largest eigenvalue is standard in high-dimensional covariance estimation; see, e.g., \cite{banerjee1,banerjee2,bickel2008regularized,spectrum,xiang,sarkar1,sarkar2}. The sub-Gaussian condition in Assumption~\ref{as:spec_subg} rules out heavy tails and holds automatically in the Gaussian case after whitening.

Assumption~\ref{ass:non_incr_tail_mass} ensures that the population residual variance \(\xi_n\) does not fluctuate upward with \(n\), preserving the trade-off between the decreasing term \(A\xi_n/n\) and the increasing sampling cost \(cn\). In the fixed-dimensional setting \citep[e.g.,][]{chaban2022} \(\xi_n\) is constant and this trade-off is automatic, whereas in the growing-dimensional regime it is needed to ensure a well-behaved oracle problem. Assumption~\ref{ass:oracle_flatness} further controls the local variation of \(\xi_n\) near \(n_0(c)\), allowing the discrete oracle condition to be approximated by the surrogate characterization in Lemma~\ref{lem:asymp_equi}.

Finally, Assumption~\ref{ass:gap_k0} stabilizes the selected population dimension by separating the threshold \(\eta\) from \(\gamma_{k_{0n}-1}\) and \(\gamma_{k_{0n}}\), thereby preventing small eigenvalue perturbations from changing \(k_{0n}\). Since \(\lambda_{k_{0n}}(\Sigma_0)/\operatorname{tr}(\Sigma_0)=\gamma_{k_{0n}}-\gamma_{k_{0n}-1}\ge2g_0\), it also implies \(\sup_n\widetilde r(\Sigma_{0n})<\infty\). Thus, while the ambient dimension \(p_n\) may diverge, the effective dimension remains uniformly controlled; in this sense, it also acts as a growth condition for the high-dimensional regime.

\noindent With these assumptions in place, we next show that the oracle sample size \(n_0(c)\) is well defined and has the desired asymptotic behavior. The formal proof is provided in Appendix~\ref{app_A}.

\begin{lemma}
\label{lem:oracle_optimality}
Assume \(A>0\), \(c>0\), and Assumption~\ref{ass:non_incr_tail_mass} holds. Then the set of global minimizers of \(R_n^\star\) over \(n\in\mathbb N\) is nonempty, and \(n_0(c)\), defined in \eqref{optimum_sample_size} as the smallest such minimizer, is well defined and unique. Moreover, \(n_0(c)\to\infty\) as \(c\downarrow0\).
\end{lemma}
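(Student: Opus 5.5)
Existence comes from coercivity, uniqueness is immediate from the definition, and divergence follows by comparing the oracle risk with its value at a fixed sample size.

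\emph{Existence and uniqueness.} By Assumption~\ref{ass:non_incr_tail_mass}, \(0<\xi_n\le \xi_1\) for every \(n\). Hence
\[
cn \;\le\; R_n^\star \;\le\; \frac{A\xi_1}{n}+cn,
\]
and in particular \(R_1^\star\le A\xi_1+c\). For every \(n>(A\xi_1+c)/c\) we then have \(R_n^\star\ge cn>R_1^\star\). So any minimizer must lie in the finite set \(\{1,\ldots,N_c\}\) with \(N_c:=\lfloor (A\xi_1+c)/c\rfloor\). A function on a finite nonempty set attains its minimum, and the infimum over all of \(\mathbb N\) equals this finite minimum. Therefore the set of global minimizers of \(R_n^\star\) is nonempty, and, being a nonempty subset of \(\mathbb N\), it has a least element. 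That least element is \(n_0(c)\) as defined in \eqref{optimum_sample_size}, so \(n_0(c)\) is well defined and unique.

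\emph{Divergence as \(c\downarrow0\).} Set \(\xi_\ast:=\inf_n\xi_n>0\). Fix \(m\in\mathbb N\); we show \(n_0(c)>m\) for all sufficiently small \(c\). For any \(n\le m\),
\[
R_n^\star\ \ge\ \frac{A\xi_n}{n}\ \ge\ \frac{A\xi_\ast}{m}.
\]
Now compare with the single sample size \(n'=\lceil c^{-1/2}\rceil\). Since \(\xi_{n'}\le\xi_1\),
\[
R_{n'}^\star\ \le\ A\xi_1 c^{1/2}+c\bigl(c^{-1/2}+1\bigr)\ \longrightarrow\ 0 \qquad (c\downarrow0).
\]
For \(c\) small enough, \(R_{n'}^\star<A\xi_\ast/m\le \min_{n\le m}R_n^\star\), so no \(n\le m\) is a global minimizer. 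Hence \(n_0(c)>m\) for all small \(c\). Since \(m\) was arbitrary, \(n_0(c)\to\infty\).

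The only point requiring care is the role of the lower bound \(\inf_n\xi_n>0\) from Assumption~\ref{ass:non_incr_tail_mass}. It is used only to keep the risks \(R_n^\star\), \(n\le m\), bounded away from zero uniformly in \(c\), which is exactly what the divergence step needs. Monotonicity of \(\xi_n\) is used only through the bound \(\xi_n\le\xi_1\).
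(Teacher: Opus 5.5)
Your proposal is correct and follows essentially the same route as the paper: existence comes from $R_n^\star\ge cn>R_1^\star$ for large $n$, which reduces the problem to a finite minimization, and divergence comes from exhibiting a sample size whose risk is smaller than every $R_k^\star$ with $k\le m$. The only cosmetic differences are in the divergence step: the paper fixes a large $n$ depending on $M$ and bounds $R_k^\star$ below using monotonicity ($\xi_k\ge\xi_M$), whereas you take the $c$-dependent $n'=\lceil c^{-1/2}\rceil$ and use $\inf_n\xi_n>0$ (positivity of the finitely many $\xi_k$, $k\le m$, would already suffice).
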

\noindent Although Lemma~\ref{lem:oracle_optimality} shows that \(n_0(c)\) is well defined, it is characterized only implicitly as the minimizer of the population risk \(R_n^\star\) and depends on the unknown tail-mass sequence \(\{\xi_n\}\). This makes it analytically inconvenient for subsequent comparisons. We therefore introduce the auxiliary index
\begin{equation}
\label{auxiliary_index}
n_T(c)
:=
\min\left\{
n\in\mathbb N:
\frac{A\xi_n}{n^2}\le c
\right\}.
\end{equation}
The definition of $n_T(c)$ admits a natural decision-theoretic
interpretation. The risk $R_n^\star$ defined in \eqref{oracle_risk} is composed of two competing
components: the first term corresponds to the estimation error, which
decreases with $n$, whereas the second term represents a linear sampling
cost, which increases with $n$. For small sample sizes, the estimation error
dominates, so that enlarging $n$ reduces the overall risk. In contrast, for
sufficiently large $n$, the linear cost term prevails, and further sampling
leads to an increase in risk. The optimal sample size is therefore determined
by a balance between these two opposing terms. We next describe conditions
under which $n_T(c)$ and $n_0(c)$ are asymptotically equivalent.

Since $n_0(c)$ is defined as the minimizer of 
$R_n^\star = A\xi_n/n + cn$, the relevant first-order characterization is
governed by discrete increments rather than an ordinary derivative of a
continuous function. In particular, at the optimum $n_0= n_0(c)$, the
decrease in the estimation component must be of the same order as the
increase in the linear cost $cn$. Accordingly, the quantity
$A\xi_{n_0-1}/(n_0-1) - A\xi_{n_0}/n_0$ plays the role of a discrete
derivative. A direct decomposition yields
\[
\frac{\xi_{n-1}}{n-1} - \frac{\xi_n}{n} = \frac{\xi_n}{n(n-1)} + \frac{\xi_{n-1} - \xi_n}{n-1}.
\]
The first term is of order $\xi_n/n^2$, which coincides with the scale
appearing in the auxiliary index $n_T(c)$. Therefore, for the discrete
optimality condition at $n_0(c)$ to imply $c = A\xi_{n_0(c)}/n_0(c)^2\,\{1 +
o(1)\}$, it is necessary that the second term be asymptotically negligible
relative to the first. This requirement is ensured by the local smoothness
condition stated as Assumption~\ref{ass:oracle_flatness},
namely $\xi_{n_0(c)-1}/\xi_{n_0(c)} = 1 + o\{1/n_0(c)\}$. From the
discrete-derivative perspective, this condition asserts that the relative
one-step variation of $\xi_n$ is asymptotically smaller than the intrinsic
variation of the factor $1/n$, whose own discrete derivative is of order
$1/n^2$. Under this regime, the dominant contribution to the local slope of
$n \mapsto \xi_n/n$ arises from the denominator $n^{-1}$, while the numerator
$\xi_n$ varies sufficiently slowly so as not to affect the leading-order
balance. This justifies the asymptotic identification of $n_0(c)$ through the auxiliary index $n_T(c)$. We next state this condition formally and present a lemma showing that $n_T(c)/n_0(c)\to 1$ as $c\downarrow 0$. The proof of this lemma is provided in Appendix~\ref{app_A}.

\begin{lemma}\label{lem:asymp_equi}
Assume \(A>0\), \(c>0\), and that Assumptions~\ref{ass:non_incr_tail_mass} and~\ref{ass:oracle_flatness} hold. Recall the oracle index \(n_0(c)\) defined in \eqref{optimum_sample_size} and the auxiliary index \(n_T(c)\) defined in \eqref{auxiliary_index}. Then, as \(c\downarrow0\), \(c=A\xi_{n_0(c)}n_0(c)^{-2}\{1+o(1)\}\), and consequently \(n_T(c)/n_0(c)\to1\).
\end{lemma}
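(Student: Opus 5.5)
The plan is to compare the optimality of \(n_0=n_0(c)\) with its two neighbours \(n_0-1\) and \(n_0+1\). This turns the discrete first-order conditions into two-sided bounds on \(c\). Those bounds are then transferred to \(n_T(c)\) using the monotonicity of \(f(n):=A\xi_n/n^2\).

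\emph{Step 1 (upper bound on \(c\)).} By Lemma~\ref{lem:oracle_optimality}, \(n_0(c)\to\infty\), so \(n_0\ge2\) for all small \(c\). Since \(n_0\) is a global minimizer, \(R^\star_{n_0}\le R^\star_{n_0-1}\). Rearranging and using the decomposition displayed before the lemma gives
\[
c\le A\Bigl(\frac{\xi_{n_0-1}}{n_0-1}-\frac{\xi_{n_0}}{n_0}\Bigr)
=\frac{A\xi_{n_0}}{n_0(n_0-1)}+\frac{A\xi_{n_0}}{n_0-1}\Bigl(\frac{\xi_{n_0-1}}{\xi_{n_0}}-1\Bigr).
\]
Assumption~\ref{ass:oracle_flatness} says the bracket is \(o(1/n_0)\), and \(\xi_{n_0}>0\) by Assumption~\ref{ass:non_incr_tail_mass}. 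Since \(n_0/(n_0-1)\to1\), both terms together are \(A\xi_{n_0}n_0^{-2}\{1+o(1)\}\).

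\emph{Step 2 (lower bound on \(c\)).} From \(R^\star_{n_0}\le R^\star_{n_0+1}\) we get \(c\ge A\{\xi_{n_0}/n_0-\xi_{n_0+1}/(n_0+1)\}\). Since \(\xi_{n_0+1}\le\xi_{n_0}\) by Assumption~\ref{ass:non_incr_tail_mass}, this is at least \(A\xi_{n_0}/\{n_0(n_0+1)\}=A\xi_{n_0}n_0^{-2}\{1+o(1)\}\). Combining Steps 1 and 2 yields \(c=A\xi_{n_0}n_0^{-2}\{1+o(1)\}\).

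\emph{Step 3 (transfer to \(n_T\)).} By Assumption~\ref{ass:non_incr_tail_mass}, \(f(n)=A\xi_n/n^2\) is strictly decreasing, so \(n_T(c)\le m\) if and only if \(f(m)\le c\).

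\emph{Upper bound on \(n_T\).} Using \(\xi_{n_0+1}\le\xi_{n_0}\) and the bound from Step 2,
\[
f(n_0+1)=\frac{A\xi_{n_0+1}}{(n_0+1)^2}\le\frac{A\xi_{n_0}}{n_0(n_0+1)}\le c.
\]
Hence \(n_T\le n_0+1\), and \(\limsup n_T/n_0\le1\).

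\emph{Lower bound on \(n_T\).} Fix \(\delta\in(0,1)\) and set \(m=\lfloor(1-\delta)n_0\rfloor\), which is at least \(1\) for small \(c\). Monotonicity of \(\xi\) gives \(\xi_m\ge\xi_{n_0}\), so
\[
f(m)\ge\frac{A\xi_{n_0}}{(1-\delta)^2n_0^2}.
\]
By Step 1, \(c\le A\xi_{n_0}n_0^{-2}(1+\varepsilon_c)\) with \(\varepsilon_c\to0\). Once \(1+\varepsilon_c<(1-\delta)^{-2}\), we get \(f(m)>c\), and monotonicity of \(f\) forces \(n_T>m\). Thus \(\liminf n_T/n_0\ge1-\delta\). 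Letting \(\delta\downarrow0\) gives \(n_T(c)/n_0(c)\to1\).

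The main point of care is Step 1. There the flatness Assumption~\ref{ass:oracle_flatness} must be used exactly at the random-in-\(c\) index \(n_0(c)\), with the \(o(\cdot)\) read along \(c\downarrow0\). One also needs \(n_0\ge2\) so that \(n_0-1\) is admissible, which Lemma~\ref{lem:oracle_optimality} supplies. The rest is elementary monotonicity.
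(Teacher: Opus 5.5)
Your proposal is correct and follows essentially the same route as the paper. Both derive two-sided bounds on \(c\) from \(R^\star_{n_0}\le R^\star_{n_0\pm1}\), use Assumption~\ref{ass:oracle_flatness} to make \((\xi_{n_0-1}-\xi_{n_0})/(n_0-1)\) negligible, and then obtain \(n_T(c)\le n_0+1\) from the lower bound and \(n_T(c)>(1-\delta)n_0\) from the upper bound. The differences are cosmetic: you check the single index \(\lfloor(1-\delta)n_0\rfloor\) via strict monotonicity of \(A\xi_n/n^2\), where the paper checks every \(n\le(1-\epsilon)n_0\) directly, and your local symbol \(m\) clashes with the paper's pilot size \(m\), so it is worth renaming.
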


\section{Three-Stage PCA Procedure}
\label{sec:three_stage}

\noindent The oracle characterization in Section~\ref{oracle_char} suggests a natural sequential strategy. In particular, Lemma~\ref{lem:asymp_equi} shows that the oracle sample size asymptotically satisfies \(n_0(c)\simeq\sqrt{A\xi_{n_0(c)}/c}\). The difficulty is that the population residual variance \(\xi_{n_0(c)}\), and hence \(n_0(c)\) itself, is unknown. A natural approach is therefore to begin with a pilot sample, estimate the residual variance from the available data, and use this plug-in estimate to determine how many additional observations should be collected. More generally, this idea belongs to the classical sequential and multistage sampling literature, where an initial sample is used to estimate an unknown quantity governing the optimal sample size and later sampling decisions are updated using the information accumulated so far; see, for example, \cite{hall1981asymptotic,ghosh201997sequential,mukhopadhyay2009sequential}.

The simplest implementation uses the pilot sample once to estimate the entire target sample size and then terminates after collecting the prescribed additional observations. Such a procedure has two sampling stages: the pilot stage and the final stage. A two-stage version of this plug-in idea was introduced for cost-compression PCA by \cite{chaban2022}. Their formulation, however, treats the ambient dimension and the number of retained principal components \(k\) as fixed, so that the population residual variance does not vary with the sample size, and the resulting theory primarily concerns first-order efficiency. A subsequent modified two-stage procedure obtains second-order guarantees under stronger assumptions, including multivariate normality and knowledge of the smallest population eigenvalue, or a positive lower bound for it \citep[e.g.,][]{BanerjeeChattopadhyay2025}. In contrast, our setting allows \(p_n\) to grow with \(n\), selects the retained dimension through the explained-variance threshold \(\eta\in(0,1)\), and does not require prior knowledge of population eigenvalues.

These features motivate an additional recalibration before sampling terminates. Rather than using the pilot estimate to determine the entire final sample size, we first use it to move only to an intermediate sample size, chosen to be approximately a fraction \(\rho\in(0,1)\) of the oracle target. The additional observations collected at this intermediate stage provide an updated covariance matrix and residual-variance estimate, which are then used to determine the final sample size. We therefore obtain a three-stage procedure: a pilot stage, an intermediate recalibration stage, and a final sampling stage. In this way, errors in the pilot estimate are not carried directly into the final sampling decision. We take
\begin{equation}
\label{pilot}
m=\left\lceil\left(\frac{A}{c}\right)^{1/(2\delta)}\right\rceil,
\qquad \delta>1.
\end{equation}
This choice allows the pilot size to increase as the sampling cost \(c\) decreases while remaining asymptotically smaller than the target sample size. Thus, the pilot provides enough information to estimate the relevant PCA quantities without consuming a substantial fraction of the eventual sample. The parameter \(\delta\) controls this trade-off: values closer to \(1\) produce a larger pilot, whereas larger values produce a smaller initial sample. The precise asymptotic requirements on \(m\) are formalized through Assumptions~\ref{as:A1} below. 

\noindent We now formally present the proposed three-stage procedure. For a fixed explained-variance threshold \(\eta\in(0,1)\), the retained PCA dimension \(\hat k_n\) at each stage is chosen as the smallest number of sample principal components explaining at least a fraction \(\eta\) of the total sample variation. Using the pilot residual \(V_m(\hat k_m)\), define the intermediate sample size
\begin{equation}
\label{3stageT}
T_m
:=
\max\left\{
m,\,
\left\lceil
\rho\sqrt{\frac{A}{c}V_m(\hat k_m)}
\right\rceil
\right\},
\qquad \rho\in(0,1),
\end{equation}
and write \(T=T_m\). After collecting observations up to \(T\), PCA is recomputed using the same explained-variance threshold \(\eta\), and the updated residual \(V_T(\hat k_T)\) is used to determine the final sample size
\begin{equation}
\label{3stageN}
N_T
:=
\max\left\{
T,\,
\left\lceil
\sqrt{\frac{A}{c}V_T(\hat k_T)}
\right\rceil
\right\}.
\end{equation}
The factor \(\rho\) keeps the intermediate stage below the estimated oracle target, leaving room for a final recalibration based on the more informative sample of size \(T\). The complete procedure is summarized in Algorithm~\ref{algo_1}.

\begin{algorithm}[!htbp]
\caption{Three-Stage PCA Procedure}
\label{algo_1}
\begin{algorithmic}[1]
\Require $A>0$,\ $c>0$,\ $\rho\in(0,1)$,\ $\delta>1$,\ $\eta\in(0,1)$
\Ensure Compressed dataset $\mathcal{D}_{N_T}^{\hat{k}_{N_T}}$

\Statex
\Statex \textbf{Stage 1: Pilot sample}
\State Compute the pilot sample size
       $m \gets \left(\dfrac{A}{c}\right)^{\!1/(2\delta)}$.
\State Collect $m$ observations and form
       $\mathcal{D}_m^{p}=(X_1,X_2,\ldots,X_m)$.
\State Apply PCA on $\mathcal{D}_m^{p}$.
\State Determine $\hat{k}_m$, the minimum number of PCs explaining
       at least $100\eta\%$ of the variance.
\State Compute
       $V_m(\hat{k}_m)\gets\operatorname{tr}\!\left[(I-P_m^{\hat{k}_m})S_m\right]$.

\Statex
\Statex \textbf{Stage 2: First update}
\State Compute
       \[
          T_m \gets \max\!\left\{\,m,\;
          \left\lceil \rho\sqrt{\tfrac{A}{c}\,V_m(\hat{k}_m)}\,\right\rceil\right\},
          \qquad T \gets T_m.
       \]
\State Collect $T_m-m$ additional observations and form
       $\mathcal{D}_T^{p}=(X_1,X_2,\ldots,X_T)$.
\State Apply PCA on $\mathcal{D}_T^{p}$.
\State Determine $\hat{k}_T$, the minimum number of PCs explaining
       at least $100\eta\%$ of the variance.
\State Compute
       $V_T(\hat{k}_T)\gets\operatorname{tr}\!\left[(I-P_T^{\hat{k}_T})S_T\right]$.

\Statex
\Statex \textbf{Stage 3: Final sample}
\State Compute
       \[
          N_T \gets \max\!\left\{\,T_m,\;
          \left\lceil \sqrt{\tfrac{A}{c}\,V_T(\hat{k}_T)}\,\right\rceil\right\}.
       \]
\State Collect $N_T-T_m$ additional observations and form
       $\mathcal{D}_{N_T}^{p}=(X_1,X_2,\ldots,X_{N_T})$.
\State Apply PCA on $\mathcal{D}_{N_T}^{p}$ to obtain $\hat{k}_{N_T}$.
\State \textbf{Return} the compressed dataset \(\mathcal{D}_{N_T}^{\hat{k}_{N_T}}\)
\end{algorithmic}
\end{algorithm}

An essential requirement for any sequential sampling rule is that it terminates after finitely many observations. The following result guarantees this property for Algorithm~\ref{algo_1}; its proof is provided in Appendix~\ref{app_A}.

\begin{lemma}
\label{lem:sampleext}
For any \(c>0\), the stopping sample size \(N_T\) in Algorithm~\ref{algo_1} is almost surely finite, that is, \(P(N_T<\infty)=1\).
\end{lemma}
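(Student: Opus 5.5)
The plan is to reduce finiteness of \(N_T\) to finiteness of the two residual statistics \(V_m(\hat k_m)\) and \(V_T(\hat k_T)\). By \eqref{3stageT} and \eqref{3stageN} we have the deterministic bounds
\[
T\le m+1+\rho\sqrt{\tfrac{A}{c}V_m(\hat k_m)},\qquad
N_T\le T+1+\sqrt{\tfrac{A}{c}V_T(\hat k_T)}.
\]
It therefore suffices to show that, for every fixed \(c>0\) (so \(m\) is a fixed finite integer by \eqref{pilot}), each residual is almost surely finite. For any fixed sample size \(n\), \(0\le V_n(\hat k_n)\le \operatorname{tr}(S_n)=n^{-1}\sum_{i=1}^n\|X_i\|^2\). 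Since the \(X_i\) are sub-Gaussian with finite covariance, \(\mathbb E\|X_i\|^2=\operatorname{tr}(\Sigma_{0n})<\infty\), and so \(\|X_i\|<\infty\) almost surely. Hence \(\operatorname{tr}(S_n)<\infty\) a.s. for each fixed \(n\).

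The first step is to apply this with \(n=m\): \(V_m(\hat k_m)<\infty\) a.s., so \(T<\infty\) a.s. The second step is the only point needing care, because \(T\) is random and \(V_T(\hat k_T)\) is evaluated at a random index. I would handle it by writing
\[
V_T(\hat k_T)\le \operatorname{tr}(S_T)=\sum_{n\ge m}\mathbf 1\{T=n\}\operatorname{tr}(S_n),
\]
and intersecting the countably many probability-one events \(\{\operatorname{tr}(S_n)<\infty\}\), \(n\in\mathbb N\), with \(\{T<\infty\}\). On this intersection, which has probability one, \(\operatorname{tr}(S_T)<\infty\), and therefore \(N_T<\infty\).

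Alternatively, one could argue in expectation, with \(\mathbb E[\operatorname{tr}(S_T)]\le \sum_{n\ge m} P(T=n)\,n^{-1}\sum_{i\le n}\mathbb E\|X_i\|^2\). This route needs the dimension \(p_n\) (hence \(\Sigma_{0n}\)) to be read along the realized index, and it gives more than is needed. I would therefore use the pathwise countable-intersection argument, noting only that the data at stage \(n\) live in \(\mathbb R^{p_n}\) and that each \(\operatorname{tr}(\Sigma_{0n})\le p_n\kappa_\sigma^{-1}<\infty\) by Assumption~\ref{as:spec_subg}. The main (mild) obstacle is exactly this random-index issue, and the countable union bound resolves it. No regularity beyond finite second moments is required.
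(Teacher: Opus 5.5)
Your proposal is correct and follows essentially the same route as the paper. Both arguments bound the residuals by \(\operatorname{tr}(S_n)\) and intersect the countably many probability-one events \(\{\operatorname{tr}(S_n)<\infty\}\) to handle the random index \(T\). Both then carry finiteness through the finitely many max/ceiling operations; your explicit bounds on \(T\) and \(N_T\) simply make that last step more concrete.
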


\paragraph{Two-stage comparator.}
For comparison, we also consider the natural two-stage version of the procedure. Setting \(\rho=1\) in \eqref{3stageT} gives
\[
T_m^{(2)}
=
\max\left\{
m,\,
\left\lceil
\sqrt{\frac{A}{c}V_m(\hat k_m)}
\right\rceil
\right\}.
\]
The two-stage procedure terminates at \(T_m^{(2)}\), without performing the final recalibration in \eqref{3stageN}. Thus, its final sample-size decision is based entirely on the pilot estimate \(V_m(\hat k_m)\), whereas the proposed three-stage procedure uses the pilot only to reach an intermediate fraction of the target and then updates the residual variance before making the final decision. In this sense, the third stage is not merely a technical refinement: the additional recalibration controls the effect of pilot-stage estimation error at the second-order level. This is particularly useful in the present high-dimensional setting, where both the covariance structure and the selected PCA dimension may evolve with \(n\). As shown in Section~\ref{second_order}, the additional recalibration leads to sharper approximations to both the oracle sample size and risk and mitigates the deterioration that can arise from relying solely on the pilot estimate.
\paragraph{Implementation remarks.}
The choice of \(\rho\) controls how aggressively the procedure moves toward the estimated target after the pilot stage. A value of \(\rho\) close to one collects more observations before recalibration and therefore provides a more accurate Stage-2 estimate, but leaves less room for the final adjustment; smaller values allow earlier recalibration but use fewer observations to estimate \(V_T(\hat k_T)\). In practice, \(\rho\) can therefore be chosen as a moderate fixed fraction of the estimated target. For example, we use \(\rho=0.7\) in both the simulation study (Section~\ref{sec_simu}) and the real-data analysis (Section~\ref{application}). The theoretical results below allow any fixed \(\rho\in(0,1)\). Integer rounding in \eqref{pilot}--\eqref{3stageN} has no effect on the asymptotic results.

\section{Asymptotic Efficiency of the Three-stage PCA Procedure}\label{sec:asym_efficiency}

\noindent In this section, we study the efficiency properties of the proposed three-stage procedure in the high-dimensional setting. We first establish first-order efficiency in terms of sample-size consistency, expected sample size, and risk. We then investigate second-order guarantees under additional regularity conditions. Finally, we compare these results with the two-stage procedure introduced in Section~\ref{sec:three_stage} and clarify the benefit of the additional recalibration stage.
\subsection{First-order efficiency}
\noindent The goal of first-order efficiency is to show that, asymptotically, the proposed data-driven procedure behaves like the oracle rule. In particular, we want the selected sample size to be close to the oracle target \(n_0(c)\), not only along individual sample paths but also on average, and we want the resulting risk to be asymptotically equivalent to the oracle risk. Thus, the first-order analysis asks whether estimating the unknown residual variance sequentially leads to any asymptotic loss relative to knowing the oracle target in advance. Along with Assumptions~\ref{as:spec_subg}--\ref{ass:gap_k0} introduced earlier, the first-order analysis requires two additional conditions governing the pilot sample size and its relation to the oracle target.

\vspace{0.2cm}
\noindent\textbf{Regularity conditions}

\begin{enumerate}[label=(A\arabic*)]
\item \label{as:A1} \(m=o\bigl(n_0(c)\bigr)\) as \(c\downarrow0\).

\item \label{as:A2} \(\xi_m=\xi_{n_0(c)}\{1+o(1)\}\) as \(c\downarrow0\).
\end{enumerate}

\noindent Assumption~\ref{as:A1} ensures that the pilot sample does not overtake the oracle target, so that the subsequent recalibrations remain meaningful. Under the pilot choice (see \eqref{pilot}) in Algorithm~\ref{algo_1}  this condition holds automatically and hence imposes no additional restriction. Assumption~\ref{as:A2} ensures that the pilot-stage residual variance \(\xi_m\) is representative of its value at the oracle index, preventing systematic bias in the initial sample-size estimate. Under these conditions, the final sample size is strongly consistent for the oracle target.

\begin{theorem}\label{th:th_almost_sure_cons}
Consider Algorithm~\ref{algo_1} with \(A>0\) and \(\rho\in(0,1)\), and let \(\mathbb P_0\) denote the true data-generating mechanism. Under Assumptions~\ref{as:spec_subg}--\ref{ass:gap_k0} and~\ref{as:A1}--\ref{as:A2}, \(N_T/n_0(c)\xrightarrow{\mathrm{a.s.}}1\) as \(c\downarrow0\) under \(\mathbb P_0\).
\end{theorem}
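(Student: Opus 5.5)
The argument reduces to one concentration statement: the sample residual variance is a relatively consistent estimator of \(\xi_n\), almost surely and along the random indices \(m\) and \(T\). Once this holds, the result follows from Lemma~\ref{lem:asymp_equi}, which gives \(n_0(c)\sim\sqrt{A\xi_{n_0(c)}/c}\). The steps are (i) a ratio-consistency lemma for \(V_n(\hat k_n)\); (ii) applying it at \(n=m\) to show \(T/n_0(c)\to\rho\) a.s.; and (iii) applying it at \(n=T\) to show \(N_T/n_0(c)\to1\) a.s.

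\emph{Step (i).} The target is
\[
\frac{V_n(\hat k_n)}{\xi_n}\xrightarrow{\mathrm{a.s.}}1 \quad (n\to\infty).
\]
By Assumption~\ref{as:spec_subg} and the remark after Assumption~\ref{ass:gap_k0}, \(\sup_n\widetilde r(\Sigma_{0n})<\infty\) and \(\lambda_1\) is bounded. So the Koltchinskii--Lounici bound \citep{koltchinskii2017} applies to \(\|S_n-\Sigma_{0n}\|\). It gives a deviation of order \(\sqrt{(\widetilde r+t)/n}\) with probability at least \(1-e^{-t}\), and this holds uniformly in \(p_n\). Taking \(t=2\log n\) makes the failure probabilities summable, so Borel--Cantelli gives \(\|S_n-\Sigma_{0n}\|\to0\) a.s. A Hanson--Wright-type bound for \(\operatorname{tr}(S_n)=n^{-1}\sum\|X_i\|^2\) gives \(|\operatorname{tr}S_n-\operatorname{tr}\Sigma_{0n}|\to0\) a.s., with deviations controlled by \(\|\Sigma_{0n}\|_F\) and \(\|\Sigma_{0n}\|\), both bounded.

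By Weyl's inequality, \(|\lambda_j(S_n)-\lambda_j(\Sigma_{0n})|\le\|S_n-\Sigma_{0n}\|\) for each \(j\). Since \(k_{0n}\le \widetilde r/(2g_0)\cdot\)const is bounded, the partial sums over \(j\le k_{0n}\) converge uniformly. Also \(\operatorname{tr}\Sigma_{0n}\ge\lambda_1\ge\kappa_\sigma\). Hence \(\sup_{k\le k_{0n}}|\hat\gamma_k-\gamma_k|\to0\) a.s. The gap \(g_0\) in Assumption~\ref{ass:gap_k0} then forces \(\hat k_n=k_{0n}\) eventually a.s. On that event,
\[
V_n(\hat k_n)=\operatorname{tr}S_n-\sum_{j\le k_{0n}}\lambda_j(S_n)\to \xi_n \quad\text{in absolute difference.}
\]
Because \(\inf\xi_n>0\) by Assumption~\ref{ass:non_incr_tail_mass}, the difference converts to the ratio statement.

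\emph{Steps (ii)--(iii).} Since \(m\to\infty\) deterministically as \(c\downarrow0\), Step (i) gives \(V_m(\hat k_m)=\xi_m(1+o(1))\) a.s. Assumption~\ref{as:A2} and Lemma~\ref{lem:asymp_equi} then give
\[
\rho\sqrt{(A/c)V_m(\hat k_m)}/n_0(c)\to\rho .
\]
Assumption~\ref{as:A1} gives \(m/n_0\to0\), so the maximum in \eqref{3stageT} is attained by the second term and \(T/n_0(c)\to\rho\) a.s.

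In particular \(T\to\infty\) a.s. The a.s. convergence in Step (i) is a statement about the whole sequence, so it transfers to the random index \(T\): \(V_T(\hat k_T)/\xi_T\to1\) a.s. It remains to show \(\xi_T/\xi_{n_0}\to1\). Assumption~\ref{ass:non_incr_tail_mass} and \(m\le T\le n_0\) eventually (as \(\rho<1\)) give \(\xi_{n_0}\le\xi_T\le\xi_m\), and Assumption~\ref{as:A2} squeezes the ratio to one. Hence \(\sqrt{(A/c)V_T(\hat k_T)}/n_0\to1\). Since \(T/n_0\to\rho<1\), the maximum in \eqref{3stageN} is the square-root term, so \(N_T/n_0(c)\to1\) a.s.

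\emph{Main obstacle.} The hardest part is Step (i), specifically the almost-sure, dimension-free control of \(\operatorname{tr}S_n\) and of the top eigenvalues when \(p_n\to\infty\). Two points need care. First, Borel--Cantelli requires a probability space carrying the whole triangular sequence \(\{X_i\}\), because \(\Sigma_{0n}\) changes with \(n\). Second, the random index \(T\) must be handled through sequence-wise a.s. convergence rather than convergence in probability. If the trace concentration proves delicate, I would fall back on bounding \(\operatorname{tr}S_n-\operatorname{tr}\Sigma_{0n}\) via Bernstein's inequality for the sub-exponential variables \(\|X_i\|^2\). Their \(\psi_1\)-norm is of order \(\operatorname{tr}\Sigma_{0n}\lesssim\widetilde r\,\lambda_1\), which is bounded.
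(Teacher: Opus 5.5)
Your proof is correct. It uses the same ingredients as the paper: relative consistency of $V_n(\hat k_n)$ via the Koltchinskii--Lounici bound, trace concentration and the threshold gap $g_0$ (Lemmas~\ref{lem:trace_moment}, \ref{lem:moment_khat} and~\ref{lem:rel_cont_v}); the intermediate-stage limit $T/n_0(c)\to\rho$ (Lemma~\ref{lem:T_cont}); the squeeze $\xi_{n_0(c)}\le\xi_T\le\xi_m$ with Assumption~\ref{as:A2}; and Lemma~\ref{lem:asymp_equi}.

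You differ from the paper in how almost-sure convergence is obtained.

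\emph{The paper's route.} It works directly in $c$. It bounds $\mathbb P_0(|N_T/n_0(c)-1|>\epsilon)\precsim e^{-K_2m}$, treating the random index by ``conditioning on $T=t$'' and applying Lemma~\ref{lem:rel_cont_v}. It then invokes Borel--Cantelli.

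\emph{Your route.} You first prove the pathwise statement $V_n(\hat k_n)/\xi_n\to1$ for the whole sequence, by Borel--Cantelli over $n$ with $t=2\log n$, which needs only summable tails. You then evaluate it on one fixed probability-one event at the deterministic index $m(c)$ and at the random index $T(c)\ge m(c)$.

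\emph{What your route buys.} It is more careful on two points the paper glosses over. First, conditioning on $\{T=t\}$ does not preserve the i.i.d.\ law of $X_1,\dots,X_t$, because $T$ is a function of $X_1,\dots,X_m$. The paper's step really needs a union bound $\sum_{t\ge m}\mathbb P_0(\cdot)$, which does work given exponential tails. Second, Borel--Cantelli does not apply directly to the continuum $c\downarrow0$, whereas your single null set covers every $c$ at once. Your remark that the first Borel--Cantelli lemma needs only a common probability space, not independence across the rows $\Sigma_{0n}$, is exactly right.

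\emph{What the paper's route buys.} It gives the quantitative rate $\mathbb P_0(|N_T/n_0(c)-1|>\epsilon)\precsim e^{-Km}$. That rate is reused in Theorems~\ref{th:mean_cons}--\ref{th:risk_diff} to absorb polynomial factors such as $n_0(c)/m$. So you would have to upgrade your pathwise statement to a tail bound before going further. For the almost-sure statement alone, your argument is the cleaner one.

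A minor point: the bound on $k_{0n}$ is simply $k_{0n}\le 1/(2g_0)$, since $k_{0n}\lambda_{k_{0n}}\le\operatorname{tr}(\Sigma_{0n})$ and $\lambda_{k_{0n}}/\operatorname{tr}(\Sigma_{0n})\ge 2g_0$.
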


\noindent The detailed proof is provided in Appendix~\ref{app_A}. Theorem~\ref{th:th_almost_sure_cons} establishes almost sure convergence of \(N_T/n_0(c)\), but this alone does not imply convergence of \(\mathbb E_0[N_T]/n_0(c)\) or of the risk ratio \(R_{N_T}(\hat k_{N_T})/R_{n_0(c)}\). In particular, rare events on which the stopping sample size deviates substantially from \(n_0(c)\) may have vanishing probability but still contribute non-negligibly to expectations. To rule out such tail effects, we impose an additional condition controlling \(\xi_n\) uniformly in a neighborhood of \(n_0(c)\) and its behavior relative to the pilot scale.

\vspace{0.2cm}
\noindent\textbf{Regularity condition}

\begin{enumerate}[label=(B\arabic*)]
\item \label{as:B1}
There exists a constant \(\alpha\in(1-\rho,1)\), independent of \(m\) and \(c\), such that, as \(c\downarrow0\),
\(\xi_{\lfloor(1-\alpha)n_0(c)\rfloor}
=
\xi_{n_0(c)}\{1+o(1)\}\) and
\(\xi_{\lceil(1+\alpha)n_0(c)\rceil}
=
\xi_{n_0(c)}\{1+o(1)\}\).
Moreover, \(\sup_{t\ge1}\xi_t/\xi_{n_0(c)}=O(1)\).
\end{enumerate}

\noindent Assumption~\ref{as:B1} ensures that the residual variance remains stable in a neighborhood of the oracle sample size and that its relative magnitude does not become unbounded. Here \(\rho\in(0,1)\), introduced in Algorithm~\ref{algo_1}, determines the fraction of the oracle target reached at the intermediate stage, while \(\alpha\) determines the neighborhood over which stability of \(\xi_n\) is required. The restriction \(\alpha>1-\rho\) ensures that this neighborhood contains the intermediate target \(\rho n_0(c)\). Consequently, as expected, a smaller \(\rho\) requires stability over a wider range of sample sizes and makes Assumption~\ref{as:B1} more restrictive. Under this condition, almost sure consistency can be strengthened to mean sample-size and risk efficiency.

\begin{theorem}\label{th:mean_cons}
Under the conditions of Theorem~\ref{th:th_almost_sure_cons} and Assumption~\ref{as:B1}, \(\mathbb E_0[N_T]/n_0(c)\to1\) as \(c\downarrow0\).
\end{theorem}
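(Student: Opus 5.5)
Since Theorem~\ref{th:th_almost_sure_cons} already gives \(N_T/n_0(c)\to1\) almost surely, the plan is to obtain \(\mathbb E_0[N_T]/n_0(c)\to1\) from uniform integrability of \(\{N_T/n_0(c)\}\) as \(c\downarrow0\). We then combine this with Fatou's lemma for the lower bound, \(\liminf \mathbb E_0[N_T]/n_0(c)\ge1\). For the upper bound we start from the definition \eqref{3stageN},
\[
N_T\le T+\sqrt{\tfrac{A}{c}V_T(\hat k_T)}+1,\qquad T\le m+\rho\sqrt{\tfrac{A}{c}V_m(\hat k_m)}+1 .
\]
We then use \(V_n(\hat k_n)\le \operatorname{tr}(S_n)\), since the residual variance is at most the total sample variance. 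By Lemma~\ref{lem:asymp_equi}, \(\sqrt{A/c}\asymp n_0(c)/\sqrt{\xi_{n_0(c)}}\). Assumption~\ref{as:A1} also gives \(m/n_0(c)\to0\). Hence it suffices to show that \(\{V_T(\hat k_T)/\xi_{n_0(c)}\}\) and \(\{V_m(\hat k_m)/\xi_{n_0(c)}\}\) are uniformly integrable; by concavity of the square root, it is enough to bound their first moments (and slightly more for uniform integrability of the square roots).

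The pilot term is direct. We have \(\mathbb E\operatorname{tr}(S_m)=\operatorname{tr}(\Sigma_{0m})\), and \(\operatorname{tr}(\Sigma_{0n})\le \xi_n/(1-\gamma_{k_{0n}})\le \xi_n/(1-\eta-g_0)\) by Assumption~\ref{ass:gap_k0} (note \(\gamma_{k_{0n}}<1\) because \(\xi_n>0\)). So \(\mathbb E[V_m(\hat k_m)]\lesssim \xi_m\le \sup_t \xi_t=O(\xi_{n_0(c)})\) by Assumption~\ref{as:B1}. Second moments follow in the same way: \(\operatorname{tr}(S_m)\) is an average of \(\|X_i\|^2\), and sub-Gaussianity (Assumption~\ref{as:spec_subg}) gives \(\mathbb E\|X_i\|^4\lesssim \operatorname{tr}(\Sigma_{0m})^2\). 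This yields \(L^2\)-boundedness and hence uniform integrability.

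The main obstacle is the stage-2 term. Here \(T\) is random and \(S_T\) depends on it, so \(\mathbb E\,\operatorname{tr}(S_T)\) cannot be taken termwise. We bound
\[
\operatorname{tr}(S_T)\le \sup_{n\ge m}\frac1n\sum_{i=1}^n\|X_i\|^2 .
\]
The process on the right is a reverse martingale, so Doob's maximal inequality in \(L^2\) controls its second moment by \(\mathbb E\|X_1\|^4\lesssim \operatorname{tr}(\Sigma_{0n})^2\). A subtlety is that \(\Sigma_{0n}\) varies with \(n\) (a triangular array). We handle this by using the uniform bound \(\operatorname{tr}(\Sigma_{0n})\lesssim \xi_n\le \sup_t\xi_t=O(\xi_{n_0(c)})\) from Assumptions~\ref{ass:gap_k0} and~\ref{as:B1}, which holds at every index. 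If the maximal-inequality route is awkward because of the triangular structure, the fallback is to split on the event \(\{T\in[(1-\alpha)n_0,(1+\alpha)n_0]\}\). On that event, \(\sum_{i\le T}\|X_i\|^2/T\le \sum_{i\le (1+\alpha)n_0}\|X_i\|^2/((1-\alpha)n_0)\), whose mean is \(O(\xi_{n_0(c)})\) by the stability in Assumption~\ref{as:B1}. The complement has probability \(o(1)\), and we control its contribution by Cauchy--Schwarz together with the pilot \(L^2\) bound, since \(T\) is a function of the pilot data through \(V_m\).

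Combining these bounds gives \(\sup_c\mathbb E[(N_T/n_0(c))^{2}]<\infty\) (or at least uniform integrability), so \(\limsup\mathbb E_0[N_T]/n_0(c)\le1\) by Vitali's theorem together with Theorem~\ref{th:th_almost_sure_cons}, which completes the proof.
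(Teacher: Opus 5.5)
Your overall strategy works, but one step is wrong as written. Assumption~\ref{ass:gap_k0} gives $\gamma_{k_{0n}}\ge\eta+g_0$. Hence $\xi_n=(1-\gamma_{k_{0n}})\operatorname{tr}(\Sigma_{0n})\le(1-\eta-g_0)\operatorname{tr}(\Sigma_{0n})$, which bounds the trace from \emph{below} by a multiple of $\xi_n$. So your inequality $\operatorname{tr}(\Sigma_{0n})\le\xi_n/(1-\eta-g_0)$ runs the wrong way; the gap condition alone does not keep $\gamma_{k_{0n}}$ away from $1$.

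The bound you need is still true under the hypotheses. Adding the two halves of Assumption~\ref{ass:gap_k0} gives $\lambda_{k_{0n}}(\Sigma_{0n})/\operatorname{tr}(\Sigma_{0n})=\gamma_{k_{0n}}-\gamma_{k_{0n}-1}\ge2g_0$. Also $\lambda_{k_{0n}}\le\lambda_1\le\kappa_\sigma^{-1}$ by Assumption~\ref{as:spec_subg}, so $\operatorname{tr}(\Sigma_{0n})\le(2g_0\kappa_\sigma)^{-1}$ for every $n$. Together with $\xi_{n_0(c)}\ge\inf_n\xi_n>0$ from Assumption~\ref{ass:non_incr_tail_mass}, this yields $\sup_{n,c}\operatorname{tr}(\Sigma_{0n})/\xi_{n_0(c)}<\infty$, which is all your domination argument uses.

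Two smaller remarks. First, bounded first moments of $V/\xi_{n_0(c)}$ already give $L^2$-boundedness of the square roots, so the fourth moments are not needed. Second, for the stage-2 term your Doob bound is fine when the $X_i$ form one i.i.d.\ sequence. In that case there is an even simpler argument: $T$ is a function of $X_1,\dots,X_m$, so $\mathbb E_0[\operatorname{tr}(S_T)\mid X_1,\dots,X_m]\le\operatorname{tr}(S_m)+\operatorname{tr}(\Sigma_0)$. To respect the triangular-array reading, use a union bound instead. Lemma~\ref{lem:trace_moment} with $k=3$ gives $\mathbb E_0\sup_{t\ge m}|\operatorname{tr}(S_t)-\operatorname{tr}(\Sigma_{0t})|^3\le\sum_{t\ge m}\mathbb E_0|\operatorname{tr}(S_t)-\operatorname{tr}(\Sigma_{0t})|^3\precsim m^{-1/2}$. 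Drop your fallback: its Cauchy--Schwarz step needs $\mathbb E_0[\operatorname{tr}(S_T)^2]$, which is exactly the random-index quantity you were trying to avoid.

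With that repair your argument is correct, and it is genuinely different from the paper's. The paper does not go through Theorem~\ref{th:th_almost_sure_cons} or uniform integrability. It bounds $\mathbb E_0|N_T/n_0(c)-1|$ directly and reduces to $\mathbb E_0|V_m(\hat k_m)/\xi_{n_0(c)}-1|\to0$ and $\mathbb E_0|V_T(\hat k_T)/\xi_{n_0(c)}-1|\to0$. It proves these from three ingredients:
\begin{itemize}
\item the moment bound of Lemma~\ref{lem:rel_cont_v}, which rests on consistency of $\hat k_n$ and operator-norm concentration;
\item the localization of $T$ near $\rho n_0(c)$ from Lemma~\ref{lem:T_cont};
\item the local stability in Assumption~\ref{as:B1}, which is what makes $\mathbb E_0|\xi_T/\xi_{n_0(c)}-1|$ small.
\end{itemize}
You instead take the limit from Theorem~\ref{th:th_almost_sure_cons}. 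Its proof actually delivers convergence in probability, which is all Vitali needs. You then get uniform integrability from the crude domination $V_n(\hat k_n)\le\operatorname{tr}(S_n)$, with no PCA-specific input.

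Your route is more economical in two ways. Given Theorem~\ref{th:th_almost_sure_cons}, it never uses the local-stability part of Assumption~\ref{as:B1}, and the remaining clause is automatic because $\sup_t\xi_t/\xi_{n_0(c)}\le\xi_1/\inf_n\xi_n$. It also handles the random index $T$ explicitly, whereas the paper's ``conditioning on $T$'' passes over the fact that $S_T$ reuses the pilot data that determine $T$. In exchange, the paper's route gives quantitative $L^1$ control of $V_T(\hat k_T)/\xi_{n_0(c)}$ with explicit $m^{-1/2}$ terms. That is the template it later sharpens, via Assumption~\ref{ass:B1A} and Lemma~\ref{lem:sign_bias_v}, into the $O(1)$ bound of Theorem~\ref{th:mean_diff}; a uniform-integrability argument gives no such rates.
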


\begin{theorem}\label{th:risk_cons}
Under the conditions of Theorem~\ref{th:mean_cons}, \(R_{N_T}(\hat k_{N_T})/R_{n_0(c)}\to1\) as \(c\downarrow0\).
\end{theorem}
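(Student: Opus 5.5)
The plan is to split the risk ratio into its two pieces,
\[
\frac{R_{N_T}(\hat k_{N_T})}{R_{n_0(c)}}
=\frac{A\,\mathbb E_0[V_{N_T}(\hat k_{N_T})/N_T]+c\,\mathbb E_0[N_T]}{R_{n_0(c)}},
\]
and compare each with its oracle counterpart. Here I read \(R_{n_0(c)}\) as the oracle risk \(R^\star_{n_0(c)}=A\xi_{n_0}/n_0+cn_0\), with \(n_0=n_0(c)\).

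By Lemma~\ref{lem:asymp_equi}, \(A\xi_{n_0}/n_0=cn_0\{1+o(1)\}\), so \(R^\star_{n_0}=2cn_0\{1+o(1)\}\). The cost term is then immediate from Theorem~\ref{th:mean_cons}:
\[
c\,\mathbb E_0[N_T]/R^\star_{n_0}\to 1/2 .
\]
It therefore suffices to show
\[
\frac{\mathbb E_0[V_{N_T}(\hat k_{N_T})/N_T]}{\xi_{n_0}/n_0}\to1 .
\]

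For this I would work on a high-probability event \(E_c\) and then bound the complement.

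\textbf{The good event.} Let \(E_c\) be the event that \(N_T\in[(1-\alpha)n_0,(1+\alpha)n_0]\), with \(\alpha\) from Assumption~\ref{as:B1}. On \(E_c\), I want uniform consistency of the residual,
\[
\sup_{n\in[(1-\alpha)n_0,(1+\alpha)n_0]}\bigl|V_n(\hat k_n)/\xi_n-1\bigr|\to0 .
\]
This should follow from the same ingredients as in the proof of Theorem~\ref{th:th_almost_sure_cons}:
\begin{itemize}
\item the sub-Gaussian covariance concentration \(\|S_n-\Sigma_{0n}\|\lesssim\|\Sigma_{0n}\|\sqrt{\widetilde r(\Sigma_{0n})/n}\) (Koltchinskii--Lounici), where the effective rank is bounded by Assumption~\ref{ass:gap_k0};
\item Weyl's inequality together with the gap \(g_0\), which gives \(\hat k_n=k_{0n}\);
\item concentration of \(\operatorname{tr}(S_n)\).
\end{itemize}
Combining these with \(\xi_n=\xi_{n_0}\{1+o(1)\}\) on the window (Assumption~\ref{as:B1} plus monotonicity from Assumption~\ref{ass:non_incr_tail_mass}) and with \(N_T/n_0\to1\) a.s., I get \(V_{N_T}(\hat k_{N_T})n_0/(N_T\xi_{n_0})\to1\) a.s.

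\textbf{Convergence in mean.} To upgrade this to convergence of expectations I would use dominated convergence or uniform integrability on \(E_c\). On \(E_c\),
\[
\frac{V_{N_T}(\hat k_{N_T})/N_T}{\xi_{n_0}/n_0}\le (1-\alpha)^{-1}\,\frac{\max_{n\le(1+\alpha)n_0}\operatorname{tr}(S_n)}{\xi_{n_0}} .
\]
Since \(\operatorname{tr}(S_n)\) is an average of i.i.d. terms, a maximal (Doob-type) inequality should bound the second moment of this maximum by \(O\bigl(\operatorname{tr}(\Sigma_{0n})^2\bigr)\). Moreover \(\operatorname{tr}(\Sigma_{0n})/\xi_{n_0}=O(1)\), because \(\operatorname{tr}(\Sigma_{0n})\le \widetilde r\,\kappa_\sigma^{-1}\) and \(\inf\xi_n>0\). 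This gives uniform integrability on \(E_c\).

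\textbf{The bad event.} On \(E_c^c\) I use the crude bound \(V_n(\hat k_n)\le(1-\eta)\operatorname{tr}(S_n)\) and \(N_T\ge m\). This yields
\[
n_0\,\mathbb E_0\!\left[\frac{V_{N_T}(\hat k_{N_T})}{N_T}\mathbf 1_{E_c^c}\right]\le \frac{n_0}{m}\,\mathbb E_0\!\left[\sup_{n}\operatorname{tr}(S_n)\,\mathbf 1_{E_c^c}\right].
\]
The factor \(n_0/m\) grows polynomially in \(c^{-1}\). So the main obstacle is to show that \(P(E_c^c)\) decays faster than any polynomial.

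For that I would bound \(P(N_T>(1+\alpha)n_0)\) and \(P(N_T<(1-\alpha)n_0)\) through \(T\) and the stage-2 residual:
\begin{itemize}
\item \(N_T\) leaves the window only if either \(V_m(\hat k_m)\) deviates from \(\xi_m\), or \(T\) falls outside \([\rho(1-\alpha)n_0,\,n_0]\), or \(V_T(\hat k_T)\) deviates by a constant factor.
\item Each of these is a constant-factor deviation of a sample trace or of the top eigenvalues at sample size at least \(m\).
\item Sub-Gaussian Bernstein/Hanson--Wright bounds then give probabilities of order \(\exp(-C m)\), with \(m\asymp (A/c)^{1/(2\delta)}\), which beats the polynomial factor \(n_0/m\).
\end{itemize}
After a Cauchy--Schwarz step combining this exponential tail with the moment bound on \(\sup_n\operatorname{tr}(S_n)\), the bad-event contribution is \(o(\xi_{n_0}/n_0)\). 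This completes the argument, modulo the uniform supremum over \(n\), which I would handle by a union bound over the polynomially many \(n\le(1+\alpha)n_0\).
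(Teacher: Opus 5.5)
Your proof is correct, but it takes a different route from the paper's.

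\textbf{The paper's route.} The paper works directly in \(L^1\). It bounds \((n_0/\xi_{n_0})\,\mathbb E_0\bigl|V_{N_T}(\hat k_{N_T})/N_T-\xi_{n_0}/n_0\bigr|\) by \(I_{1c}+I_{2c}+I_{3c}\), and handles the pieces as follows:
\begin{itemize}
\item \(I_{1c}\): Cauchy--Schwarz, combining the \(O(m^{-1/2})\) moment bound of Lemma~\ref{lem:rel_cont_v} at \(N_T\ge m\) with \(\mathbb E_0[n_0^2/N_T^2]=O(1)\). The latter follows from the \(e^{-K_1m}\) tail in Theorem~\ref{th:th_almost_sure_cons}.
\item \(I_{2c}\): Assumption~\ref{as:B1}.
\item \(I_{3c}\) and the cost term: Theorem~\ref{th:mean_cons}.
\end{itemize}

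\textbf{Your route.} You isolate the cost term through \(R^\star_{n_0}=2cn_0\{1+o(1)\}\). You then get convergence in probability of \(n_0V_{N_T}(\hat k_{N_T})/(N_T\xi_{n_0})\) by a union bound over the window \([(1-\alpha)n_0,(1+\alpha)n_0]\). Uniform integrability on \(E_c\) comes from domination by \(\sup_n\operatorname{tr}(S_n)\). The complement \(E_c^c\) is removed using \(V_n(\hat k_n)\le(1-\eta)\operatorname{tr}(S_n)\), \(N_T\ge m\), and the exponential tail.

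\textbf{What each buys.} The paper's decomposition produces explicit rates for each piece, and that template is reused in the second-order proofs. However, it applies a fixed-\(n\) moment bound at the data-dependent index \(N_T\). This strictly needs an extra uniformity argument, because \(N_T\) is determined by the same observations that enter \(V_{N_T}\). Your window union bound and your \(\sup_n\) domination handle the random index rigorously, at the cost of being purely qualitative.

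\textbf{One caveat.} The Doob-type bound \(\mathbb E_0[\sup_n\operatorname{tr}(S_n)^2]\le4\,\mathbb E_0\|X_1\|^4\) presumes a single i.i.d.\ stream, so that \(\operatorname{tr}(S_n)\) is a reverse martingale. If you read the paper's \(p_n\)-indexed setup literally as a triangular array, restrict the supremum to \(n\ge m\), or to the window on \(E_c\). Then replace Doob by a union bound using the exponential trace concentration of Lemma~\ref{lem:trace_moment}. This gives the same uniform \(L^2\) control, and the rest of your argument goes through unchanged.
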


\noindent The proofs of Theorem \ref{th:mean_cons} and Theorem \ref{th:risk_cons} are provided in the Appendix \ref{app_A}. 
Theorems~\ref{th:th_almost_sure_cons}--\ref{th:risk_cons} establish the 
first-order, or ratio, efficiency of the three-stage procedure, showing that
the sample size and risk of Algorithm~\ref{algo_1}, when normalized by their
respective oracle quantities, converge to one. Such ratio consistency,
however, does not quantify the corresponding differences on the original
scale, which may remain non-negligible even when the ratios converge to one.
This motivates the study of second-order efficiency, which provides finer
control of the deviations from the oracle procedure.

\subsection{Second-order efficiency}\label{second_order}

\noindent We now turn to second-order efficiency, which provides a finer comparison with the oracle rule than the ratio-based first-order results. While first-order efficiency establishes asymptotic equivalence after normalization by the oracle sample size or oracle risk, second-order efficiency examines the corresponding unnormalized deviations. In particular, we study how closely the expected stopping sample size tracks \(n_0(c)\) itself and how closely the resulting risk tracks the oracle risk. Fluctuations that are negligible at the first-order scale may still contribute to these differences, so a more refined analysis is needed. This requires stronger rate conditions on the local behavior of the residual variance and additional stability of the principal subspace around the PCA cutoff.

\vspace{0.2cm}
\noindent\textbf{Additional regularity conditions}

\begin{enumerate}[label=(BN\arabic*)]
\item \textbf{Rate version of oracle flatness}\label{ass:oracle_flatness_rate}
As \(c\downarrow0\), \(\xi_{n_0(c)-1}/\xi_{n_0(c)}=1+O\{n_0(c)^{-2}\}\).

\item \textbf{Spectral separation at the PCA cutoff}\label{ass:spectral_gap_k0}
In addition to the threshold gap condition in Assumption~\ref{ass:gap_k0}, assume that there exists a constant \(\gamma>0\), independent of \(n\), such that \(\lambda_{k_{0n}}(\Sigma_0)-\lambda_{k_{0n}+1}(\Sigma_0)\ge\gamma\).
\end{enumerate}

\begin{enumerate}[label=(CN\arabic*)]
\item \textbf{Rate version of Assumption~\ref{as:B1}}\label{ass:B1A}
There exists a constant \(\alpha\in(1-\rho,1)\), independent of \(m\) and \(c\), such that, as \(c\downarrow0\), \(\xi_{\lfloor(1-\alpha)n_0(c)\rfloor}=\xi_{n_0(c)}\{1+O(n_0(c)^{-1})\}\) and \(\xi_{\lceil(1+\alpha)n_0(c)\rceil}=\xi_{n_0(c)}\{1+O(n_0(c)^{-1})\}\). Moreover, \(\sup_{t\ge1}\xi_t/\xi_{n_0(c)}=O(1)\).
\end{enumerate}

\noindent Assumptions~\ref{ass:oracle_flatness_rate} and~\ref{ass:B1A} are rate-strengthened versions of Assumptions~\ref{ass:oracle_flatness} and~\ref{as:B1}, respectively. The former provides a sharper local approximation to the oracle criterion, while the latter controls the variation of the residual variance over the range of sample sizes relevant to the sequential procedure. As before, the restriction \(\alpha>1-\rho\) ensures that this range contains the intermediate target \(\rho n_0(c)\), so smaller values of \(\rho\) require stability over a wider neighborhood.
\begin{remark}
Under Assumption~\ref{ass:oracle_flatness_rate}, together with Assumptions~\ref{as:spec_subg} and~\ref{ass:non_incr_tail_mass}, Lemma~\ref{lem:asymp_equi} can be sharpened to \(c=A\xi_{n_0(c)}n_0(c)^{-2}\{1+O(n_0(c)^{-1})\}\), while \(n_T(c)/n_0(c)\to1\) continues to hold. The proof follows the same argument as Lemma~\ref{lem:asymp_equi}, replacing the corresponding \(o(1)\) terms by their rate bounds, and is therefore omitted.
\end{remark}

\noindent The threshold gap in Assumption~\ref{ass:gap_k0} and the spectral gap in Assumption~\ref{ass:spectral_gap_k0} play distinct roles. The former stabilizes the selected dimension under the explained-variance rule, preventing small perturbations from changing \(k_{0n}\). The latter stabilizes the corresponding principal subspace through a Davis--Kahan type argument \citep[e.g.,][]{davis1,davis2}, allowing the difference between the empirical and population projection matrices to be controlled by the covariance estimation error. This stronger projection stability is particularly important for establishing the signed bias bound in Lemma~\ref{lem:sign_bias_v}, which is a key component to proof Theorems~\ref{th:mean_diff} and~\ref{th:risk_diff}. With these assumptions in hand, we now present the main second-order efficiency results for the proposed three-stage procedure.

\begin{theorem}\label{th:mean_diff}
Consider Algorithm~\ref{algo_1} with \(A>0\) and \(\rho\in(0,1)\), and let \(\mathbb P_0\) denote the true data-generating mechanism. Suppose Assumptions~\ref{as:spec_subg}, \ref{ass:non_incr_tail_mass}, \ref{ass:gap_k0}, \ref{ass:oracle_flatness_rate}, \ref{ass:spectral_gap_k0}, \ref{as:A1}--\ref{as:A2}, and~\ref{ass:B1A} hold. Then, as \(c\downarrow0\), \(\left|\mathbb E_0[N_T]-n_0(c)\right|=O(1)\).
\end{theorem}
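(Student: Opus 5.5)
Since \(N_T\ge\sqrt{(A/c)V_T(\hat k_T)}\) and \(N_T\le \max\{T,\sqrt{(A/c)V_T(\hat k_T)}\}+1\), I would bound \(\mathbb E_0[N_T]\) above and below by \(\sqrt{A/c}\,\mathbb E_0[V_T(\hat k_T)^{1/2}]\), up to \(O(1)\) and a remainder on the event \(\{T>\sqrt{(A/c)V_T(\hat k_T)}\}\). By the Remark following \ref{ass:B1A}, \(\sqrt{A\xi_{n_0}/c}=n_0(c)\{1+O(n_0^{-1})\}\), so \(\sqrt{A\xi_{n_0}/c}=n_0(c)+O(1)\). The theorem therefore reduces to showing
\[
\sqrt{A/c}\,\Bigl|\mathbb E_0\bigl[V_T(\hat k_T)^{1/2}\bigr]-\xi_{n_0}^{1/2}\Bigr|=O(1),
\]
that is, \(\bigl|\mathbb E_0[V_T^{1/2}]-\xi_{n_0}^{1/2}\bigr|=O(n_0^{-1})\).

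\textbf{Expansion around \(\xi_T\).} I would write \(V_T^{1/2}-\xi_{n_0}^{1/2}\) as
\[
\bigl(V_T^{1/2}-\xi_T^{1/2}\bigr)+\bigl(\xi_T^{1/2}-\xi_{n_0}^{1/2}\bigr).
\]
The second term is handled on the event \(E=\{\lfloor(1-\alpha)n_0\rfloor\le T\le n_0\}\). On \(E\), monotonicity (\ref{ass:non_incr_tail_mass}) and \ref{ass:B1A} give \(|\xi_T-\xi_{n_0}|=O(\xi_{n_0}/n_0)\). On \(E^c\), the bound \(\sup_t\xi_t/\xi_{n_0}=O(1)\) applies, and \(P(E^c)\) is exponentially small: \(T\approx\rho\sqrt{(A/c)V_m}\) with \(V_m\to\xi_m\approx\xi_{n_0}\) by \ref{as:A2}, and \(\alpha>1-\rho\). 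Concentration of \(V_m\) around \(\xi_m\) at rate \(\exp(-Cm t^2)\) follows from sub-Gaussianity (\ref{as:spec_subg}), bounded effective rank, and the gap \ref{ass:gap_k0}, which forces \(\hat k_m=k_{0m}\) with overwhelming probability. Since \(m\) is polynomial in \(1/c\), these exponentially small probabilities multiply \(O(n_0)\) and remain negligible.

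\textbf{First term: the signed bias bound.} For the first term I would Taylor expand, using \(V_T^{1/2}-\xi_T^{1/2}=(V_T-\xi_T)/(2\xi_T^{1/2})+O((V_T-\xi_T)^2)\), where \(\inf\xi_n>0\) keeps the denominators harmless.
\begin{itemize}
\item The quadratic term has expectation \(O(\mathbb E[1/T])=O(1/n_0)\), by the Koltchinskii--Lounici bound with bounded effective rank, conditioned on \(T\) and combined with the tail control of \(T\).
\item The linear term needs \(\mathbb E[V_T-\xi_T\mid T]=O(1/T)\). This is where Lemma~\ref{lem:sign_bias_v} enters: on \(\{\hat k_n=k_{0n}\}\),
\[
V_n-\xi_n=\operatorname{tr}\bigl[(I-P^{k_0}_0)(S_n-\Sigma_0)\bigr]-\operatorname{tr}\bigl[(P_n-P_0)(S_n-\Sigma_0)\bigr]+\text{(second-order Davis--Kahan terms)}.
\]
The first piece has mean zero. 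The remaining pieces are \(O(\|S_n-\Sigma_0\|^2/\gamma)\) in mean by \ref{ass:spectral_gap_k0}, which is \(O(1/n)\).
\end{itemize}

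\textbf{Main obstacle.} The difficulty is that \(T\) is random and depends on the first \(m\) observations, so \(S_T\) is not a fixed-\(n\) sample covariance. I would decompose \(S_T=(m/T)S_m+((T-m)/T)\tilde S_{T-m}\), where \(\tilde S_{T-m}\) uses fresh data independent of \(T\) given \(\mathcal F_m\). I would apply the fixed-\(n\) bias bound conditionally to the fresh block and absorb the pilot block using \(m/T=O(m/n_0)\). Making this give \(O(1/n_0)\) rather than \(O(m/n_0)\) is delicate. I would try to show that the mean-zero linear term stays mean zero under optional sampling (a Wald-type identity for the martingale \(\sum_{i\le n}(X_i^\top(I-P_0)X_i-\xi)\)). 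The quadratic remainders need only \(\mathbb E\|S_T-\Sigma_0\|^2=O(1/n_0)\), which holds via the max over \(n\in[(1-\alpha)n_0,n_0]\) (Doob-type inequality). Finally, I would verify that the event \(\{T>\sqrt{(A/c)V_T}\}\) contributes \(O(1)\): on it \(V_T\le T^2c/A\approx\rho^2\xi_{n_0}\), which is an exponentially rare deviation. Together with \(O(1)\) from the ceilings, this yields \(|\mathbb E_0[N_T]-n_0(c)|=O(1)\).
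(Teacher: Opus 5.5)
Your outline follows the paper's proof. The paper writes $N_T=X_T+(T_m-X_T)_+$ and removes the positive part using the exponential tails of $T$ and $V_T$. It then uses the rate form of Lemma~\ref{lem:asymp_equi} to replace $n_0(c)$ by $\sqrt{A\xi_{n_0}/c}$ at cost $O(1)$. Finally it splits at $\xi_T$: $\xi_T-\xi_{n_0}$ is controlled by \ref{ass:B1A} on $\{(1-\alpha)n_0\le T\le n_0\}$, and $V_T-\xi_T$ by the signed-bias Lemma~\ref{lem:sign_bias_v}.

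Your explicit second-order expansion of the square root is more careful than the paper, which passes from $\mathbb E_0[X_T]$ to $\mathbb E_0[V_T]$ without comment. Two quadratic pieces arise: the Taylor remainder, and the pathwise Ky Fan remainder $0\le\operatorname{tr}\{(\widehat P_T-P_0)S_T\}\le\|S_T-\Sigma_0\|_F^2/(2\gamma)$. Both contribute $O(1)$ after multiplication by $\sqrt{A/c}\asymp n_0$, via your Doob maximal bound.

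For the linear bias term, the paper simply conditions on $T$ and applies Lemma~\ref{lem:sign_bias_v} as if $S_T$ were a fixed-size sample covariance, getting $n_0\mathbb E_0[1/T]=O(1)$. You are right that this ignores the dependence of $T$ on $X_1,\dots,X_m$.

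The gap is that your repair of this step fails. Let $Z_i=X_i^\top(I-P_0)X_i-\xi$ and $\mathcal F_m=\sigma(X_1,\dots,X_m)$, so the linear term is $T^{-1}\sum_{i\le T}Z_i$. Since $T$ is $\mathcal F_m$-measurable, the fresh block does satisfy $\mathbb E_0[T^{-1}\sum_{m<i\le T}Z_i\mid\mathcal F_m]=0$. The pilot block $(m/T)\bar Z_m$, however, is not mean zero, and no optional-sampling or Wald identity makes it so: Wald controls $\mathbb E_0[\sum_{i\le T}Z_i]$, not $\mathbb E_0[T^{-1}\sum_{i\le T}Z_i]$. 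Indeed $T$ is increasing in $V_m$ and $V_m-\xi\approx\bar Z_m$, so a first-order expansion gives $\mathbb E_0[(m/T)\bar Z_m]\approx-\operatorname{Var}(Z_1)/(2\rho\xi n_0)\neq0$.

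For the same reason your stated requirement $\mathbb E_0[V_T-\xi_T\mid T]=O(1/T)$ is false: conditionally on $T$ the bias is of order $(m/n_0)m^{-1/2}=\sqrt m/n_0$. Bounding $|\bar Z_m|$ directly would leave $n_0\cdot\sqrt m/n_0=\sqrt m\to\infty$.

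The missing idea is a covariance bound. Take the deterministic centre $\tau=\rho\sqrt{(A/c)\xi_m}$. Since $\mathbb E_0[\bar Z_m]=0$,
\[
\Bigl|\mathbb E_0\Bigl[\frac1T\sum_{i\le m}Z_i\Bigr]\Bigr|=\Bigl|\mathbb E_0\Bigl[\Bigl(\frac1T-\frac1\tau\Bigr)\sum_{i\le m}Z_i\Bigr]\Bigr|\le\Bigl\{\mathbb E_0\Bigl(\frac1T-\frac1\tau\Bigr)^2\Bigr\}^{1/2}\Bigl\{\mathbb E_0\Bigl(\sum_{i\le m}Z_i\Bigr)^2\Bigr\}^{1/2}.
\]
The first factor is $O(n_0^{-1}m^{-1/2})$. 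To see this, use Lemma~\ref{lem:rel_cont_v} on $\{|V_m-\xi_m|\le\epsilon\xi_m\}$, where $|T-\tau|\precsim n_0|V_m-\xi_m|+1$; off that event use $T\ge m$ together with the $e^{-K_1m}$ tail. The second factor is $O(m^{1/2})$. The product is $O(1/n_0)$, which is exactly the order you need. The $\mathcal F_m$-measurable weight $1/(2\xi_T^{1/2})$ is absorbed the same way.

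So the bound you want is true, but the pilot block contributes a genuine nonzero $O(1/n_0)$ bias, i.e.\ an $O(1)$ shift in $\mathbb E_0[N_T]$. It is not a term that optional sampling makes vanish.
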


\noindent The corresponding result for the risk shows that the data-driven procedure also remains close to the oracle at the second-order scale.

\begin{theorem}\label{th:risk_diff}
Under the same conditions as in Theorem~\ref{th:mean_diff}, as \(c\downarrow0\), \(\left|R_{N_T}(\hat k_{N_T})-R_{n_0(c)}\right|=O(c)\), where \(R_{N_T}(\hat k_{N_T})\) is defined in terms of \(\mathbb E_0\), the expectation under \(\mathbb P_0\).
\end{theorem}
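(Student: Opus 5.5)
We decompose the risk difference into a sampling-cost part and a compression part:
\[
R_{N_T}(\hat k_{N_T})-R_{n_0(c)}
= c\bigl(\mathbb E_0[N_T]-n_0\bigr)
+ A\Bigl(\mathbb E_0\Bigl[\tfrac{V_{N_T}(\hat k_{N_T})}{N_T}\Bigr]-\tfrac{\xi_{n_0}}{n_0}\Bigr),
\qquad n_0=n_0(c).
\]
Here \(R_{n_0(c)}\) is read as the oracle risk \(R^\star_{n_0}=A\xi_{n_0}/n_0+cn_0\). By Theorem~\ref{th:mean_diff}, the first term is \(O(c)\). It therefore suffices to show that the second term is \(O(c)\). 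By the sharpened form of Lemma~\ref{lem:asymp_equi} in the Remark, \(A\xi_{n_0}/n_0^2=c\{1+O(n_0^{-1})\}\), so \(c\asymp n_0^{-2}\). The target is thus \(A\,\mathbb E_0[V_{N_T}/N_T]-A\xi_{n_0}/n_0=O(n_0^{-2})\).

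Next we write \(V_{N_T}(\hat k_{N_T})=\xi_{N_T}+(V_{N_T}(\hat k_{N_T})-\xi_{N_T})\). We handle \(A\,\mathbb E_0[\xi_{N_T}/N_T]-A\xi_{n_0}/n_0\) by a second-order Taylor expansion of \(n\mapsto 1/n\) around \(n_0\):
\[
\frac{1}{N_T}=\frac{1}{n_0}-\frac{N_T-n_0}{n_0^2}+\frac{(N_T-n_0)^2}{n_0^2N_T}.
\]
On the good event \(G=\{|N_T-n_0|\le\alpha n_0\}\), Assumption~\ref{ass:B1A} together with monotonicity (Assumption~\ref{ass:non_incr_tail_mass}) gives \(\xi_{N_T}=\xi_{n_0}\{1+O(n_0^{-1})\}\). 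The terms then contribute as follows.
\begin{itemize}
\item The linear term gives \(A\xi_{n_0}\,\mathbb E_0[N_T-n_0]/n_0^2=O(n_0^{-2})\) by Theorem~\ref{th:mean_diff}.
\item The quadratic term requires \(\mathbb E_0[(N_T-n_0)^2]=O(n_0)\).
\item The error from \(\xi_{N_T}-\xi_{n_0}\) contributes \(O(n_0^{-2})\).
\end{itemize}

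The variance bound comes from the structure of the rule \eqref{3stageN}: off the rounding and the max, \(N_T-n_0\approx n_0(V_T/\xi_{n_0}-1)/2\). Given \(T\asymp\rho n_0\), sub-Gaussian concentration of \(\operatorname{tr}(S_T)\) and of the projected trace gives \(\operatorname{Var}(V_T)=O(1/T)\). This uses Assumption~\ref{as:spec_subg} and the bounded effective rank implied by Assumption~\ref{ass:gap_k0}, and it is where the third stage pays off, since the fluctuation is governed by \(T\asymp n_0\) rather than by \(m\). It yields \(\mathbb E_0[(N_T-n_0)^2\mathbf 1_G]=O(n_0)\). On \(G^c\) we use \(N_T\ge m\) and \(\sup_t\xi_t/\xi_{n_0}=O(1)\) from Assumption~\ref{ass:B1A}. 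Exponential tail bounds for \(V_m\) and \(V_T\), as in the proof of Theorem~\ref{th:mean_cons}, give \(P_0(G^c)\) decaying faster than any power of \(n_0\) (note \(m\) is polynomial in \(n_0\) by \eqref{pilot}). Hence the contribution from \(G^c\), even after dividing by \(N_T\ge m\), is \(o(n_0^{-2})\).

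The main obstacle is the bias term \(A\,\mathbb E_0\bigl[(V_{N_T}(\hat k_{N_T})-\xi_{N_T})/N_T\bigr]\). Since \(N_T\) depends on the first \(T\) observations, \(V_{N_T}\) is not an independent sample-size-\(N_T\) estimate. The plan is to condition on \(\mathcal F_T\), sandwich \(N_T\) within \([(1-\alpha)n_0,(1+\alpha)n_0]\) on \(G\), and invoke the signed bias bound of Lemma~\ref{lem:sign_bias_v}. That bound rests on Assumption~\ref{ass:spectral_gap_k0} and Davis--Kahan, giving \(\hat k_n=k_{0n}\) with overwhelming probability and \(|\mathbb E(V_n(k_{0n}))-\xi_n|=O(1/n)\) uniformly over \(n\) in this window. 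To decouple the random index, I would first try a maximal inequality: bound \(\mathbb E\sup_{n\in[(1-\alpha)n_0,(1+\alpha)n_0]}|V_n-\mathbb E V_n|\cdot|N_T-n_0|/n_0^2\) by Cauchy--Schwarz. A sup of the centered process of order \(O(\sqrt{\log n_0/n_0})\) would give only \(O(n_0^{-2}\sqrt{\log n_0})\) here, so this crude bound falls short by a logarithm. To remove it, I would write \(V_{N_T}/N_T=V_{n_0}/n_0+\sum_{n}(V_{n+1}/(n+1)-V_n/n)\) over the random range. The increments of \(nS_n\) are independent of \(\mathcal F_T\) beyond \(T\), and a Wald-type identity then handles the cross term. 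Collecting the pieces gives the \(O(n_0^{-2})=O(c)\) bound on the compression part, and with the cost part this proves the theorem.
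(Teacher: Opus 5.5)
Your argument is correct, and its skeleton is the paper's. You use the same four pieces: the cost term (Theorem~\ref{th:mean_diff}), the $\xi_{N_T}$-variation term (Assumption~\ref{ass:B1A} plus monotonicity on $\{|N_T-n_0|<\alpha n_0\}$, with exponential tails off that event), the reciprocal term $\xi_{n_0}\mathbb E_0[1/N_T-1/n_0]$, and the bias term $A\,\mathbb E_0[\{V_{N_T}(\hat k_{N_T})-\xi_{N_T}\}/N_T]$.

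You depart from the paper in the last two pieces. The paper's version is shorter because it uses only Theorem~\ref{th:mean_diff} and the fixed-$n$ Lemma~\ref{lem:sign_bias_v}, but in both places your extra work is necessary rather than cosmetic.

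For the reciprocal term, the paper argues at first order. It bounds $|\mathbb E_0[n_0/N_T-1]|$ by $|\mathbb E_0[N_T-n_0]|/\{(1-\alpha)n_0\}$ plus a tail term. With the random $N_T$ in the denominator, however, that step really controls $\mathbb E_0|N_T-n_0|/n_0$. Since $\mathbb E_0|N_T-n_0|$ is generically of order $n_0^{1/2}$, this alone gives only $O(n_0^{-3/2})$. Your exact identity $1/N=1/n_0-(N-n_0)/n_0^2+(N-n_0)^2/(n_0^2N)$, combined with $\mathbb E_0(N_T-n_0)^2=O(n_0)$, is what actually yields $O(n_0^{-2})=O(c)$; the quadratic term is generically of exactly this order. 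It also accounts for the $O(c\,n_0/m)$ two-stage rate, where the analogous second moment is of order $n_0^2/m$.

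For the bias term, the paper conditions on $N_T$ and applies Lemma~\ref{lem:sign_bias_v} as if $V_{N_T}$ given $N_T=n$ were distributed as $V_n$. The bound it records, $A\xi_m/m^2$, is of order $c^{1/\delta}$, not $O(c)$. Your decoupling is the right mechanism: $N_T$ is $\mathcal F_T$-measurable, $T<n_0$ with overwhelming probability, and $X_i$ for $i>T$ is independent of $\mathcal F_T$, so a Wald identity removes the contribution of the observations between $n_0$ and $N_T$. Some such argument is needed, because the correlation between $N_T$ and $X_1,\dots,X_T$ generally contributes a nonzero term of order $n_0^{-2}$.

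Two small remarks.
\begin{enumerate}
\item The logarithm in your crude cross-term bound is avoidable. After linearizing $V_n(k_{0n})$ through \eqref{eq:empirical_projection_compare}, the remainder lies between $0$ and $\|S_n-\Sigma_0\|_F^2/(2\gamma)$, and the centered part is $n^{-1}\sum_{i\le n}Z_i$ with $Z_i$ i.i.d.\ and mean zero. Doob's maximal inequality, applied to this sum and to the Frobenius-valued martingale $\sum_{i\le n}(X_iX_i^\top-\Sigma_0)$, bounds the second moment of the supremum over the window by $O(n_0^{-1})$, with no log. The Wald step is then needed only for the leading piece $\mathbb E_0[V_{N_T}-V_{n_0}]$. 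The same maximal-inequality argument also justifies your ``given $T$'' variance step, even though $T$ depends on $X_1,\dots,X_m$.
\item The exponential bound on $\mathbb P_0(\hat k_n\neq k_{0n})$ comes from Assumption~\ref{ass:gap_k0} via Lemma~\ref{lem:moment_khat}. Assumption~\ref{ass:spectral_gap_k0} enters only through the $O(n^{-1})$ bias bound of Lemma~\ref{lem:sign_bias_v}.
\end{enumerate}
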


\subsection{Comparison with two-stage sampling}\label{two_stage_comparison}

\noindent We finally compare the second-order guarantees above with those of the two-stage procedure introduced in Section~\ref{sec:three_stage}. Recall that the two-stage rule uses the pilot sample to determine the final sample size \(T_m^{(2)}\) directly, without the intermediate recalibration employed by Algorithm~\ref{algo_1}. Equivalently, it corresponds to using the intermediate rule with \(\rho=1\) and stopping after that stage.

\noindent Following the same argument as in the proof of Theorem~\ref{th:mean_diff}, the corresponding two-stage sample size satisfies
\[
\left|\mathbb E_0[T_m^{(2)}]-n_0(c)\right|
=
O\left(\frac{n_0(c)}{m}\right).
\]
Since \(m=o(n_0(c))\), the factor \(n_0(c)/m\) diverges, so this bound need not remain bounded, whereas the proposed three-stage procedure achieves the \(O(1)\) deviation established in Theorem~\ref{th:mean_diff}. Similarly, the corresponding two-stage risk deviation is of order $O\left(c\,n_0(c)/m\right),$ whereas Theorem~\ref{th:risk_diff} gives the sharper \(O(c)\) bound for the three-stage procedure.

\noindent Thus, the additional stage is substantive rather than merely a technical device in the high-dimensional setting considered here. In the three-stage procedure, the pilot sample is used only to move the sampling process toward a fraction of the oracle target, after which the additional observations collected at the intermediate stage are used to update the residual-variance estimate before the final sampling decision. This recalibration reduces the influence of pilot-stage estimation error on the final sample size and removes the additional factor \(n_0(c)/m\) from both the sample-size and risk deviation bounds. The comparison therefore highlights the role of the third stage in achieving sharper second-order efficiency in the high-dimensional regime.

\section{Simulation Study}\label{sec_simu}
\noindent \textit{Simulation Setup.} In this section, we conduct a simulation study to provide empirical evidence for the first- and second-order efficiency results of the proposed three-stage PCA procedure. We consider \(p\in\{200,400,800\}\) and, for each \(p\), vary the oracle sample size over \(n_0\in\{50,100,\ldots,1050\}\). We set \(A=1,\eta=0.60,\rho=0.70,\delta=1.20\).

For each value of \(p\), we generate three dense covariance matrices of the form \(\Sigma=Q\Lambda Q^T\), where \(Q\) is a randomly generated \(p\times p\) orthogonal matrix. The same \(Q\) is used for all three covariance structures at a given \(p\), so that the models differ only through their eigenvalue decay patterns. Writing \(q=p-3\), we consider three specifications for \(\Lambda\): (a) the \textit{uniform-tail model}, with \(\Lambda_{\mathrm U}=\operatorname{diag}(4,2,1,1/q,\ldots,1/q)\); (b) the \textit{polynomial-tail model}, with \(\Lambda_{\mathrm P}=\operatorname{diag}(4,2,1,w_1^{(P)},\ldots,w_q^{(P)})\), where \(w_j^{(P)}=j^{-2}/\sum_{\ell=1}^q \ell^{-2}\); and (c) the \textit{exponential-tail model}, with \(\Lambda_{\mathrm E}=\operatorname{diag}(4,2,1,w_1^{(E)},\ldots,w_q^{(E)})\), where \(w_j^{(E)}=0.5^{j-1}/\sum_{\ell=1}^q 0.5^{\ell-1}\).

Note that, for all three models, the first two cumulative explained-variance proportions are \(\gamma_1=4/8=0.50\) and \(\gamma_2=6/8=0.75\). Hence, with \(\eta=0.60\), the true number of principal components required to explain at least \(100\eta\%\) of the total variance is \(k_{0n}=2\) for every model and every value of \(p\). The corresponding residual variance is \(\xi_n=\sum_{j>2}\lambda_j(\Sigma)=2\) for every model and every value of \(p\). Moreover, the threshold \(\eta\) is uniformly separated from both \(\gamma_1\) and \(\gamma_2\), and the eigengap at \(k_{0n}\) is \(\lambda_2-\lambda_3=1\). Thus, these covariance models satisfy all the necessary assumptions required by our theoretical results. Since \(A=1\) and \(\xi_n=2\), the sampling cost corresponding to the oracle sample size \(n_0\) is \(c=2/n_0^2\), and with \(\delta=1.20\), the pilot sample size is \(m=\left\lceil (n_0^2/2)^{5/12}\right\rceil\). The corresponding population oracle risk is directly available as \(R_{n_0}=A\xi_n/n_0+cn_0=4/n_0\).

We consider \(5000\) Monte Carlo replications. For each Monte Carlo replication, observations are independently generated from \(N_p(0,\Sigma)\). For each dataset, starting from the pilot sample size \(m\), we apply Algorithm~\ref{algo_1} and compute the final sample size \(N_T\), the selected dimension \(\hat{k}_{N_T}\), and the corresponding residual variance \(V_{N_T}\). Specifically, let \(N_T^{(b)}\), \(\hat{k}_{N_T}^{(b)}\), and \(V_{N_T}^{(b)}\) denote the corresponding quantities from the $b^\text{th}$ Monte Carlo replication, \(b=1,\ldots,5000\). We estimate \(\mathbb E(N_T)\) by \(\widehat{\mathbb E}(N_T)=B^{-1}\sum_{b=1}^B N_T^{(b)}\), the relative-error probability by \(\widehat{P}(|N_T/n_0-1|>0.05)=B^{-1}\sum_{b=1}^B I(|N_T^{(b)}/n_0-1|>0.05)\), and the sequential risk by \(\widehat{R}_{N_T}=B^{-1}\sum_{b=1}^B\{A V_{N_T}^{(b)}(\hat{k}_{N_T}^{(b)})/N_T^{(b)}+cN_T^{(b)}\}\). We also estimate the probability of correctly selecting the true number of principal components by \(\widehat{P}(\hat{k}_{N_T}=k_{0n})=B^{-1}\sum_{b=1}^B I(\hat{k}_{N_T}^{(b)}=k_{0n})\).
For each \(p\), we plot six quantities as functions of \(n_0\) in Figure~\ref{fig:simulation_results}: the relative-error probability, the first-order sample-size ratio, the first-order risk ratio, the second-order sample-size error, the second-order risk regret, and the probability of correctly selecting \(k_{0n}\). 

\vspace{0.2cm}
\noindent \textit{Results.} Figures~\ref{fig:sim_p200}--\ref{fig:sim_p800} in Figure~\ref{fig:simulation_results} provide strong empirical support for the theoretical findings in Section \ref{sec:asym_efficiency}. Across all three covariance models and all values of \(p\), the results demonstrate first-order efficiency in both sample size and risk as \(n_0\) increases. The second-order sample-size error remains well controlled, while the risk regret decreases rapidly toward zero. The probability of correctly selecting the population dimension also approaches one quickly. Overall, the performance remains stable as the ambient dimension increases from \(p=200\) to \(p=800\).

\begin{figure}[!htbp]
    \centering
    \begin{subfigure}[t]{0.48\textwidth}
        \centering
        \includegraphics[width=\textwidth]{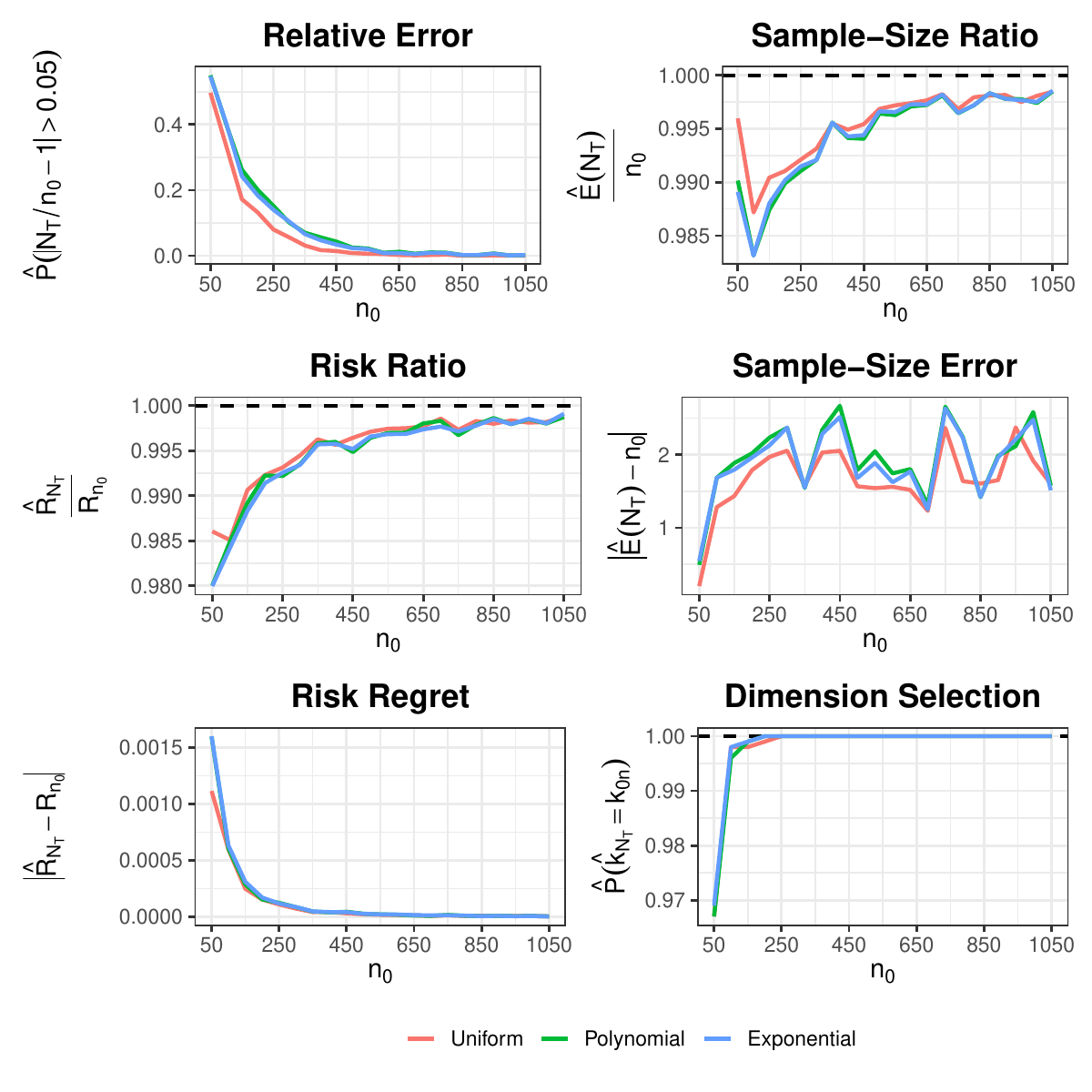}
        \caption{\(p=200\)}
        \label{fig:sim_p200}
    \end{subfigure}
    \hfill
    \begin{subfigure}[t]{0.48\textwidth}
        \centering
        \includegraphics[width=\textwidth]{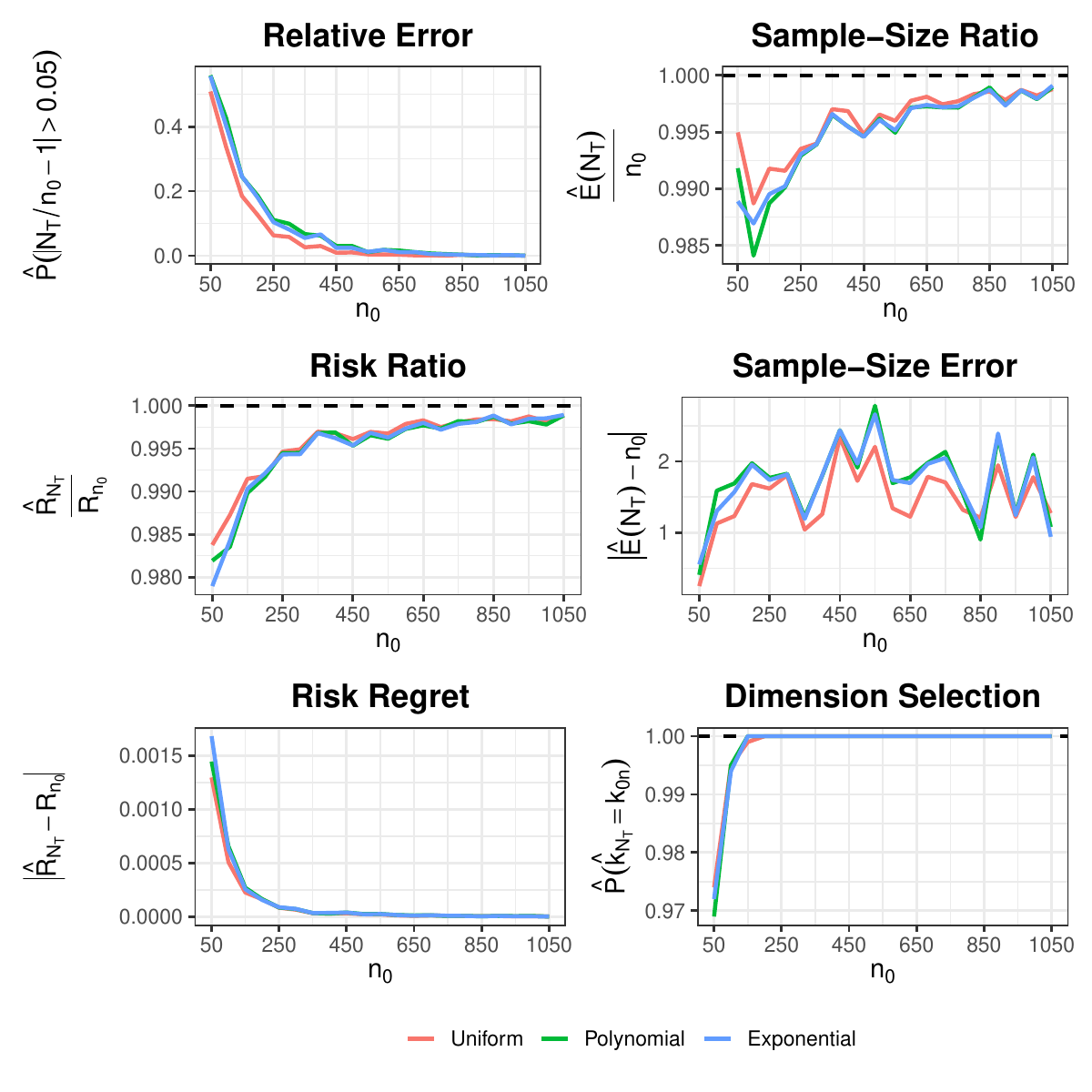}
         \caption{\(p=400\)}
        \label{fig:sim_p400}
    \end{subfigure}

    \vspace{0.5em}

    \begin{subfigure}[t]{0.48\textwidth}
        \centering
        \includegraphics[width=\textwidth]{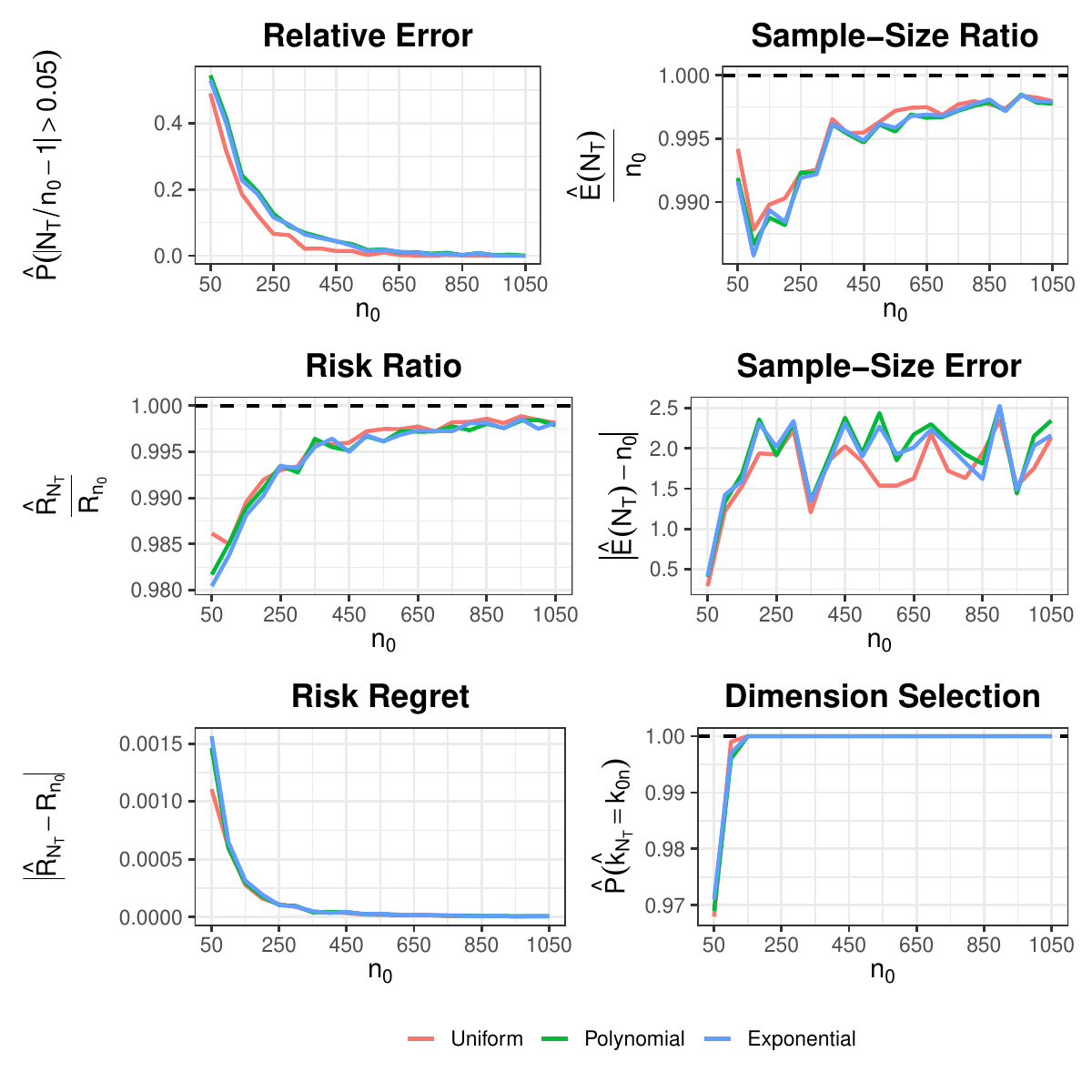}
        \caption{\(p=800\)}
        \label{fig:sim_p800}
    \end{subfigure}
    \hfill
    \begin{subfigure}[t]{0.48\textwidth}
        \centering
        \includegraphics[width=\textwidth]{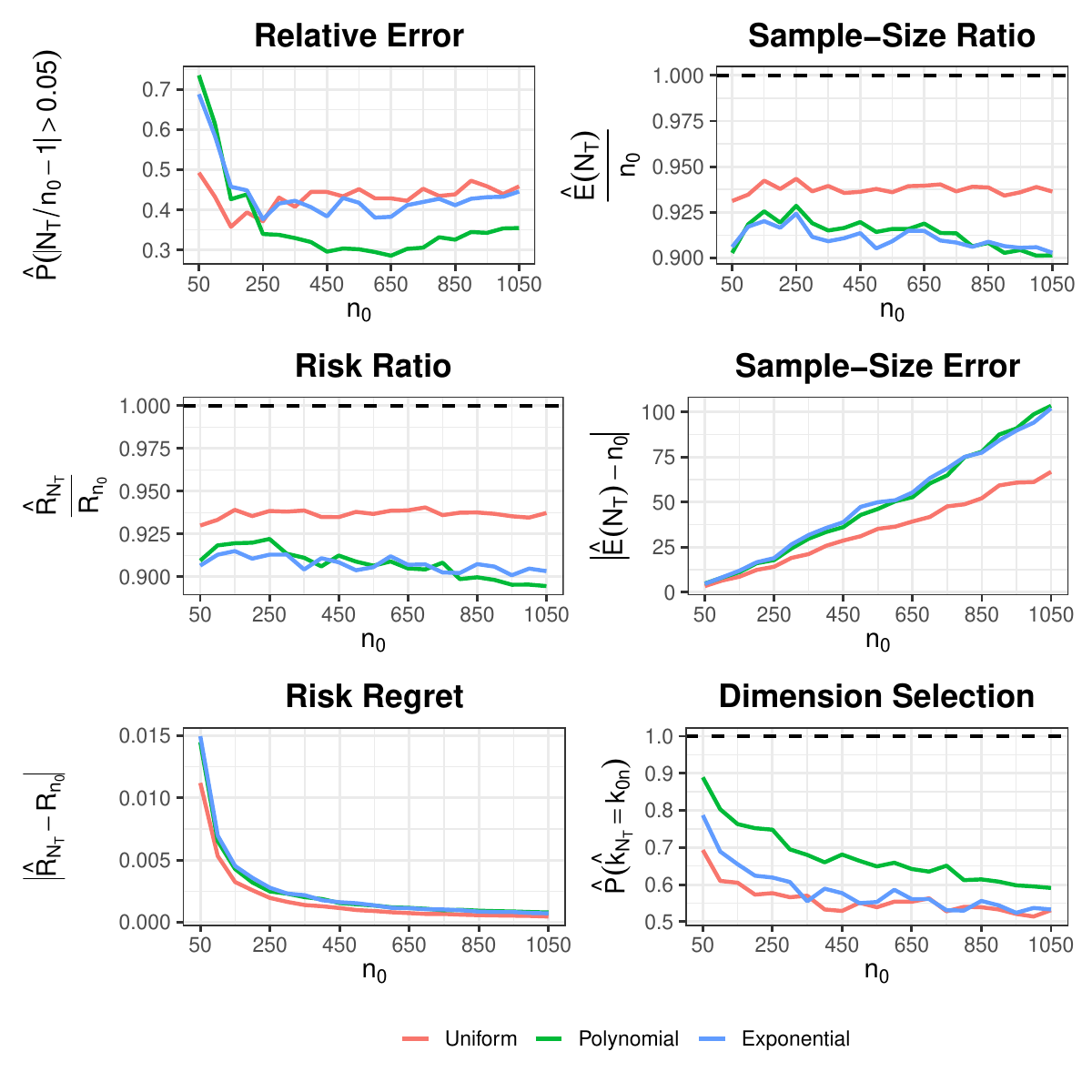}
        \caption{Gap condition violated \(p=200\)}
        \label{fig:sim_gap}
    \end{subfigure}
    \caption{Empirical performance of the proposed three-stage PCA procedure for ambient dimensions \(p=200,400,\) and \(800\), under the uniform, polynomial, and exponential tail covariance models. The bottom-right panel corresponds to a counterexample in which the gap condition is violated.}
    \label{fig:simulation_results}
\end{figure}
\vspace{0.2cm}
\noindent \textit{Importance of the Gap Condition.} To further illustrate the role of Assumption~\ref{ass:gap_k0}, we consider an additional setting in which the threshold-gap condition is violated. We take \(p=200\) and modify the three covariance models by increasing the total mass of the residual eigenvalues from \(1\) to \(3\). Specifically, the uniform-tail model is defined by \(\Lambda_{\mathrm U}=\operatorname{diag}(4,2,1,3/q,\ldots,3/q)\), while in the polynomial- and exponential-tail models we replace \(w_j^{(P)}\) and \(w_j^{(E)}\) by \(3w_j^{(P)}\) and \(3w_j^{(E)}\), respectively. Then \(\operatorname{tr}(\Sigma)=10\), so that \(\gamma_1=0.40\) and \(\gamma_2=0.60=\eta\). Hence \(k_{0n}=2\), but the threshold \(\eta\) is no longer separated from \(\gamma_2\), violating Assumption~\ref{ass:gap_k0}. Importantly, the eigengap remains \(\lambda_2-\lambda_3=1\), so this experiment isolates the effect of violating the threshold-gap condition rather than the spectral separation condition. In this setting, \(\xi_n=4\), and we therefore set \(c=4/n_0^2\) and \(m=\left\lceil(n_0^2/4)^{5/12}\right\rceil\). The corresponding population oracle risk is \(R_{n_0}=8/n_0\).

\noindent The bottom-right panel of Figure~\ref{fig:simulation_results} shows the results under this violation. In contrast to the previous settings, both first-order sample-size and risk efficiency deteriorate, the second-order sample-size error is no longer well controlled, and the probability of selecting the correct number of population principal components fails to approach one. Although the risk regret continues to decrease with \(n_0\), it does so more slowly and remains noticeably larger over the range considered. These findings illustrate the importance of the threshold-gap condition for stabilizing the estimated number of population principal components and, consequently, the sequential stopping rule.
\section{Application to TCGA Gene-Expression Data: A High-Dimensional Setting}
\label{application}

\begingroup
\captionsetup{font=footnotesize}
\setlength{\textfloatsep}{8pt plus 2pt minus 2pt}
\setlength{\intextsep}{6pt plus 2pt minus 2pt}
\setlength{\floatsep}{6pt plus 2pt minus 2pt}
\setlength{\abovecaptionskip}{4pt}
\setlength{\belowcaptionskip}{2pt}

\noindent We illustrate the proposed three-stage procedure using gene-expression data from The Cancer Genome Atlas (TCGA). This application provides a natural high-dimensional setting for our framework: each patient is represented by measurements on approximately \(p=20{,}000\) genes, whereas the number of available patients within each cancer-type cohort is substantially smaller. Our objective is to determine, for each cohort, how many patients are needed to obtain a cost-effective PCA representation while balancing the statistical loss due to compression against the cost of collecting additional observations.

\noindent \paragraph{Cohort sizes and effective ranks.}
The TCGA Pan-Cancer data contain gene-expression measurements for 32 cancer types across 9{,}701 patients \citep[e.g.,][]{Weinstein2013,GDC2026,Hoadley2018}. For a given cancer type, let \(n_{\mathrm{avail}}\) denote the total number of available patients in that cohort. The same \(p\approx20{,}000\) genes are measured for each patient, while \(n_{\mathrm{avail}}\) ranges from 36 to 1{,}097 across cohorts, so that \(p\gg n_{\mathrm{avail}}\) throughout. Expression values are standardized and log-transformed by TCGA upstream of our analysis, and we use these measurements directly. For a sample of \(j\le n_{\mathrm{avail}}\) patients, we retain the notation used throughout the paper: \(S_j\) denotes the sample covariance matrix, \(\hat k_j\) is the smallest number of principal components explaining at least a fraction \(\eta\) of the total sample variation, as defined in \eqref{eq:khatn}, and \(V_j(\hat k_j)\) denotes the corresponding unexplained variance, as defined in \eqref{V_Xi}. As shown in Figure~\ref{fig:navail}, the available sample size varies by more than 30-fold across the 32 cohorts, from \(n_{\mathrm{avail}}=36\) for cholangiocarcinoma to \(n_{\mathrm{avail}}=1{,}097\) for breast invasive carcinoma. Four of the five smallest cohorts correspond to documented rare cancers: cholangiocarcinoma, uterine carcinosarcoma, kidney chromophobe, and adrenocortical cancer \citep[e.g.,][]{Banales2020,Artioli2015,PashaiFakhri2025,MenonPrasathCorrea2025}; the fifth, diffuse large B-cell lymphoma, is not generally classified as rare \citep[e.g.,][]{Wang2023}.

\noindent To assess whether these high-dimensional cohorts exhibit the low effective-rank behavior underlying our theoretical analysis, we next compute \(\hat r_{n_{\mathrm{avail}}}=\operatorname{tr}(S_{n_{\mathrm{avail}}})/\lambda_1(S_{n_{\mathrm{avail}}})\), the sample analogue of the population effective rank defined in Section~\ref{sec:notation} and used throughout the theoretical development \citep[e.g.,][]{Vershynin2018,koltchinskii2017}. As shown in Figure~\ref{fig:reff}, \(\hat r_{n_{\mathrm{avail}}}\) lies approximately between \(3\) and \(12\) across the 32 cohorts, despite the ambient dimension being roughly \(20{,}000\). This empirical behavior is consistent with the low effective-dimensional structure underlying our theoretical framework and provides a useful diagnostic before applying the proposed sequential PCA procedure. Interestingly, four of the five cancer types with the lowest empirical effective ranks are also among the cancer types reported to have low tumor mutation burden (TMB), a quantity widely studied as a biomarker for immunotherapy response \citep[e.g.,][]{LiGaoWang2023}. Esophageal carcinoma is the only exception among these five. We regard this as an empirical observation from the TCGA data rather than a biological conclusion, and leave a systematic investigation of this relationship for future work.

\begin{figure}[htbp]
\centering
\begin{subfigure}[t]{0.48\textwidth}
\centering
\includegraphics[width=\textwidth]{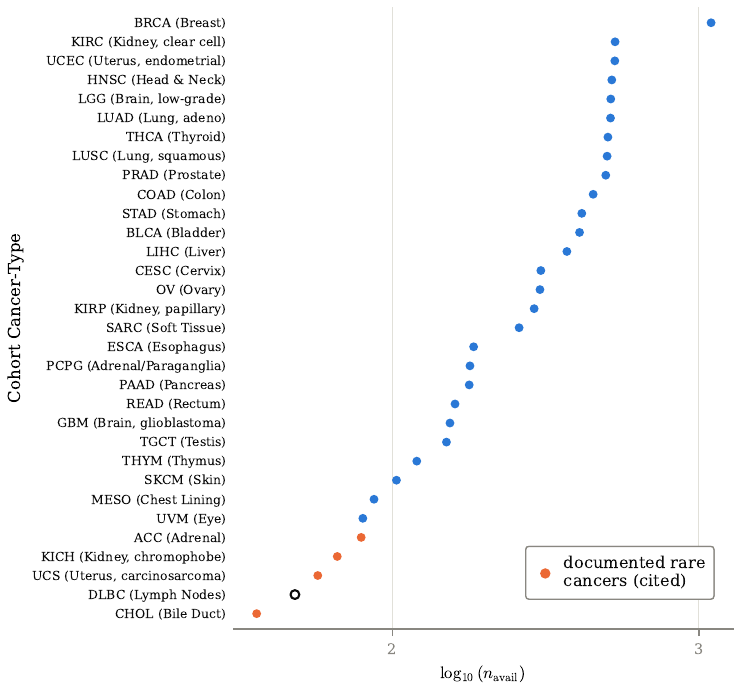}
\caption{\(\log_{10}(n_{\mathrm{avail}})\), ordered from largest to smallest. The five smallest cohorts are highlighted.}
\label{fig:navail}
\end{subfigure}
\hfill
\begin{subfigure}[t]{0.48\textwidth}
\centering
\includegraphics[width=\textwidth]{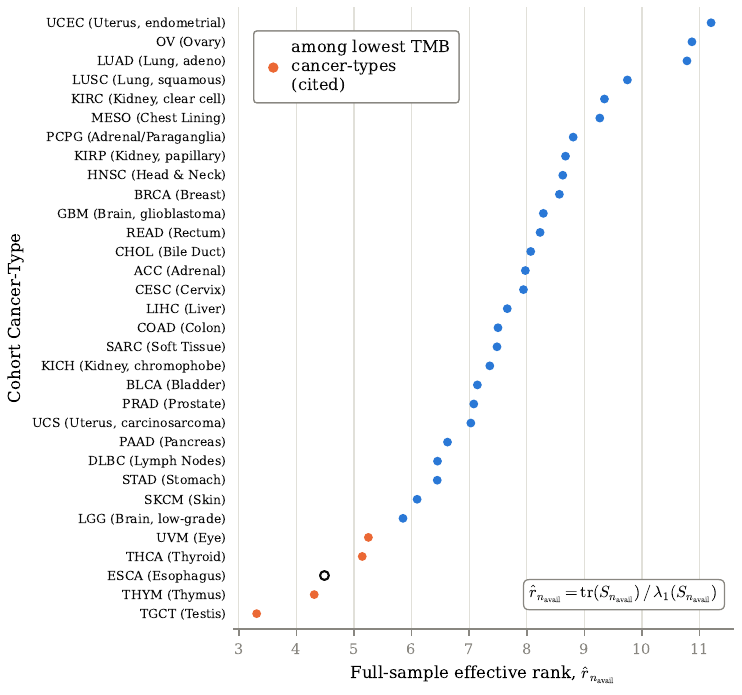}
\caption{Empirical effective rank \(\hat r_{n_{\mathrm{avail}}}\). The five lowest-rank cohorts are highlighted; the hollow marker denotes esophageal carcinoma.}
\label{fig:reff}
\end{subfigure}
\caption{Available sample size \(n_{\mathrm{avail}}\) and empirical effective rank
\(\hat r_{n_{\mathrm{avail}}}\) of the gene-expression covariance matrix across the 32 TCGA cohorts. Cohort labels use the standard TCGA study abbreviations \protect\citep[e.g.,][]{Weinstein2013,GDC2026}.}
\label{fig:cohorts}
\end{figure}

\paragraph{Sampling cost and empirical risk.}
As mentioned in Section \ref{sec:setup_oracle}, the practical motivation for sequential sampling is that genomic measurements can be costly. A recent systematic review reports genetic and genomic testing costs in oncology ranging from \$76.91 to \$11{,}431.66 \citep[e.g.,][]{Goh2026}. Accordingly, we adopt a conservative sampling cost of ($c=\$1,000$) per patient throughout, following \citet{chaban2022}, and use \(\eta=0.9\) and \(\rho=0.7\) throughout. Here \(c\) is the per-patient sampling cost, while \(A/c\) controls the relative weight assigned to unexplained PCA variation. We apply Algorithm~\ref{algo_1} separately to each cancer-type cohort to obtain its data-driven intermediate and final sample sizes. To show where the resulting stopping rule lies relative to the observable cost-compression curve, Figure~\ref{fig:costcurve_examples} considers four representative cohorts. Because TCGA is an already collected dataset with no natural prospective enrollment order, we randomly permute the patients five times to mimic different sequential enrollment orders. For each ordering and each fixed \(j\le n_{\mathrm{avail}}\), we evaluate the cost--compression loss \(L_j(\hat k_j)\) in \eqref{eq:cost_compression_loss}. The faint grey curves represent the five resulting loss trajectories, while the dashed black curve denotes their pointwise average, providing an empirical analogue of the fixed-sample risk in \eqref{fixed_risk_function}. These repetitions are used solely to assess sensitivity to patient ordering and do not form part of Algorithm~\ref{algo_1}. The colored segments show the realized three-stage sampling path \(m\rightarrow T_m\rightarrow N_T\), and the star indicates the loss obtained using the full available cohort. The full cohort is used retrospectively only to construct the dense loss trajectories; at each stage, the sequential procedure uses only the observations available up to that stage.

Figure~\ref{fig:costcurve_examples} shows that for breast invasive carcinoma, brain lower grade glioma, and bladder urothelial carcinoma, the selected \(N_T\) lies well below \(n_{\mathrm{avail}}\) and reaches the low-loss region before the full cohort is exhausted, yielding estimated dollar-risk reductions of approximately \(72\%\), \(51\%\), and \(26\%\), respectively. Uveal melanoma illustrates the opposite regime: \(n_{\mathrm{avail}}=80\), whereas the procedure recommends \(N_T\approx140\), and the loss curve is still decreasing at the available sample size, indicating that additional observations would be desirable under the specified cost-compression trade-off. Table~\ref{tab:cost_reduction} extends this analysis to all 32 cohorts using one realized sequential ordering per cohort and reports \(N_T\), the sample-size reduction \(1-N_T/n_{\mathrm{avail}}\), and the corresponding empirical dollar-risk reduction relative to using the full cohort. Hence, the exact \(N_T\) for the four cohorts in the figure need not coincide with the single-ordering values in the table. For 22 of the 32 cohorts, \(N_T\le n_{\mathrm{avail}}\), with sample-size reductions ranging from \(2.7\%\) to \(85.6\%\) and dollar-risk reductions from \(0.3\%\) to \(71.8\%\); for the remaining ten cohorts, \(N_T>n_{\mathrm{avail}}\), so no reduction is reported because the procedure instead recommends additional sampling. In particular, for breast invasive carcinoma the table gives \(N_T=158\) from \(n_{\mathrm{avail}}=1{,}097\), corresponding to an \(85.6\%\) sample-size reduction and a \(71.8\%\) estimated dollar-risk reduction.

\begin{figure}[!htbp]
\centering

\begin{subfigure}[t]{0.48\textwidth}
\centering
\includegraphics[width=\textwidth]
{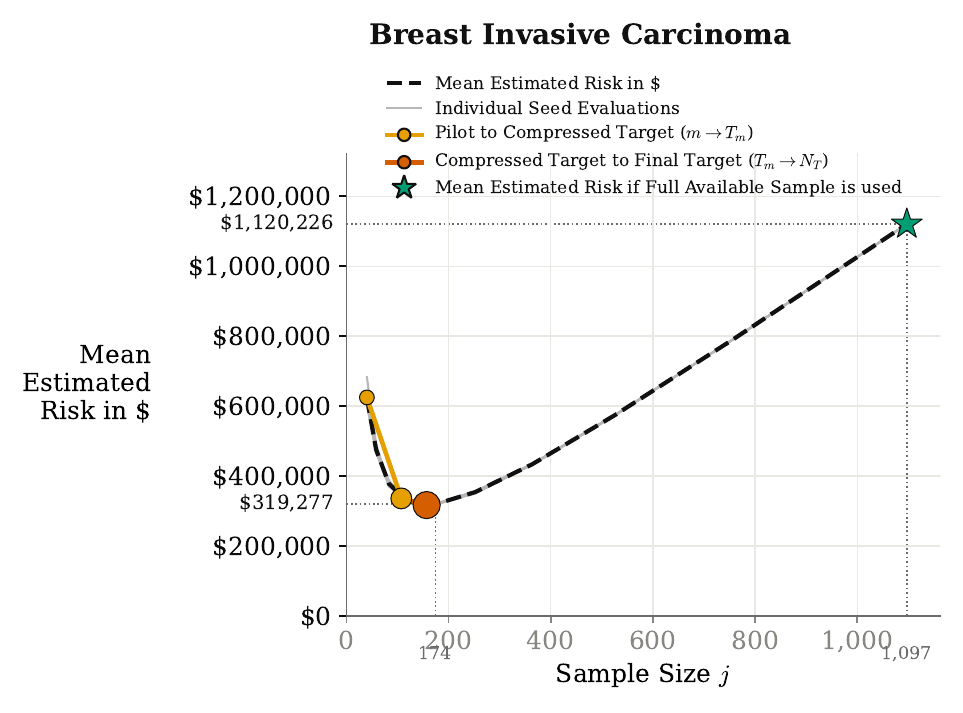}
\caption{Breast invasive carcinoma
(\(n_{\mathrm{avail}}=1{,}097\)).}
\label{fig:costcurve_brca}
\end{subfigure}
\hfill
\begin{subfigure}[t]{0.48\textwidth}
\centering
\includegraphics[width=\textwidth]
{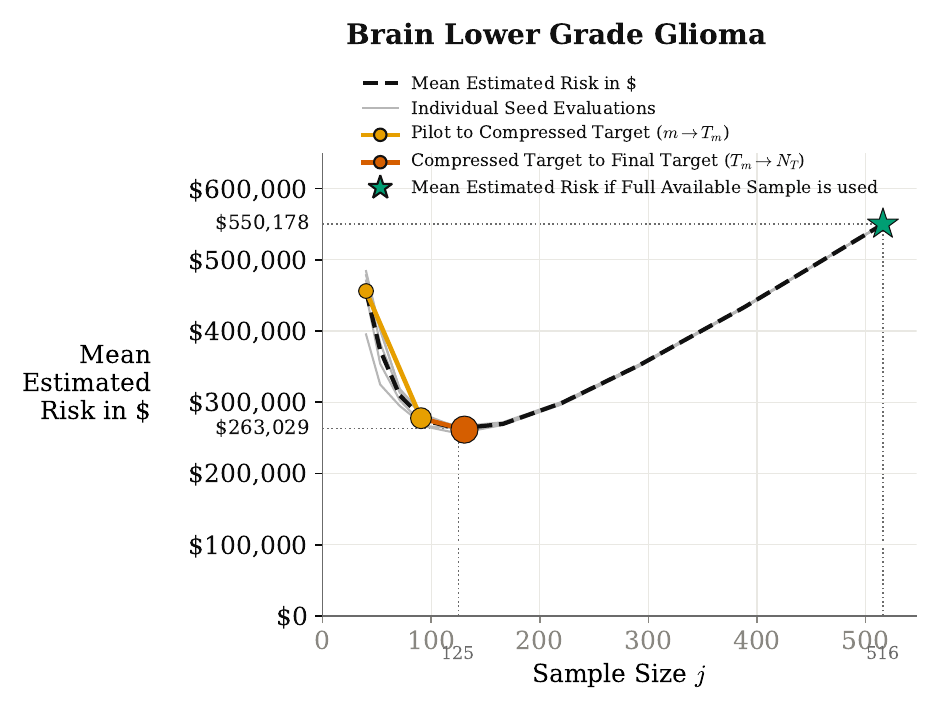}
\caption{Brain lower grade glioma
(\(n_{\mathrm{avail}}=516\)).}
\label{fig:costcurve_lgg}
\end{subfigure}

\vspace{0.5em}

\begin{subfigure}[t]{0.48\textwidth}
\centering
\includegraphics[width=\textwidth]
{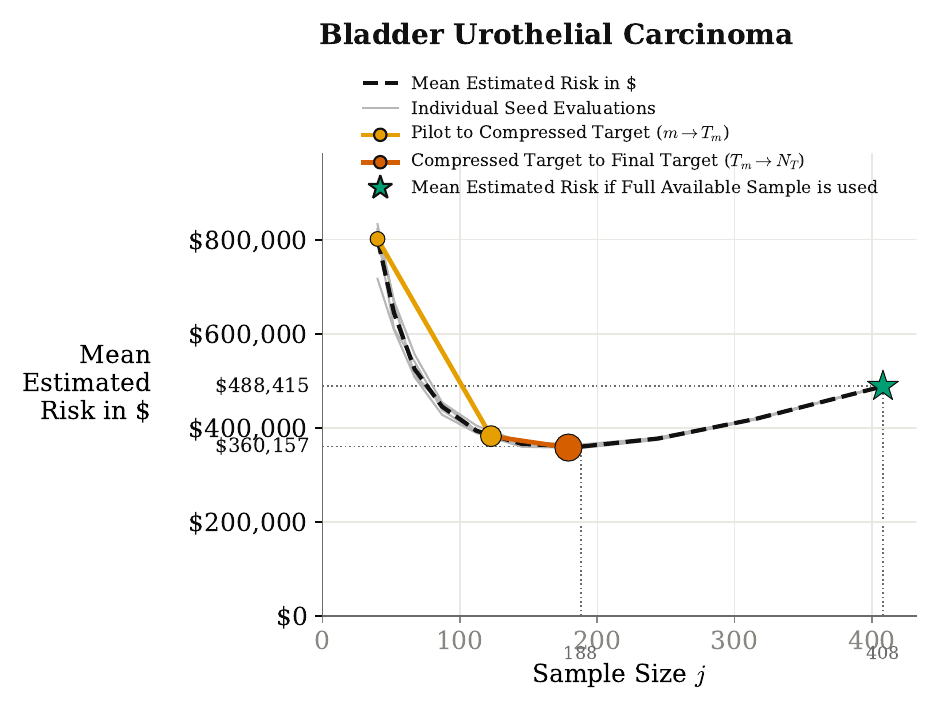}
\caption{Bladder urothelial carcinoma
(\(n_{\mathrm{avail}}=408\)).}
\label{fig:costcurve_blca}
\end{subfigure}
\hfill
\begin{subfigure}[t]{0.48\textwidth}
\centering
\includegraphics[width=\textwidth]
{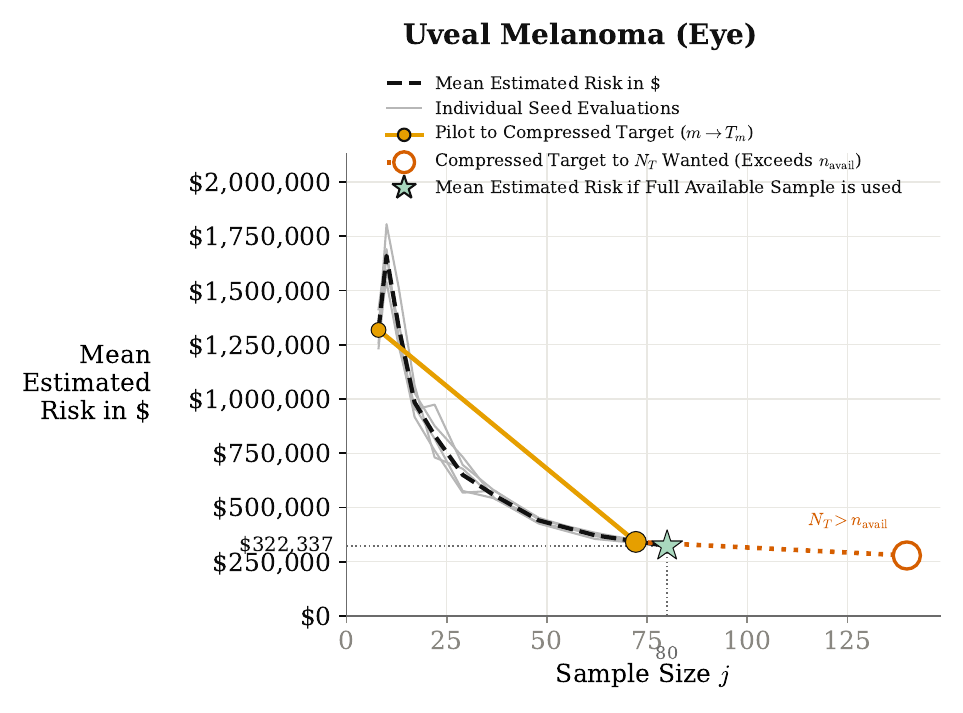}
\caption{Uveal melanoma
(\(n_{\mathrm{avail}}=80\)).}
\label{fig:costcurve_uvm}
\end{subfigure}

\caption{Empirical cost-compression curves for four representative TCGA cohorts at
\(\eta=0.9\), \(A/c=10\), and \(c=\$1{,}000\) per patient. Faint curves correspond to individual sequential orderings and the dashed curve gives their mean. The colored segments show the realized three-stage path \(m\rightarrow T_m\rightarrow N_T\), while the star denotes the empirical loss from using the full available cohort. For uveal melanoma, the recommended final target satisfies \(N_T>n_{\mathrm{avail}}\).}
\label{fig:costcurve_examples}
\end{figure}

{\small
\setlength{\tabcolsep}{4.5pt}
\renewcommand{\arraystretch}{0.92}

\begin{longtable}{@{}p{0.31\textwidth}lrrrr@{}}
\caption{Sample-size and estimated dollar-risk reductions across TCGA cohorts at
\(\eta=0.9\), \(A/c=10\), \(\rho=0.7\) and \(c=\$1{,}000\) per patient.
\(N_T\) denotes the final target selected by Algorithm~\ref{algo_1}.
Reductions are reported only when \(N_T\le n_{\mathrm{avail}}\).
Within each panel, cohorts are ordered by \(n_{\mathrm{avail}}\).
\(^{*}\)Documented rare cancer
\protect\citep[e.g.,][]{Banales2020,Artioli2015,PashaiFakhri2025,MenonPrasathCorrea2025};
\(^{\dagger}\)documented low tumor mutation burden
\protect\citep[e.g.,][]{LiGaoWang2023}.}
\label{tab:cost_reduction}\\

\toprule
Cohort & Abbrev. & \(n_{\mathrm{avail}}\) & \(N_T\) &
\multicolumn{1}{c}{Sample reduction (\%)} &
\multicolumn{1}{c}{Risk reduction (\%)} \\
\midrule
\endfirsthead

\multicolumn{6}{c}{\tablename~\thetable\ continued} \\[2pt]
\toprule
Cohort & Abbrev. & \(n_{\mathrm{avail}}\) & \(N_T\) &
\multicolumn{1}{c}{Sample reduction (\%)} &
\multicolumn{1}{c}{Risk reduction (\%)} \\
\midrule
\endhead

\midrule
\multicolumn{6}{r}{\textit{continued on next page}}\\
\endfoot

\bottomrule
\endlastfoot

\multicolumn{6}{@{}l}{\textit{Panel A: Selected target does not exceed the available cohort}}\\
\addlinespace[2pt]

glioblastoma multiforme
& GBM & 154 & 137 & 11.0 & 0.3 \\

rectum adenocarcinoma
& READ & 160 & 130 & 18.8 & 1.6 \\

pancreatic adenocarcinoma
& PAAD & 178 & 147 & 17.4 & 2.5 \\

pheochromocytoma \& paraganglioma
& PCPG & 179 & 148 & 17.3 & 2.3 \\

esophageal carcinoma
& ESCA & 184 & 179 & 2.7 & 0.4 \\

sarcoma
& SARC & 259 & 186 & 28.2 & 5.4 \\

kidney papillary cell carcinoma
& KIRP & 290 & 153 & 47.2 & 16.4 \\

ovarian serous cystadenocarcinoma
& OV & 303 & 148 & 51.2 & 20.0 \\

cervical \& endocervical cancer
& CESC & 305 & 175 & 42.6 & 14.3 \\

liver hepatocellular carcinoma
& LIHC & 371 & 173 & 53.4 & 23.1 \\

bladder urothelial carcinoma
& BLCA & 408 & 182 & 55.4 & 26.0 \\

stomach adenocarcinoma
& STAD & 415 & 145 & 65.1 & 37.6 \\

colon adenocarcinoma
& COAD & 452 & 138 & 69.5 & 43.8 \\

prostate adenocarcinoma
& PRAD & 497 & 125 & 74.8 & 53.1 \\

lung squamous cell carcinoma
& LUSC & 502 & 166 & 66.9 & 41.1 \\

thyroid carcinoma$^{\dagger}$
& THCA & 505 & 126 & 75.0 & 53.4 \\

lung adenocarcinoma
& LUAD & 515 & 155 & 69.9 & 44.5 \\

brain lower grade glioma
& LGG & 516 & 131 & 74.6 & 51.4 \\

head \& neck squamous cell carcinoma
& HNSC & 520 & 161 & 69.0 & 43.9 \\

uterine corpus endometrioid carcinoma
& UCEC & 532 & 159 & 70.1 & 44.2 \\

kidney clear cell carcinoma
& KIRC & 533 & 142 & 73.4 & 49.6 \\

breast invasive carcinoma
& BRCA & 1097 & 158 & 85.6 & 71.8 \\

\addlinespace[4pt]
\midrule
\multicolumn{6}{@{}l}{\textit{Panel B: Selected target exceeds the available cohort}}\\
\addlinespace[2pt]

cholangiocarcinoma$^{*}$
& CHOL & 36 & 119 & -- & -- \\

diffuse large B-cell lymphoma
& DLBC & 48 & 99 & -- & -- \\

uterine carcinosarcoma$^{*}$
& UCS & 57 & 117 & -- & -- \\

kidney chromophobe$^{*}$
& KICH & 66 & 77 & -- & -- \\

adrenocortical cancer$^{*}$
& ACC & 79 & 101 & -- & -- \\

uveal melanoma$^{\dagger}$
& UVM & 80 & 140 & -- & -- \\

mesothelioma
& MESO & 87 & 151 & -- & -- \\

skin cutaneous melanoma
& SKCM & 103 & 113 & -- & -- \\

thymoma$^{\dagger}$
& THYM & 120 & 158 & -- & -- \\

testicular germ cell tumor$^{\dagger}$
& TGCT & 150 & 166 & -- & -- \\

\end{longtable}
}

\noindent Overall, the TCGA analysis illustrates both possible outcomes of the proposed procedure: when \(N_T\le n_{\mathrm{avail}}\), it yields a smaller cost-effective sample size while preserving the desired PCA compression, whereas \(N_T>n_{\mathrm{avail}}\) suggests that additional observations would be useful. Thus, the application complements Sections~\ref{sec:three_stage} and~\ref{sec:asym_efficiency} by demonstrating the practical value of the proposed rule for high-dimensional study design. When \(n_{\mathrm{avail}}\gg N_T\), the selected sample could also serve as training data for prospective predictive modeling (e.g., classification or regression models), with the remaining observations reserved for testing; this idea extends naturally beyond the TCGA setting.
\endgroup

\section{Concluding Remarks and Future Directions}
\label{sec:conclusion}

\noindent This paper developed a three-stage adaptive PCA procedure for sequentially arriving high-dimensional data under a cost-compression framework. We formulated the corresponding oracle problem and proposed a three-stage sampling rule that determines
the final sample size using data collected across the three stages. We showed that the rule collects only a finite sample size $N_T$ almost surely. Additionally, we established first-order efficiency in both sample size and risk, together with the second-order guarantees, that is, \(\left|\mathbb E_0[N_T]-n_0(c)\right|=O(1)\) and \(\left|R_{N_T}(\hat k_{N_T})-R_{n_0(c)}\right|=O(c)\) (Section~\ref{sec:asym_efficiency}). The comparison with the two-stage rule further demonstrates the benefit of the final recalibration step. The simulation studies in Section~5 support these theoretical findings in finite samples, while the TCGA gene-expression analysis in Section~\ref{application} illustrates how the proposed procedure can be used as a practical study-design tool in a \(p\gg n\) setting, both to identify cost-effective early stopping and to indicate when additional observations are desirable.

Several directions remain open for future work. Extending the theory beyond i.i.d.\ sub-Gaussian observations to dependent or heavy-tailed data would broaden the scope of the method. It is also natural to investigate whether additional recalibration stages can further improve efficiency and whether an optimal number of stages can be characterized. Finally, extending the proposed cost-compression framework beyond PCA to other sequential high-dimensional dimension-reduction and estimation problems is a promising direction for future research.

\section*{Funding Statement}

\noindent Sarkar’s work was supported by the National Science Foundation (NSF-DMS-2506060).

\section*{Declaration of Competing Interest}

\noindent The authors declare that they have no known competing financial interests or personal relationships that could have appeared to influence the work reported in this paper.

\section*{Declaration of generative AI and AI-assisted technologies in the manuscript preparation process}

\noindent During the preparation of this work, the author(s) used ChatGPT (OpenAI) and Claude (Anthropic) for checking, improving writing flow and clarity, and assistance with coding. After using these tools, the author(s) carefully reviewed, edited, and verified the accuracy, comprehensiveness, and impartiality of the AI-assisted output as needed, and take full responsibility for the content of the published article.

\section*{Acknowledgements}

\noindent We are grateful to Swarnali Banerjee, Dept. of Math and Stat, Data Science, Loyola University Chicago, for valuable discussions on the cost-compression risk framework and for her helpful comments and feedback on this work.

\section*{Supplementary Material}

\noindent Additional technical details and supporting results are provided in the
supplementary material accompanying this article \citep[e.g.,][]{SarkarEtAlSupplement}.
The supplement includes background on sub-Gaussian random variables and
vectors, together with additional consistency results for the PCA dimension,
principal subspace, and residual variation at the random final sample size
and at the corresponding oracle sample size.

\appendix

\section{Proofs of Main Results Stated in the Article}\label{app_A}

\subsection{Proof of Lemma~\ref{lem:oracle_optimality}}

\noindent By Assumption~\ref{ass:non_incr_tail_mass} it follows \(R_n^\star\ge cn\to\infty\) as \(n\to\infty\). Also, by monotonicity, \(\xi_n\le\xi_1\), so \(R_n^\star\le A\xi_1/n+cn\). Choose \(N\) such that \(cN>R_1^\star\). Then \(R_n^\star>R_1^\star\) for all \(n\ge N\), and hence $\inf_{n\in\mathbb N}R_n^\star=\min_{1\le n\le N}R_n^\star.$
Thus the set of minimizers is nonempty, and defining \(n_0(c)\) as its smallest element makes it well-defined and unique.

\noindent It remains to show \(n_0(c)\to\infty\) as \(c\downarrow0\). Fix \(M\in\mathbb N\) and choose \(n>M\) such that \(A\xi_1/n\le A\xi_M/(2M)\). Next choose \(c_M>0\) so that \(c_Mn\le A\xi_M/(2M)\). Then, for \(0<c\le c_M\) and any \(k\le M\),
\[
R_n^\star
\le \frac{A\xi_1}{n}+cn
\le \frac{A\xi_M}{M}
\le \frac{A\xi_M}{k}
< \frac{A\xi_k}{k}+ck
=R_k^\star.
\]
Hence \(n_0(c)>M\) for all \(0<c\le c_M\). Since \(M\) is arbitrary, \(n_0(c)\to\infty\) as \(c\downarrow0\).

\subsection{Proof of Lemma~\ref{lem:asymp_equi}}
\noindent Write \(n_0=n_0(c)\). By Lemma~\ref{lem:oracle_optimality}, \(n_0\to\infty\) as \(c\downarrow0\). Optimality, \(R_{n_0}^\star\le R_{n_0\pm1}^\star\), and Assumption~\ref{ass:non_incr_tail_mass} give
\[
c\ge \frac{A\xi_{n_0}}{n_0(n_0+1)},\qquad
c\le \frac{A\xi_{n_0}}{n_0(n_0-1)}
+\frac{A(\xi_{n_0-1}-\xi_{n_0})}{n_0-1}.
\]
By Assumption~\ref{ass:oracle_flatness}, \(\xi_{n_0-1}=\xi_{n_0}\{1+o(1/n_0)\}\), so the second term in the upper bound is \(\xi_{n_0}o(n_0^{-2})\). Hence \(c=A\xi_{n_0}/n_0^2\{1+o(1)\}\).

\noindent For the second claim, recall \(n_T(c)=\min\{n:A\xi_n/n^2\le c\}\). The preceding lower bound and monotonicity imply
\[
\frac{A\xi_{n_0+1}}{(n_0+1)^2}
\le \frac{A\xi_{n_0}}{(n_0+1)^2}\le c,
\]
so \(n_T(c)\le n_0+1\). Conversely, for any fixed \(\epsilon\in(0,1)\) and every integer \(n\le(1-\epsilon)n_0\),
\[
\frac{A\xi_n}{n^2}
\ge \frac{A\xi_{n_0}}{(1-\epsilon)^2n_0^2}
=\frac{c\{1+o(1)\}}{(1-\epsilon)^2}>c
\]
for sufficiently small \(c\). Thus \(n_T(c)>(1-\epsilon)n_0\). Since \(\epsilon>0\) is arbitrary and \(n_0\to\infty\), \(n_T(c)/n_0(c)\to1\), completing the proof.

\subsection{Proof of Lemma \ref{lem:sampleext}}
\noindent Since \(X_i\in\mathbb R^{p_n}\) almost surely, \(\|X_i\|_2<\infty\) almost surely, and hence \(tr(S_n)=n^{-1}\sum_{i=1}^n\|X_i\|_2^2<\infty\) almost surely for every fixed \(n\). Therefore, $P\!\left(\bigcap_{n\ge1}\{tr(S_n)<\infty\}\right)=1.$
On this probability-one event, the pilot-stage quantity \(V_m\), being a finite-valued functional of \(S_m\), is finite, and hence the intermediate sample size \(T\), obtained from \(V_m\) through finitely many arithmetic operations and ceiling/max operators, is also finite. Since \(tr(S_T)<\infty\), the corresponding quantity \(V_T\) is finite as well. The final sample size \(N_T\), being obtained from \(V_m\) and \(V_T\) by the same finite operations, is therefore finite. Hence \(P(N_T<\infty)=1\).

\subsection{Proof of Theorem \ref{th:th_almost_sure_cons}}

\noindent Fix \(\epsilon\in(0,1)\). From definition of $N_T$ it follows
{\small
\[
\mathbb P_0\!\left(\left|\frac{N_T}{n_0(c)}-1\right|>\epsilon\right)
\le
\mathbb P_0\!\left(\sqrt{\frac{A}{c}V_T(\hat k_T)}<(1-\epsilon)n_0(c)\right)
+\mathbb P_0\!\left(\frac{T}{n_0(c)}>1+\epsilon\right)
+\mathbb P_0\!\left(\sqrt{\frac{A}{c}V_T(\hat k_T)}>(1+\epsilon)n_0(c)-1\right).
\]
}
Since \(\rho\in(0,1)\), Lemma~\ref{lem:T_cont} gives
$\mathbb P_0\!\left(T/n_0(c)>1+\epsilon\right)
\precsim e^{-K_2m}$. Next choose \(\delta>0\) such that \(\rho(1+\delta)<1\). On the event \(|T/\{\rho n_0(c)\}-1|<\delta\), we have \(T\ge m\) and, by monotonicity together with Assumption~\ref{as:A2},
\[
1\le\frac{\xi_T}{\xi_{n_0(c)}}\le\frac{\xi_m}{\xi_{n_0(c)}}\to1.
\]
Hence, using Lemma~\ref{lem:asymp_equi}, \(n_0(c)^2c/A=\xi_T\{1+o(1)\}\) uniformly on this event. Therefore Lemma~\ref{lem:rel_cont_v}, conditioning on \(T=t\), yields
\begin{equation}\label{N_T_cont4}
\mathbb P_0\!\left(V_T(\hat k_T)<(1-\epsilon)^2\frac{n_0(c)^2c}{A}\right)
\precsim e^{-K_2m},
\end{equation}
where we also use \(T\ge m\) and Lemma~\ref{lem:T_cont} to control the complement of \(|T/\{\rho n_0(c)\}-1|<\delta\).

\noindent Similarly, since \(\{(1+\epsilon)n_0(c)-1\}^2=(1+\epsilon)^2n_0(c)^2\{1+o(1)\}\), the same argument gives
\begin{equation}\label{N_T_cont5}
\mathbb P_0\!\left(V_T(\hat k_T)>(1+\epsilon)^2\frac{n_0(c)^2c}{A}\right)
\precsim e^{-K_2m}.
\end{equation}
Combining \eqref{N_T_cont4}, and~\eqref{N_T_cont5} gives the desired probability bound. The almost sure convergence is then an immediate consequence of the Borel--Cantelli lemma.
\subsection{Proof of Theorem~\ref{th:mean_cons}}

\noindent It is enough to show
\[
\mathbb E_0\left[\left|\frac{N_T}{n_0(c)}-1\right|\right]\to0
\qquad\text{as }c\downarrow0.
\]
From definition of $N_T$ together with \(|\lceil x\rceil-x|\le1\) and \(|\lceil x\rceil-\lceil y\rceil|\le|x-y|+1\), we obtain
{\small
\[
\mathbb E_0\left|\frac{N_T}{n_0(c)}-1\right|
\le
\mathbb E_0\left|\frac{1}{n_0(c)}\sqrt{\frac{A}{c}V_T(\hat k_T)}-1\right|
+\frac{m}{n_0(c)}
+\frac{1}{n_0(c)}
\mathbb E_0\left|\sqrt{\frac{A}{c}V_m(\hat k_m)}-\sqrt{\frac{A}{c}V_T(\hat k_T)}\right|
+\frac{2}{n_0(c)}.
\]
}
\noindent By Lemma~\ref{lem:asymp_equi}, \(A/\{cn_0(c)^2\}=\xi_{n_0(c)}^{-1}\{1+o(1)\}\). Moreover, \(m=o\{n_0(c)\}\) by Assumption~\ref{as:A1} and \(n_0(c)\to\infty\). Hence it suffices to show
\begin{equation}\label{exp_sample_size1}
\mathbb E_0\left|\frac{V_m(\hat k_m)-\xi_{n_0(c)}}{\xi_{n_0(c)}}\right|\to0,
\qquad
\mathbb E_0\left|\frac{V_T(\hat k_T)-\xi_{n_0(c)}}{\xi_{n_0(c)}}\right|\to0.
\end{equation}

\noindent For the first term, Assumption~\ref{as:A2} and Lemma~\ref{lem:rel_cont_v} yield
\begin{align}
\mathbb E_0\left|\frac{V_m(\hat k_m)-\xi_{n_0(c)}}{\xi_{n_0(c)}}\right|
\le
\frac{\xi_m}{\xi_{n_0(c)}}
\mathbb E_0\left|\frac{V_m(\hat k_m)-\xi_m}{\xi_m}\right|
+\left|\frac{\xi_m}{\xi_{n_0(c)}}-1\right| \precsim
\frac{\xi_m}{\xi_{n_0(c)}}m^{-1/2}
+\left|\frac{\xi_m}{\xi_{n_0(c)}}-1\right|
\to0.
\label{exp_sample_size2}
\end{align}

\noindent For the random index \(T\), conditioning on \(T\) and using Lemma~\ref{lem:rel_cont_v} gives
\begin{align}
\mathbb E_0\left|\frac{V_T(\hat k_T)-\xi_{n_0(c)}}{\xi_{n_0(c)}}\right|
\le
\mathbb E_0\left[
\frac{\xi_T}{\xi_{n_0(c)}}
\left|\frac{V_T(\hat k_T)-\xi_T}{\xi_T}\right|
\right]
+\mathbb E_0\left|\frac{\xi_T}{\xi_{n_0(c)}}-1\right| \precsim
\sup_{t\ge m}\frac{\xi_t}{\xi_{n_0(c)}}\,m^{-1/2}
+\mathbb E_0\left|\frac{\xi_T}{\xi_{n_0(c)}}-1\right|.
\label{exp_sample_size3}
\end{align}
The first term vanishes by Assumptions~\ref{as:A2} and~\ref{as:B1}. To control the second, choose \(\epsilon>0\) such that $\rho(1-\epsilon)>1-\alpha$ and $\rho(1+\epsilon)<1+\alpha,$
where \(\alpha\) is from Assumption~\ref{as:B1}. Then
\begin{align}
\mathbb E_0\left|\frac{\xi_T}{\xi_{n_0(c)}}-1\right|
\le
\sup_{\left|t/\{\rho n_0(c)\}-1\right|\le\epsilon}
\left|\frac{\xi_t}{\xi_{n_0(c)}}-1\right| +
\left(\sup_{t\ge1}\frac{\xi_t}{\xi_{n_0(c)}}+1\right)
\mathbb P_0\left(\left|\frac{T}{\rho n_0(c)}-1\right|>\epsilon\right).
\label{exp_sample_size4}
\end{align}
On the event \(|T/\{\rho n_0(c)\}-1|\le\epsilon\), the index \(T\) lies in \([(1-\alpha)n_0(c),(1+\alpha)n_0(c)]\). Thus the first term in \eqref{exp_sample_size4} tends to zero by Assumptions~\ref{ass:non_incr_tail_mass} and~\ref{as:B1}, while the second tends to zero by Assumption~\ref{as:B1} and Lemma~\ref{lem:T_cont}. Therefore the second term in \eqref{exp_sample_size1} also converges to zero. Consequently, \(\mathbb E_0|N_T/n_0(c)-1|\to0\), which implies \(\mathbb E_0[N_T]/n_0(c)\to1\) and completes the proof.

\subsection{Proof of Theorem~\ref{th:risk_cons}}

\noindent Write \(n_0=n_0(c)\) and \(\xi_0=\xi_{n_0(c)}\). By the definition of the risk,
\[
R_{N_T}(\hat k_{N_T})
=A\mathbb E_0\!\left[\frac{V_{N_T}(\hat k_{N_T})}{N_T}\right]
+c\mathbb E_0[N_T],
\qquad
R_{n_0(c)}=\frac{A\xi_0}{n_0}+cn_0.
\]
Hence
\begin{equation}
\label{eq:risk_cons_1}
\left|\frac{R_{N_T}(\hat k_{N_T})}{R_{n_0(c)}}-1\right|
\le
\frac{n_0}{\xi_0}
\mathbb E_0\left|
\frac{V_{N_T}(\hat k_{N_T})}{N_T}-\frac{\xi_0}{n_0}
\right|
+\mathbb E_0\left|\frac{N_T}{n_0}-1\right|.
\end{equation}
The second term vanishes by Theorem~\ref{th:mean_cons}. Adding and subtracting \(\xi_{N_T}/N_T\) and \(\xi_0/N_T\), the first term is bounded by \(I_{1c}+I_{2c}+I_{3c}\), where
\[
I_{1c}:=\mathbb E_0\!\left[
\left|\frac{V_{N_T}(\hat k_{N_T})-\xi_{N_T}}{\xi_{N_T}}\right|
\frac{\xi_{N_T}}{\xi_0}\frac{n_0}{N_T}\right],\quad
I_{2c}:=\mathbb E_0\!\left[
\frac{n_0}{N_T}\left|\frac{\xi_{N_T}}{\xi_0}-1\right|\right],\quad
I_{3c}:=\mathbb E_0\left|\frac{n_0}{N_T}-1\right|.
\]

We first record a bound used repeatedly below. For any fixed \(\epsilon\in(0,1)\), since \(N_T\ge m\),
\begin{equation}
\label{eq:risk_cons_N_inverse_bound}
\mathbb E_0\!\left[\frac{n_0^2}{N_T^2}\right]
\le \frac{1}{(1-\epsilon)^2}
+\frac{n_0^2}{m^2}
\mathbb P_0\!\left(\left|\frac{N_T}{n_0}-1\right|\ge\epsilon\right)
=O(1).
\end{equation}
Indeed, Theorem~\ref{th:th_almost_sure_cons} gives the probability bound \(O(e^{-K_1m})\), and with \(m=(A/c)^{1/(2\delta)}\), the exponential term dominates the corresponding polynomial factor as \(c\downarrow0\).

For \(I_{1c}\), Cauchy--Schwarz, monotonicity of \(\xi_t\), and \eqref{eq:risk_cons_N_inverse_bound} give
\[
I_{1c}
\le
\left\{\mathbb E_0\left[
\left(\frac{V_{N_T}(\hat k_{N_T})-\xi_{N_T}}{\xi_{N_T}}\right)^2
\right]\right\}^{1/2}
\frac{\xi_m}{\xi_0}
\left\{\mathbb E_0\left[\frac{n_0^2}{N_T^2}\right]\right\}^{1/2}.
\]
By Lemma~\ref{lem:rel_cont_v} and \(N_T\ge m\), the first factor is \(O(m^{-1/2})\), while \(\xi_m/\xi_0=O(1)\) by Assumption~\ref{as:A2}. Thus \(I_{1c}\to0\).

Next fix \(\epsilon\in(0,\alpha)\) and let \(\mathcal E_c=\{|N_T/n_0-1|<\epsilon\}\). Then
\begin{align}
I_{2c}
\le
\sup_{|t/n_0-1|<\epsilon}
\left|\frac{\xi_t}{\xi_0}-1\right|
\mathbb E_0\left[\frac{n_0}{N_T}\right] +
\left(\sup_{t\ge1}\frac{\xi_t}{\xi_0}+1\right)
\mathbb E_0\left[\frac{n_0}{N_T}\mathbf 1_{\mathcal E_c^c}\right].
\label{eq:risk_cons_I2_bound}
\end{align}
The first expectation is \(O(1)\) by \eqref{eq:risk_cons_N_inverse_bound}, while
\[
\mathbb E_0\left[\frac{n_0}{N_T}\mathbf 1_{\mathcal E_c^c}\right]
\le \frac{n_0}{m}\mathbb P_0(\mathcal E_c^c)
\precsim \frac{n_0}{m}e^{-K_1m}\to0.
\]
Assumptions~\ref{ass:non_incr_tail_mass} and~\ref{as:B1} also give
\(\sup_{|t/n_0-1|<\epsilon}|\xi_t/\xi_0-1|\to0\) and
\(\sup_{t\ge1}\xi_t/\xi_0=O(1)\). Hence \(I_{2c}\to0\).

Finally,
\[
I_{3c}
=\mathbb E_0\left|\frac{n_0}{N_T}-1\right|
\le
\frac{1}{1-\epsilon}\mathbb E_0\left|\frac{N_T}{n_0}-1\right|
+\left(\frac{n_0}{m}+1\right)\mathbb P_0(\mathcal E_c^c).
\]
The first term vanishes by Theorem~\ref{th:mean_cons}, while the second is \(O\{(n_0/m+1)e^{-K_1m}\}=o(1)\) by Theorem~\ref{th:th_almost_sure_cons}. Therefore \(I_{3c}\to0\). 
\noindent Thus \(I_{1c}+I_{2c}+I_{3c}\to0\), and \eqref{eq:risk_cons_1} yields
\[
\frac{R_{N_T}(\hat k_{N_T})}{R_{n_0(c)}}\to1,
\]
which completes the proof.

\subsection{Proof of Theorem~\ref{th:mean_diff}}

Since \(N_T=\max\{T_m,X_T\}=X_T+(T_m-X_T)_+\), where $X_T:=\left\lceil\sqrt{(A/c)V_T(\hat k_T)}\right\rceil,$ it follows,
\begin{equation}
\label{eq:mean_diff_1}
\bigl|\mathbb E_0[N_T]-n_0(c)\bigr|
\le
\bigl|\mathbb E_0[X_T]-n_0(c)\bigr|
+\mathbb E_0[(T_m-X_T)_+].
\end{equation}
We first show that the second term is \(o(1)\). Let
\(A_T=\{|T_m/\{\rho n_0(c)\}-1|\le\epsilon\}\), where \(\epsilon>0\) is chosen so that \(\rho(1+\epsilon)<1\). Then
\[
\mathbb E_0[(T_m-X_T)_+]
\le
\mathbb E_0[T_m\mathbf 1_{\{X_T<T_m\}}\mathbf 1_{A_T}]
+\mathbb E_0[T_m\mathbf 1_{A_T^c}].
\]
On \(A_T\), \(T_m\le(1-\zeta)n_0(c)\) for some \(\zeta>0\); hence, by the same argument as in Lemma~\ref{lem:T_cont},
\[
\mathbb E_0[T_m\mathbf 1_{\{X_T<T_m\}}\mathbf 1_{A_T}]
\precsim n_0(c)e^{-K_2m}=o(1).
\]
For the second term, the definition of \(T_m\) and Lemma~\ref{lem:T_cont} give
\[
\mathbb E_0[T_m\mathbf 1_{A_T^c}]
\precsim
\{m+n_0(c)+1\}e^{-K_2m}
+\mathbb E_0\!\left[
\left|\sqrt{\frac{A}{c}V_m(\hat k_m)}-n_0(c)\right|
\mathbf 1_{A_T^c}\right].
\]
Writing \(b_c=A\xi_{n_0(c)}/\{cn_0(c)^2\}\), the last expectation is bounded by
\[
n_0(c)\mathbb E_0\!\left[
\left|\frac{V_m(\hat k_m)-\xi_{n_0(c)}}{\xi_{n_0(c)}}\right|
\mathbf 1_{A_T^c}\right]
+n_0(c)|\sqrt{b_c}-1|e^{-\kappa_1m}.
\]
The second term is \(o(1)\), while Cauchy--Schwarz, Lemma~\ref{lem:rel_cont_v}, Assumption~\ref{as:A2}, and Lemma~\ref{lem:T_cont} yield
\[
n_0(c)\mathbb E_0\!\left[
\left|\frac{V_m(\hat k_m)-\xi_{n_0(c)}}{\xi_{n_0(c)}}\right|
\mathbf 1_{A_T^c}\right]
\precsim
n_0(c)m^{-1/2}e^{-\kappa_1m/2}
+n_0(c)\left|\frac{\xi_m}{\xi_{n_0(c)}}-1\right|e^{-\kappa_1m}
=o(1).
\]
Thus, $\label{eq:mean_diff_max}
\mathbb E_0[(T_m-X_T)_+]=o(1).$

\noindent It remains to show \(\bigl|\mathbb E_0[X_T]-n_0(c)\bigr|=O(1)\). Using \(b_c\) as above,
\[
\bigl|\mathbb E_0[X_T]-n_0(c)\bigr|
\precsim
n_0(c)\left|
\frac{\mathbb E_0[V_T(\hat k_T)]-\xi_{n_0(c)}}{\xi_{n_0(c)}}
\right|
+n_0(c)|b_c-1|.
\]
The second term is \(O(1)\) by Assumption~\ref{ass:oracle_flatness_rate}. For the first,
\begin{align}
n_0(c)\left|
\frac{\mathbb E_0[V_T(\hat k_T)]-\xi_{n_0(c)}}{\xi_{n_0(c)}}
\right|
&\precsim
n_0(c)\left|
\mathbb E_0\!\left[
\frac{V_T(\hat k_T)-\xi_T}{\xi_T}\right]\right|
+n_0(c)\mathbb E_0\!\left|
\frac{\xi_T}{\xi_{n_0(c)}}-1\right|,
\label{eq:mean_diff_4}
\end{align}
where Assumption~\ref{ass:B1A} is used.

For the first term, let \(B_T=\{|T/n_0(c)-1|<\alpha\}\). By Lemma~\ref{lem:sign_bias_v} and \(T\ge m\),
\[
n_0(c)\left|
\mathbb E_0\!\left[\frac{V_T(\hat k_T)-\xi_T}{\xi_T}\right]\right|
\precsim
n_0(c)\mathbb E_0\!\left[\frac1T\right]
\le
\frac{1}{1-\alpha}
+\frac{n_0(c)}{m}e^{-\kappa_1m}
=O(1).
\]
For the second, choose \(0<\epsilon<(\alpha+\rho-1)/\rho\). Assumption~\ref{ass:B1A} and Lemma~\ref{lem:T_cont} give
\[
n_0(c)\mathbb E_0\!\left|
\frac{\xi_T}{\xi_{n_0(c)}}-1\right|
\le
n_0(c)\sup_{\left|t/\{\rho n_0(c)\}-1\right|\le\epsilon}
\left|\frac{\xi_t}{\xi_{n_0(c)}}-1\right|
+
\left(\sup_{t\ge1}\frac{\xi_t}{\xi_{n_0(c)}}+1\right)
n_0(c)e^{-\kappa_1m}
=O(1).
\]
Hence \(\left|\mathbb E_0[X_T]-n_0(c)\right|=O(1)\). Combining this with \eqref{eq:mean_diff_1} and \eqref{eq:mean_diff_max} gives $\left|\mathbb E_0[N_T]-n_0(c)\right|=O(1)$
which completes the proof.
\subsection{Proof of Theorem~\ref{th:risk_diff}}

\noindent By definition,
\begin{align}
R_{N_T}(\hat k_{N_T})-R_{n_0(c)}
&=
A\mathbb E_0\left[\frac{V_{N_T}(\hat k_{N_T})-\xi_{N_T}}{N_T}\right]
+A\mathbb E_0\left[\frac{\xi_{N_T}}{N_T}-\frac{\xi_{n_0(c)}}{n_0(c)}\right] \notag\\
&\quad+c\{\mathbb E_0[N_T]-n_0(c)\}.
\label{eq:risk_diff_decomp}
\end{align}
We control the three terms separately. First, conditioning on \(N_T\), Lemma~\ref{lem:sign_bias_v}, and Assumptions~\ref{ass:non_incr_tail_mass} and~\ref{as:A2} give
\begin{align}
\left|
A\mathbb E_0\left[\frac{V_{N_T}(\hat k_{N_T})-\xi_{N_T}}{N_T}\right]
\right|
&\precsim
A\mathbb E_0\left[\frac{\xi_{N_T}}{N_T^2}\right]
\le A\frac{\xi_m}{m^2}=o(1).
\label{eq:risk_diff_first}
\end{align}

\noindent For the second term, write
\begin{align}
\mathbb E_0\left[\frac{\xi_{N_T}}{N_T}-\frac{\xi_{n_0(c)}}{n_0(c)}\right]
&=
\mathbb E_0\left[\frac{\xi_{N_T}-\xi_{n_0(c)}}{N_T}\right]
+\xi_{n_0(c)}
\mathbb E_0\left[\frac1{N_T}-\frac1{n_0(c)}\right].
\label{eq:risk_diff_second_decomp}
\end{align}
Let \(\mathcal E_c=\{|N_T/n_0(c)-1|<\alpha\}\). On \(\mathcal E_c\), Assumption~\ref{ass:B1A} yields
\[
A\mathbb E_0\left[\left|
\frac{\xi_{N_T}-\xi_{n_0(c)}}{N_T}\right|\mathbf 1_{\mathcal E_c}\right]
\precsim \frac{A\xi_{n_0(c)}}{n_0(c)^2}\precsim c.
\]
On \(\mathcal E_c^c\), using \(N_T\ge m\), Assumption~\ref{ass:B1A}, Theorem~\ref{th:th_almost_sure_cons}, and the rate version of Lemma~\ref{lem:asymp_equi},
\[
A\mathbb E_0\left[\left|
\frac{\xi_{N_T}-\xi_{n_0(c)}}{N_T}\right|\mathbf 1_{\mathcal E_c^c}\right]
\precsim
\frac{A\xi_{n_0(c)}}{m}e^{-K_1m}
\precsim c\,\frac{n_0(c)^2}{m}e^{-K_1m}
=o(c).
\]
Hence
\begin{equation}
\label{eq:risk_diff_xi_part}
A\left|\mathbb E_0\left[
\frac{\xi_{N_T}-\xi_{n_0(c)}}{N_T}\right]\right|
\precsim c.
\end{equation}

\noindent It remains to control the reciprocal term. Define
\[
I_{3c}:=
\left|\mathbb E_0\left[\frac{n_0(c)}{N_T}-1\right]\right|.
\]
Using the same event \(\mathcal E_c\),
\begin{align}
I_{3c}
&\le
\frac{1}{n_0(c)(1-\alpha)}
\left|\mathbb E_0[N_T-n_0(c)]\right|
+\left(\frac{n_0(c)}{m}+1\right)\mathbb P_0(\mathcal E_c^c).
\label{eq:risk_diff_I3_bound}
\end{align}
By Theorem~\ref{th:mean_diff} and Theorem~\ref{th:th_almost_sure_cons}, the two terms are respectively \(O\{n_0(c)^{-1}\}\) and \(o\{n_0(c)^{-1}\}\). Thus \(I_{3c}=O\{n_0(c)^{-1}\}\). Using again the rate version of Lemma~\ref{lem:asymp_equi},
\[
A\xi_{n_0(c)}
\left|\mathbb E_0\left[\frac1{N_T}-\frac1{n_0(c)}\right]\right|
=O(c).
\]
Finally, Theorem~\ref{th:mean_diff} gives \(c|\mathbb E_0[N_T]-n_0(c)|=O(c)\). Combining these bounds in \eqref{eq:risk_diff_decomp},
\[
\left|R_{N_T}(\hat k_{N_T})-R_{n_0(c)}\right|=O(c),
\]
which completes the proof.
\section{Proof of Supporting Lemmas}\label{app_B}
\label{app:supporting_lemmas}

\begin{lemma} \label{lemm:trace_bound}\label{lem:trace_moment}
Under Assumption~\ref{as:spec_subg}, there exists a constant \(K_0>0\), depending only on \(\sigma_0\), such that for any \(t>0\),
{
\[
\mathbb{P}_{0}\!\left(\bigl|tr(S_n)-tr(\Sigma_0)\bigr|>t\right)
\leq
2\exp\!\left\{-nK_0\min\!\left(
\frac{t^2}{tr(\Sigma_0^2)},
\frac{t}{tr(\Sigma_0)}
\right)\right\}.
\]}
Moreover, for any integer \(k\ge1\), there exists \(C_k>0\), depending only on \(k,\sigma_0\), and \(\kappa_\sigma\), such that
{
\[
\mathbb E_0\!\left[\bigl|tr(S_n)-tr(\Sigma_0)\bigr|^k\right]
\le
C_k\left\{
\left(\frac{tr(\Sigma_0^2)}{n}\right)^{k/2}
+
\left(\frac{tr(\Sigma_0)}{n}\right)^k
\right\}.
\]}
\end{lemma}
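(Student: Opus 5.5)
The plan is to write $tr(S_n)-tr(\Sigma_0)$ as an average of i.i.d.\ centered variables and apply a Bernstein-type inequality for sub-exponential variables. Using Assumption~\ref{as:spec_subg}, we write $X_i=\Sigma_0^{1/2}Y_i$ with $Y_i$ isotropic and $\|Y_i\|_{\psi_2}\le\sigma_0$. Then
\[
tr(S_n)-tr(\Sigma_0)=\frac1n\sum_{i=1}^n\bigl(Y_i^\top\Sigma_0Y_i-\mathbb E_0[Y_i^\top\Sigma_0Y_i]\bigr),
\]
since $\mathbb E_0\|X_i\|^2=tr(\Sigma_0)$. Each summand is a centered quadratic form in a sub-Gaussian vector.

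The key step is a Hanson--Wright-type bound for each $Z_i:=Y_i^\top\Sigma_0Y_i-tr(\Sigma_0)$:
\[
\mathbb P_0(|Z_i|>s)\le2\exp\bigl\{-K\min\bigl(s^2/(\sigma_0^4\|\Sigma_0\|_F^2),\,s/(\sigma_0^2\|\Sigma_0\|)\bigr)\bigr\}.
\]
This would make $Z_i$ sub-exponential with variance proxy of order $tr(\Sigma_0^2)$ and scale of order $\|\Sigma_0\|\le tr(\Sigma_0)$. The classical Hanson--Wright inequality requires independent coordinates, which we do not have; this is the \textbf{main obstacle}. I would first try the version for isotropic sub-Gaussian vectors with merely sub-Gaussian marginals, which gives this bound with $\|\Sigma_0\|_F^2=tr(\Sigma_0^2)$ (standard in, e.g., \cite{Vershynin2018}). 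Failing that, I would fall back on the cruder estimate
\[
\bigl\|\,\|X_i\|^2\bigr\|_{\psi_1}=\bigl\|\,\|X_i\|\bigr\|_{\psi_2}^2\precsim\sigma_0^2\,tr(\Sigma_0),
\]
which follows by bounding moments of $\|\Sigma_0^{1/2}Y\|$ via the eigen-decomposition $\|\Sigma_0^{1/2}Y\|^2=\sum_j\lambda_j\langle u_j,Y\rangle^2$ together with the triangle inequality in $\psi_1$. This route yields $\psi_1$ scale $tr(\Sigma_0)$, matching the linear term of the statement. For the quadratic term it would give $tr(\Sigma_0)^2$ rather than $tr(\Sigma_0^2)$; under Assumption~\ref{ass:gap_k0} these are comparable up to constants because the effective rank is bounded, but I would aim for the sharp form.

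Given sub-exponential $Z_i$, Bernstein's inequality for $n^{-1}\sum_iZ_i$ yields
\[
2\exp\bigl\{-nK\min\bigl(t^2/tr(\Sigma_0^2),\,t/tr(\Sigma_0)\bigr)\bigr\},
\]
using $\|\Sigma_0\|\le tr(\Sigma_0)$ to replace the operator norm and absorbing the powers of $\sigma_0$ into $K_0$. This is the tail bound in the statement.

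The moment bound then follows by integrating the tail: $\mathbb E|W|^k=\int_0^\infty kt^{k-1}\mathbb P(|W|>t)\,dt$. Splitting the tail into its Gaussian and exponential regimes, and bounding the minimum of the two exponents by each one on its region, gives
\[
\int kt^{k-1}e^{-nK_0t^2/tr(\Sigma_0^2)}\,dt\precsim C_k\,(tr(\Sigma_0^2)/n)^{k/2}
\]
and
\[
\int kt^{k-1}e^{-nK_0t/tr(\Sigma_0)}\,dt\precsim C_k\,(tr(\Sigma_0)/n)^k,
\]
with $C_k$ of order $\Gamma(k/2+1)K_0^{-k/2}+\Gamma(k+1)K_0^{-k}$. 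The dependence on $\kappa_\sigma$ enters only if the fallback route is used, where constants are normalized by $\lambda_1(\Sigma_0)$.
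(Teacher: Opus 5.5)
Your outline matches the paper's proof: Hanson--Wright for each $Z_i=Y_i^\top\Sigma_0Y_i-tr(\Sigma_0)$, then Bernstein for the average, then tail integration for the moments. You are also right that classical Hanson--Wright needs independent coordinates, which Assumption~\ref{as:spec_subg} does not give; the paper's proof invokes it without comment. However, your primary remedy does not exist. There is no Hanson--Wright bound with variance proxy $\|\Sigma_0\|_F^2$ for isotropic vectors that are merely sub-Gaussian in every direction. In fact the lemma as stated, under Assumption~\ref{as:spec_subg} alone with $K_0$ depending only on $\sigma_0$, is false. Take $\Sigma_0=I_p$ and $Y=R\theta$, where $\theta$ is uniform on $\sqrt p\,\mathcal S^{p-1}$ and $R$ is independent with $R^2\in\{1/2,3/2\}$ equiprobable. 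Then $Y$ is mean-zero and isotropic, and $\|Y\|_{\psi_2}$ is bounded by an absolute constant. Yet $tr(S_n)-tr(\Sigma_0)=\tfrac p2\,\bar\varepsilon_n$, where $\bar\varepsilon_n$ is an average of $n$ Rademacher signs. At $t=ps/2$ with $ps\ge 2$, the claimed bound is $2e^{-nK_0s/2}$. Cramér's theorem instead gives $\mathbb P_0(|\bar\varepsilon_n|>s)=e^{-nI(s)+o(n)}$ with $I(s)\sim s^2/2$. This exceeds the claimed bound for large $n$ once $s$ is small enough that $I(s)<K_0s/2$. The point is that without independence, $\|X\|^2$ fluctuates on the scale $tr(\Sigma_0)$, not $\|\Sigma_0\|_F$. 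So the sharp form you planned to aim for is unattainable.

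Your fallback is the correct argument and should be the main line. The bound $\|Z_i\|_{\psi_1}\precsim\sigma_0^2\,tr(\Sigma_0)$ plus Bernstein gives $2\exp\{-nK\min(t^2/tr(\Sigma_0)^2,\,t/tr(\Sigma_0))\}$ with $K=K(\sigma_0)$, and this is the true general bound. To reach the displayed form, use $tr(\Sigma_0)^2\le\widetilde r(\Sigma_0)^2\lambda_1(\Sigma_0)^2\le\widetilde r(\Sigma_0)^2\,tr(\Sigma_0^2)$ together with $\sup_n\widetilde r(\Sigma_{0n})\le 1/(2g_0)$ from Assumption~\ref{ass:gap_k0}. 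So the lemma holds once Assumption~\ref{ass:gap_k0} is added, with $K_0$ depending on $\sigma_0$ and $g_0$. It does not depend on $\kappa_\sigma$, contrary to your closing remark. Alternatively, assume independent coordinates of $Y_1$ (for example the Gaussian case), where the paper's Hanson--Wright route is valid as written. Either repair suffices for the paper's purposes, since the lemma is only used in Lemma~\ref{lem:rel_cont_v}, where Assumption~\ref{ass:gap_k0} holds and $tr(\Sigma_0)=O(1)$. Your tail-integration step for the moment bound is fine and matches the paper.
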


\begin{proof}
The concentration bound follows directly from the Hanson--Wright inequality applied to the centered quadratic forms \(X_i^\top X_i-tr(\Sigma_0)\), followed by Bernstein's inequality for sums of independent sub-exponential random variables; see, e.g., \cite[Proposition~5.16]{vershynin2010introduction}. Here we use \(\|\Sigma_0\|_F^2=tr(\Sigma_0^2)\) and \(\|\Sigma_0\|\le tr(\Sigma_0)\).

For the moment bound, let \(Z_n=tr(S_n)-tr(\Sigma_0)\). Then
For the moment bound, let \(Z_n=tr(S_n)-tr(\Sigma_0)\). Then
{\small
\begin{align}
\mathbb E_0|Z_n|^k
&=
k\int_0^\infty t^{k-1}\mathbb P_0(|Z_n|>t)\,dt \notag\\
&\le
2k\int_0^\infty t^{k-1}
\exp\!\left\{-nK_0
\min\!\left(
\frac{t^2}{tr(\Sigma_0^2)},
\frac{t}{tr(\Sigma_0)}
\right)\right\}dt \notag\\
&\le
2k\int_0^\infty t^{k-1}
\exp\!\left\{-\frac{nK_0t^2}{tr(\Sigma_0^2)}\right\}dt
+
2k\int_0^\infty t^{k-1}
\exp\!\left\{-\frac{nK_0t}{tr(\Sigma_0)}\right\}dt \notag\\
&\le
C_k\left\{
\left(\frac{tr(\Sigma_0^2)}{n}\right)^{k/2}
+
\left(\frac{tr(\Sigma_0)}{n}\right)^k
\right\}.
\end{align}
}
where the last inequality follows from suitable changes of variables and the standard Gamma-integral identities.
\end{proof}

\begin{lemma}
\label{lem:moment_khat}
Let $\eta\in(0,1)$ be fixed and, for each sample size $n$, define $\hat{k}_n$ as the smallest integer such that the first $\hat{k}_n$ sample principal components explain at least a fraction $\eta$ of the total sample variance. Let $k_{0n}$ denote the corresponding population counterpart. Suppose Assumptions \ref{as:spec_subg} and \ref{ass:gap_k0} hold. Then, under \(\mathbb P_0\), \(\mathbb P_0(\hat{k}_n\neq k_{0n})\precsim \exp\{-nK_1\}\) for some constant \(K_1>0\). Moreover, for all sufficiently large \(n\) and any \(l\ge1\), \(\mathbb E_0[|\hat{k}_n-k_{0n}|^l]\precsim n^{-l}\).
\end{lemma}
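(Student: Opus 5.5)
The plan is to reduce the event \(\{\hat k_n\neq k_{0n}\}\) to a deviation event for the normalized explained-variance curve \(\hat\gamma_k\). We then control that deviation with the trace concentration of Lemma~\ref{lem:trace_moment} together with an operator-norm concentration for \(S_n-\Sigma_0\).

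\textbf{Step 1 (reduction).} By definition, \(\hat k_n=k_{0n}\) holds whenever \(\hat\gamma_{k_{0n}-1}<\eta\le\hat\gamma_{k_{0n}}\). By Assumption~\ref{ass:gap_k0}, \(\gamma_{k_{0n}-1}\le\eta-g_0\) and \(\gamma_{k_{0n}}\ge\eta+g_0\). Hence
\[
\{\hat k_n\neq k_{0n}\}\subseteq\Bigl\{\max_{k\in\{k_{0n}-1,k_{0n}\}}|\hat\gamma_k-\gamma_k|\ge g_0\Bigr\}.
\]
Since \(\hat\gamma_k\) and \(\hat\gamma_{k'}\) are monotone in \(k\), only these two indices matter.

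\textbf{Step 2 (numerator).} By Weyl's inequality (or Ky Fan), for any \(k\) we have \(|\sum_{j\le k}\lambda_j(S_n)-\sum_{j\le k}\lambda_j(\Sigma_0)|\le k\|S_n-\Sigma_0\|\). The indices involved satisfy \(k_{0n}\le \eta^{-1}\cdot\)const. Indeed, \(\lambda_{k_{0n}}/\operatorname{tr}\Sigma_0\ge 2g_0\) forces \(k_{0n}\le 1/(2g_0)\), since every leading eigenvalue is at least \(\lambda_{k_{0n}}\). So \(k\) is uniformly bounded. The remark after Assumption~\ref{ass:gap_k0} gives \(\widetilde r(\Sigma_0)\le C\). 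Combined with \(\kappa_\sigma\le\lambda_1\le\kappa_\sigma^{-1}\), this also yields \(\operatorname{tr}\Sigma_0\asymp1\). We then invoke the Koltchinskii--Lounici-type sub-Gaussian bound \citep{koltchinskii2017} in its exponential form:
\[
\mathbb P_0\bigl(\|S_n-\Sigma_0\|\ge t\,\|\Sigma_0\|\bigr)\le \exp(-K n\min(t,t^2))
\]
for \(t\ge C\sqrt{\widetilde r/n}\). Since \(\widetilde r\) is bounded, this gives \(\exp(-Kn)\) for a fixed small \(t\) and large \(n\). Lemma~\ref{lem:trace_moment} with \(t\) a small multiple of \(\operatorname{tr}\Sigma_0\), using \(\operatorname{tr}(\Sigma_0^2)\le\lambda_1\operatorname{tr}\Sigma_0\), gives \(\mathbb P_0(|\operatorname{tr}S_n-\operatorname{tr}\Sigma_0|>\epsilon\operatorname{tr}\Sigma_0)\le2e^{-nK'}\).

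\textbf{Step 3 (ratio).} On the intersection of the two good events we have
\[
|\hat\gamma_k-\gamma_k|\le \frac{k\|S_n-\Sigma_0\|+\gamma_k|\operatorname{tr}S_n-\operatorname{tr}\Sigma_0|}{\operatorname{tr}S_n}.
\]
Choosing the deviation levels as small multiples of \(g_0\), this quantity is below \(g_0\). A union bound then gives \(\mathbb P_0(\hat k_n\neq k_{0n})\precsim e^{-nK_1}\).

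\textbf{Step 4 (moments).} We use \(|\hat k_n-k_{0n}|^l\le p_n^l\mathbf 1\{\hat k_n\neq k_{0n}\}\). Since \(\hat k_n\le \operatorname{rank}(S_n)+1\le n+1\), the bound sharpens to \(|\hat k_n-k_{0n}|\le n+1+C\). Therefore
\[
\mathbb E_0|\hat k_n-k_{0n}|^l\precsim (n+C)^l e^{-nK_1},
\]
which is \(\le n^{-l}\) for large \(n\) because the exponential beats any polynomial. Using the rank bound rather than \(p_n\) is important, since \(p_n\) may grow arbitrarily.

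The main obstacle is Step 2: we need a dimension-free exponential operator-norm bound with constants depending only on \(\sigma_0\) and \(\kappa_\sigma\). We also need to verify that the ``\(t\ge\sqrt{\widetilde r/n}\)'' regime holds uniformly, and this is where the bounded effective rank implied by Assumption~\ref{ass:gap_k0} is used. If a clean exponential-tail version is not available, a fallback is an \(\epsilon\)-net argument on the bounded-dimensional top eigenspace combined with Hanson--Wright for the tail part.
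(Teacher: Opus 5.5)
Your proposal is correct and follows the paper's route. Both arguments reduce \(\{\hat k_n\neq k_{0n}\}\) via Assumption~\ref{ass:gap_k0} to a deviation of order \(g_0\) in the explained-variance curve, control that deviation by the Koltchinskii--Lounici operator-norm concentration together with trace concentration, and get the moment bound from \(\hat k_n\le\operatorname{rank}(S_n)\le n\) combined with the exponential tail. Your Steps~1--3 are in fact more careful than the paper's display: you restrict to the two indices \(k_{0n}-1,k_{0n}\), keep \(\operatorname{tr}S_n\) in the denominator, carry the Weyl factor \(k\le k_{0n}\le 1/(2g_0)\), and handle \(|\operatorname{tr}S_n-\operatorname{tr}\Sigma_0|\) separately via Lemma~\ref{lem:trace_moment} rather than absorbing it into \(2\kappa_\sigma\|S_n-\Sigma_0\|\).
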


\begin{proof}
Recall the population and sample cumulative proportions of explained variance, \(\gamma_k:=\sum_{j=1}^k\lambda_j(\Sigma_0)/tr(\Sigma_0)\) and \(\hat{\gamma}_k:=\sum_{j=1}^k\lambda_j(S_n)/tr(S_n)\), for \(k=0,1,\ldots,p_n\), with \(\gamma_0=\hat{\gamma}_0=0\). By Assumption \ref{ass:gap_k0}, it follows,
\begin{equation}\label{gap_prob_bound}
\mathbb{P}_0(\hat{k}_n\neq k_{0n})
\le
\mathbb{P}_0\!\left(\sup_{0\le k\le p_n}|\hat{\gamma}_k-\gamma_k|\ge g_0\right).
\end{equation}

Next, note that
\begin{align}\label{diff_gamma_bound}
|\hat{\gamma}_{k_{0n}}-\gamma_{k_{0n}}|
&=\Bigl|\frac{V_n(k_{0n})}{tr(S_n)}-\frac{\xi_n}{tr(\Sigma_0)}\Bigr|
\le
\frac{\sum_{j=1}^{k_{0n}}|\lambda_j(S_n)-\lambda_j(\Sigma_0)|
+|tr(\Sigma_0)-tr(S_n)|}{tr(\Sigma_0)}
\le 2\kappa_\sigma\|S_n-\Sigma_0\|.
\end{align}
Therefore,
\begin{equation}\label{gap_prob_bound_1}
\mathbb{P}_0(\hat{k}_n\neq k_{0n})
\le
\mathbb{P}_0\!\left(\|S_n-\Sigma_0\|\ge\frac{g_0}{2k_\sigma}\right)\precsim\exp(-K_1 n).
\end{equation}

\noindent The last step follows from concentration inequality for the sample covariance operator; see \cite[Theorem 1]{koltchinskii2017}. This proves the probability bound. For the moment bound, since $\operatorname{rank}(S_n)\le n$ and $\eta<1$, we have $\hat{k}_n\le n$ and by Assumption \ref{ass:gap_k0}, $k_{0n}=O(1)$.  Hence, for all sufficiently large $n$, $|\hat{k}_n-k_{0n}|\le n$, and therefore, we obtain
\[
\mathbb{E}\bigl[|\hat{k}_n-k_{0n}|^l\bigr] \le \mathbb{E}\bigl[||\hat{k}_n-k_{0n}|^l\mathbf{1}\{\hat{k}_n\neq k_{0n}\}\bigr] 
\le n^l\mathbb{P}_0(\hat{k}_n\neq k_{0n})
\le n^le^{-K_1n}\precsim n^{-l},
\]
for all sufficiently large $n$, which completes the proof.
\end{proof}

\begin{lemma}
\label{lem:rel_cont_v}
Let \(\hat{k}_n\) and \(k_{0n}\) be as defined in Lemma \ref{lem:moment_khat}, and suppose Assumptions~\ref{as:spec_subg}, \ref{ass:non_incr_tail_mass}, and~\ref{ass:gap_k0} hold. Then, for all sufficiently large \(n\), any \(\epsilon>0\), and any \(l\ge1\),
\begin{equation}
\small
\label{rel_cont_v5}
\mathbb{P}_0\!\left(\frac{|V_n(\hat{k}_n)-\xi_n|}{\xi_n}\ge\epsilon\right)
\le 2e^{-K_1n},
\qquad
\mathbb{E}_0\!\left[\left(\frac{|V_n(\hat{k}_n)-\xi_n|}{\xi_n}\right)^l\right]
\precsim n^{-l/2},
\end{equation}
for some constant \(K_1>0\).
\end{lemma}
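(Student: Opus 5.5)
We split on whether the PCA dimension is selected correctly, working on the event \(\{\hat k_n=k_{0n}\}\) and its complement. On \(\{\hat k_n=k_{0n}\}\) the residual is
\[
V_n(\hat k_n)=V_n(k_{0n})=tr(S_n)-\sum_{j=1}^{k_{0n}}\lambda_j(S_n),\qquad \xi_n=tr(\Sigma_0)-\sum_{j=1}^{k_{0n}}\lambda_j(\Sigma_0).
\]
By Weyl's inequality this gives
\[
|V_n(k_{0n})-\xi_n|\le |tr(S_n)-tr(\Sigma_0)|+k_{0n}\|S_n-\Sigma_0\|.
\]
Assumption~\ref{ass:gap_k0} gives \(k_{0n}=O(1)\) and \(\sup_n\widetilde r(\Sigma_{0n})<\infty\). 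Together with Assumption~\ref{as:spec_subg}, this keeps \(tr(\Sigma_0)\) and \(tr(\Sigma_0^2)\le \lambda_1 tr(\Sigma_0)\) uniformly bounded. Assumption~\ref{ass:non_incr_tail_mass} gives \(\xi_n\ge\inf_t\xi_t>0\), so dividing by \(\xi_n\) costs only a constant. The relative error is therefore at most a constant times \(|tr(S_n)-tr(\Sigma_0)|+\|S_n-\Sigma_0\|\).

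For the tail bound we split
\[
\mathbb P_0\bigl(|V_n(\hat k_n)-\xi_n|\ge\epsilon\xi_n\bigr)\le \mathbb P_0(\hat k_n\ne k_{0n})+\mathbb P_0\bigl(|tr(S_n)-tr(\Sigma_0)|\ge c\epsilon\bigr)+\mathbb P_0\bigl(\|S_n-\Sigma_0\|\ge c\epsilon\bigr).
\]
Lemma~\ref{lem:moment_khat} bounds the first term by \(e^{-K_1n}\). Lemma~\ref{lem:trace_moment} bounds the second, since with bounded traces its exponent is linear in \(n\) for fixed \(\epsilon\). The Koltchinskii--Lounici concentration bound used in the proof of Lemma~\ref{lem:moment_khat} bounds the third, since bounded effective rank makes the deviation threshold \(O(n^{-1/2})\) and the tail exponential in \(n\). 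Reducing \(K_1\) and taking \(n\) large absorbs the constants into \(2e^{-K_1n}\).

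For the moment bound we decompose along the same events:
\[
\mathbb E_0\bigl[|V_n(\hat k_n)-\xi_n|^l\bigr]\precsim \mathbb E_0|tr(S_n)-tr(\Sigma_0)|^l+\mathbb E_0\|S_n-\Sigma_0\|^l+\mathbb E_0\bigl[|V_n(\hat k_n)-\xi_n|^l\mathbf 1\{\hat k_n\ne k_{0n}\}\bigr].
\]
The first term is \(O(n^{-l/2})\) by the moment part of Lemma~\ref{lem:trace_moment}. The second is \(O(n^{-l/2})\) by the moment version of Koltchinskii--Lounici: \(\mathbb E\|S_n-\Sigma_0\|^l\precsim \|\Sigma_0\|^l\bigl(\sqrt{\widetilde r/n}+\widetilde r/n\bigr)^l\), or equivalently by integrating the tail bound.

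The third term is the remaining obstacle, because \(V_n(\hat k_n)\) is unbounded on this event. We will use the crude bound \(0\le V_n(\hat k_n)\le tr(S_n)\) and Cauchy--Schwarz:
\[
\mathbb E_0\bigl[|V_n(\hat k_n)-\xi_n|^l\mathbf 1\{\hat k_n\ne k_{0n}\}\bigr]\le \bigl\{\mathbb E_0(tr(S_n)+\xi_n)^{2l}\bigr\}^{1/2}\,\mathbb P_0(\hat k_n\ne k_{0n})^{1/2}.
\]
By Lemma~\ref{lem:trace_moment}, the \(2l\)-th moment of \(tr(S_n)\) is uniformly bounded. The probability factor contributes \(e^{-K_1n/2}\), which is \(o(n^{-l/2})\). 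Dividing by \(\xi_n^l\), which is bounded below, completes the moment bound. The one point needing care is that the constants in the Koltchinskii--Lounici bound depend only on \(\sigma_0\) and on the uniformly bounded effective rank, not on \(p_n\); this is what makes the argument valid when \(p_n\) diverges.
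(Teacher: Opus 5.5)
Your proof is correct. The tail bound is the paper's own argument: a union bound over the trace deviation, the operator-norm deviation, and \(\{\hat k_n\neq k_{0n}\}\), handled by Lemma~\ref{lem:trace_moment}, the Koltchinskii--Lounici concentration inequality, and Lemma~\ref{lem:moment_khat}.

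You differ from the paper in the moment bound. The paper does not split on events. It writes a single pathwise inequality valid on the whole sample space,
\(|V_n(\hat k_n)-\xi_n|/\xi_n\precsim|tr(S_n)-tr(\Sigma_0)|+\|S_n-\Sigma_0\|+|\hat k_n-k_{0n}|\{1+\|S_n-\Sigma_0\|\}\).
The last term comes from \(|V_n(\hat k_n)-V_n(k_{0n})|\le|\hat k_n-k_{0n}|\lambda_1(S_n)\). The paper then takes \(l\)-th moments term by term. It controls the mismatch with the moment bound \(\mathbb E_0|\hat k_n-k_{0n}|^l\precsim n^{-l}\) from Lemma~\ref{lem:moment_khat}, plus Cauchy--Schwarz on the cross term.

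You instead apply Weyl only on \(\{\hat k_n=k_{0n}\}\). On the complement you use the crude bound \(0\le V_n(\hat k_n)\le tr(S_n)\), uniform \(2l\)-th moments of \(tr(S_n)\), and the factor \(\mathbb P_0(\hat k_n\neq k_{0n})^{1/2}\).

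Your route needs only the probability part of Lemma~\ref{lem:moment_khat} and has no cross term, so it is slightly more self-contained. The underlying idea is the same, though: the paper obtains its moment bound on \(\hat k_n\) by the same device of multiplying a crude bound (\(|\hat k_n-k_{0n}|\le n\)) by an exponentially small probability.

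What the paper's pathwise form buys is reuse: the same decomposition reappears in Lemma~\ref{lem:sign_bias_v}. Your event split would work equally well there, since the bad event contributes only \(O(e^{-K_1n/2})\) to any moment.
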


\begin{proof}
Under Assumptions~\ref{ass:non_incr_tail_mass}, \(\xi_n\) is bounded away from zero and
\begin{align}
\label{rel_cont_v2}
\frac{|V_n(\hat{k}_n)-\xi_n|}{\xi_n}
\precsim
|tr(S_n)-tr(\Sigma_0)|+\|S_n-\Sigma_0\|
+|\hat{k}_n-k_{0n}|\{1+\|S_n-\Sigma_0\|\}.
\end{align}
Hence, for any \(\epsilon>0\) and some \(\epsilon^\ast>0\),
\begin{align}
\label{rel_cont_v4}
\mathbb{P}_0\!\left(\frac{|V_n(\hat{k}_n)-\xi_n|}{\xi_n}\ge\epsilon\right)
&\le
\mathbb{P}_0\!\left(|tr(S_n)-tr(\Sigma_0)|\ge\epsilon^\ast\right)
+\mathbb{P}_0\!\left(\|S_n-\Sigma_0\|\ge\epsilon^\ast\right)
+\mathbb{P}_0(\hat{k}_n\neq k_{0n})
\le 2e^{-K_1n},
\end{align}
for all sufficiently large \(n\), by Lemma~\ref{lem:trace_moment}, \cite[Theorem~1]{koltchinskii2017}, and Lemma~\ref{lem:moment_khat}.
for some constant $K_1>0$ depending only on $\kappa_\sigma$. The trace bound follows from Lemma~\ref{lem:trace_moment}, since Assumption~\ref{ass:gap_k0} gives \(k_{0n}=O(1)\), while Assumption~\ref{as:spec_subg} and the monotonicity of \(\{\xi_n\}\) imply \(tr(\Sigma_0)=\sum_{j=1}^{k_{0n}}\lambda_j(\Sigma_0)+\xi_n=O(1)\), and hence \(tr(\Sigma_0^2)=O(1)\). 
Next, by \eqref{rel_cont_v2}, for any \(l\ge1\),
\[
\mathbb{E}_0\!\left[\left(\frac{|V_n(\hat{k}_n)-\xi_n|}{\xi_n}\right)^l\right]
\precsim
\mathbb{E}_0|tr(S_n)-tr(\Sigma_0)|^l
+\mathbb{E}_0\|S_n-\Sigma_0\|^l
+\mathbb{E}_0|\hat{k}_n-k_{0n}|^l
+\Bigl\{\mathbb{E}_0|\hat{k}_n-k_{0n}|^{2l}\Bigr\}^{1/2}
\Bigl\{\mathbb{E}_0\|S_n-\Sigma_0\|^{2l}\Bigr\}^{1/2},
\]
where Cauchy--Schwarz is used for the last term. Lemma~\ref{lem:trace_moment}, \cite[Theorem~4]{koltchinskii2017}, and Lemma~\ref{lem:moment_khat} therefore yield \(\mathbb{E}_0[(|V_n(\hat{k}_n)-\xi_n|/\xi_n)^l]\precsim n^{-l/2}\) for all sufficiently large \(n\), completing the proof.
\end{proof}

\begin{lemma} \label{lem:sign_bias_v}
Let \(\hat{k}_n\) and \(k_{0n}\) be as defined in Lemma \ref{lem:moment_khat}, Under Assumptions~\ref{as:spec_subg}, \ref{ass:non_incr_tail_mass}, \ref{ass:gap_k0}, and~\ref{ass:spectral_gap_k0},
\[
\left|\mathbb E_0\left[\frac{V_n(\hat k_n)-\xi_n}{\xi_n}\right]\right|
\precsim n^{-1}.
\]
\end{lemma}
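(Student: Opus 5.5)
We split on the event $\{\hat k_n=k_{0n}\}$. Its complement has probability $\precsim e^{-K_1n}$ by Lemma~\ref{lem:moment_khat}. On the complement, Cauchy--Schwarz together with the moment bound of Lemma~\ref{lem:rel_cont_v} gives
\[
\mathbb E_0\!\left[\frac{|V_n(\hat k_n)-\xi_n|}{\xi_n}\mathbf 1\{\hat k_n\neq k_{0n}\}\right]\precsim n^{-1/2}e^{-K_1n/2}=o(n^{-1}).
\]
On the good event, $V_n(\hat k_n)=V_n(k_{0n})=\operatorname{tr}(S_n)-\sum_{j\le k_{0n}}\lambda_j(S_n)$. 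Write $k=k_{0n}$, $P=P_{0n}^{k}$ and $\hat P=P_n^{k}$. It then suffices to show
\[
\bigl|\mathbb E_0[V_n(k)]-\xi_n\bigr|\precsim n^{-1},
\]
because replacing $V_n(k)$ by $V_n(\hat k_n)$ on the bad event changes the expectation only by $o(n^{-1})$, using the same exponential bound together with finite second moments of $\operatorname{tr}(S_n)$. The conclusion then follows after dividing by $\xi_n$, which is bounded below by Assumption~\ref{ass:non_incr_tail_mass}.

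Since $\mathbb E_0\operatorname{tr}(S_n)=\operatorname{tr}(\Sigma_0)$ exactly, the bias of $V_n(k)$ equals minus the bias of $\sum_{j\le k}\lambda_j(S_n)=\operatorname{tr}(\hat P S_n)$. We therefore need
\[
\bigl|\mathbb E_0\operatorname{tr}(\hat PS_n)-\operatorname{tr}(P\Sigma_0)\bigr|\precsim n^{-1}.
\]
Put $E=S_n-\Sigma_0$, so that $\mathbb E_0E=0$. We decompose
\[
\operatorname{tr}(\hat PS_n)-\operatorname{tr}(P\Sigma_0)=\operatorname{tr}(PE)+\bigl[\operatorname{tr}(\hat PS_n)-\operatorname{tr}(PS_n)\bigr].
\]
The first term has mean zero. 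For the second, Ky Fan's variational characterization gives
\[
0\le \operatorname{tr}(\hat PS_n)-\operatorname{tr}(PS_n).
\]
Moreover, a standard second-order perturbation expansion of the top-$k$ eigenvalue sum, for example via $\operatorname{tr}((\hat P-P)S_n)=\operatorname{tr}((\hat P-P)(\Sigma_0+E))$ together with $\operatorname{tr}((\hat P-P)\Sigma_0)\le0$ (again Ky Fan, now for $\Sigma_0$), yields
\[
\bigl|\operatorname{tr}(\hat PS_n)-\operatorname{tr}(PS_n)\bigr|\le |\operatorname{tr}((\hat P-P)E)|+|\operatorname{tr}((\hat P-P)\Sigma_0)|\le 2\|\hat P-P\|_1\|E\| .
\]
Here $\|\cdot\|_1$ denotes the trace norm.

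By Davis--Kahan with the eigengap $\gamma$ from Assumption~\ref{ass:spectral_gap_k0}, and since $\hat P-P$ has rank at most $2k$,
\[
\|\hat P-P\|_1\le 2k\|\hat P-P\|\precsim k\|E\|/\gamma .
\]
Because $k=O(1)$ by Assumption~\ref{ass:gap_k0}, the remainder is $\precsim\|E\|^2$. Finally, the moment bound of \cite[Theorem~4]{koltchinskii2017}, with $\widetilde r(\Sigma_{0n})$ uniformly bounded and $\lambda_1$ bounded by Assumption~\ref{as:spec_subg}, gives $\mathbb E_0\|E\|^2\precsim n^{-1}$. Combining the pieces yields the $n^{-1}$ signed bias.

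The main obstacle is making the quadratic remainder bound rigorous. In particular, the bound $|\operatorname{tr}((\hat P-P)\Sigma_0)|\precsim\|\hat P-P\|_F^2$, which is the sharper Ky Fan / gap identity, has the form $\operatorname{tr}((P-\hat P)\Sigma_0)=\sum$ over cross terms weighted by eigenvalue differences. If this is awkward, I would instead use Weyl's inequality only on the event $\|E\|<\gamma/2$ and the crude bound $\|E\|\,k$ off it. The off-event part is controlled by Koltchinskii's exponential tail bound.
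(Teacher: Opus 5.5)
Your proposal is correct, and its architecture matches the paper's. You reduce to the fixed rank $k_{0n}$ via the exponential bound on $\mathbb P_0(\hat k_n\neq k_{0n})$. You use Ky Fan to see that the fixed-rank bias $\xi_n-\mathbb E_0V_n(k_{0n})=\mathbb E_0[\operatorname{tr}\{(\hat P-P)S_n\}]$ is nonnegative. You then bound it by a quadratic in $E=S_n-\Sigma_0$ whose mean is $O(n^{-1})$.

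The two arguments differ only in how that quadratic bound is obtained. The paper proves a quadratic-growth inequality for the population Ky Fan objective. Writing $q_j=u_j^\top Qu_j$ for a rank-$k$ projection $Q$, it shows $\operatorname{tr}(P\Sigma_0)-\operatorname{tr}(Q\Sigma_0)\ge\gamma\{k-\operatorname{tr}(PQ)\}=\frac{\gamma}{2}\|Q-P\|_F^2$. It then maximizes $u\|E\|_F-\frac{\gamma}{2}u^2$ to get $\operatorname{tr}\{(\hat P-P)S_n\}\le\|E\|_F^2/(2\gamma)$, with no perturbation theorem and no factor of $k$.

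You instead use only the sign $\operatorname{tr}\{(\hat P-P)\Sigma_0\}\le0$. This gives directly $0\le\operatorname{tr}\{(\hat P-P)S_n\}\le\operatorname{tr}\{(\hat P-P)E\}\le\|\hat P-P\|_1\|E\|$, which is the clean form of your displayed chain. The factor $2$ and the separate bound on $|\operatorname{tr}\{(\hat P-P)\Sigma_0\}|$ are unnecessary. You then take the second factor of $\|E\|$ from Davis--Kahan together with $\operatorname{rank}(\hat P-P)\le 2k$.

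What your route costs: it needs $k_{0n}=O(1)$, which Assumption~\ref{ass:gap_k0} supplies. It also needs a Davis--Kahan variant stated with the population gap, such as the Yu--Wang--Samworth form or your Weyl-based fallback on $\{\|E\|<\gamma/2\}$. What it buys: a bound $\precsim k\|E\|^2/\gamma$ in operator rather than Frobenius norm. Here that gain is immaterial, since $\operatorname{tr}(\Sigma_{0n})=O(1)$ already gives $\mathbb E_0\|E\|_F^2\precsim n^{-1}$.

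The inequality you flag as the ``main obstacle'' is exactly the paper's key step. It is a two-line computation using $\sum_j q_j=k$ and $q_j\in[0,1]$, and on your route it is not needed at all.

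Your handling of the random rank is equivalent to the paper's. You split on $\{\hat k_n=k_{0n}\}$ and use Cauchy--Schwarz with $|V_n(\hat k_n)-V_n(k_{0n})|\le\operatorname{tr}(S_n)$. The paper bounds the same quantity through $|\hat k_n-k_{0n}|(1+\|S_n-\Sigma_0\|)$ and Lemma~\ref{lem:moment_khat}.
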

\begin{proof}
Let \(P_n=P_n^{k_0}\) and \(\widehat P_n^{k_0}\) denote the population and empirical rank-\(k_0\) PCA projections, respectively, and write \(E_n=S_n-\Sigma_0\). For the fixed-rank residual \(V_n(k_0)=tr\{(I-\widehat P_n^{k_0})S_n\}\), since \(\mathbb E_0E_n=0\), $\xi_n-\mathbb E_0V_n(k_0)
=\mathbb E_0\!\left[tr\{(\widehat P_n^{k_0}-P_n)S_n\}\right]\ge0,$
where the inequality follows from the Ky Fan variational characterization, since \(\widehat P_n^{k_0}\) maximizes \(tr(QS_n)\) over rank-\(k_0\) orthogonal projections \(Q\). To bound the bias, let \(Q\) be any such projection and write \(\Sigma_0=\sum_{j=1}^{p_n}\lambda_j(\Sigma_0)u_ju_j^\top\), \(P_n=\sum_{j=1}^{k_0}u_ju_j^\top\), and \(q_j=u_j^\top Q u_j\). Since \(\sum_jq_j=k_0\) and \(tr(P_nQ)=\sum_{j\le k_0}q_j\), $tr(P_n\Sigma_0)-tr(Q\Sigma_0) \ge \{\lambda_{k_0}(\Sigma_0)-\lambda_{k_0+1}(\Sigma_0)\}\{k_0-tr(P_nQ)\}.$
Moreover, \(\|Q-P_n\|_F^2=2\{k_0-tr(P_nQ)\}\); hence Assumption~\ref{ass:spectral_gap_k0} gives $tr(P_n\Sigma_0)-tr(Q\Sigma_0)\ge \frac{\gamma}{2}\|Q-P_n\|_F^2.$ Since $S_n=\Sigma_0+E_n$
\begin{equation}
\label{eq:empirical_projection_compare}
tr(QS_n)-tr(P_nS_n)
\le \|Q-P_n\|_F\|E_n\|_F-\frac{\gamma}{2}\|Q-P_n\|_F^2
\le \frac{\|E_n\|_F^2}{2\gamma},
\end{equation}
where the last inequality follows by maximizing \(ux-(\gamma/2)u^2\) over \(u\ge0\). Taking \(Q=\widehat P_n^{k_0}\) in \eqref{eq:empirical_projection_compare} it follows,
\begin{equation}
\label{eq:fixed_rank_bias_bound}
0\le \xi_n-\mathbb E_0V_n(k_0)
\le \frac{1}{2\gamma}\mathbb E_0\|E_n\|_F^2 \precsim n^{-1}.
\end{equation}
The last step follows from Assumption~\ref{as:spec_subg},\ref{ass:gap_k0}, \citep[Theorem~4]{koltchinskii2017}.
Since \(\xi_n\) is bounded away from zero under Assumptions~\ref{ass:non_incr_tail_mass},
Since \(\xi_n\) is bounded away from zero and $O(1)$, \eqref{eq:fixed_rank_bias_bound} gives \(\left|\mathbb E_0[(V_n(k_0)-\xi_n)/\xi_n]\right|\precsim n^{-1}\). Hence,
\begin{equation}
\label{eq:random_rank_decomp}
\left|\mathbb E_0\left[\frac{V_n(\hat k_n)-\xi_n}{\xi_n}\right]\right|
\precsim n^{-1}
+\mathbb E_0|\hat k_n-k_{0n}|
+\mathbb E_0\!\left[|\hat k_n-k_{0n}|\|S_n-\Sigma_0\|\right].
\end{equation}
Lemma~\ref{lem:moment_khat} and the Cauchy--Schwarz argument used after \eqref{rel_cont_v5} show that the last two terms are \(O(n^{-1})\), proving the result.
\end{proof}

\begin{lemma}[Almost sure consistency of the intermediate sample size]
\label{lem:T_cont}
Let \(T_m\) be the intermediate sample size in Algorithm~\ref{algo_1}. Under Assumptions~\ref{as:spec_subg}, \ref{ass:non_incr_tail_mass}, \ref{ass:gap_k0}, \ref{as:A1}, and~\ref{as:A2}, \(T_m/n_0(c)\xrightarrow{\mathrm{a.s.}}\rho\) under \(\mathbb P_0\) as \(c\downarrow0\).
\end{lemma}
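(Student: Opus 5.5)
The plan is to reduce the statement to the exponential concentration of the pilot residual in Lemma~\ref{lem:rel_cont_v}, the oracle characterization in Lemma~\ref{lem:asymp_equi}, and Borel--Cantelli. We write \(n_0=n_0(c)\). By definition,
\[
T_m=\max\left\{m,\left\lceil\rho\sqrt{(A/c)V_m(\hat k_m)}\right\rceil\right\}.
\]
Since \(m=o(n_0)\) by Assumption~\ref{as:A1} and the ceiling contributes at most \(1\), with \(1/n_0\to0\) by Lemma~\ref{lem:oracle_optimality}, it suffices to show
\[
\frac{\rho}{n_0}\sqrt{\frac{A}{c}\,V_m(\hat k_m)}\xrightarrow{\mathrm{a.s.}}\rho .
\]
Once this holds, the maximum with \(m\) is eventually attained by the second argument.

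First we factor the normalized quantity:
\[
\frac{A}{c\,n_0^2}\,V_m(\hat k_m)=\frac{A\xi_{n_0}}{c\,n_0^2}\cdot\frac{\xi_m}{\xi_{n_0}}\cdot\frac{V_m(\hat k_m)}{\xi_m}.
\]
The first factor tends to \(1\) by Lemma~\ref{lem:asymp_equi}, which gives \(c=A\xi_{n_0}n_0^{-2}\{1+o(1)\}\). The second factor tends to \(1\) by Assumption~\ref{as:A2}. It therefore remains to show that \(V_m(\hat k_m)/\xi_m\to1\) almost surely. Continuity of the square root then finishes the argument.

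For this we take Lemma~\ref{lem:rel_cont_v} at sample size \(m\): for every fixed \(\epsilon>0\),
\[
\mathbb P_0\bigl(|V_m(\hat k_m)-\xi_m|/\xi_m\ge\epsilon\bigr)\le 2e^{-K_1 m}
\]
once \(m\) is large. Since \(m\to\infty\) as \(c\downarrow0\), this is a deterministic fact about the pilot choice \eqref{pilot}. Combining it with the factorization, we get for each \(\epsilon\in(0,1)\)
\[
\mathbb P_0\bigl(|T_m/(\rho n_0)-1|>\epsilon\bigr)\precsim e^{-K_2 m}.
\]
This exponential bound is also the form invoked later in the proofs of Theorems~\ref{th:th_almost_sure_cons}--\ref{th:risk_diff}, so it is worth recording explicitly.

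The main obstacle is converting this into almost sure convergence as \(c\downarrow0\), since \(c\) is a continuous parameter. I would follow the paper's convention and consider any sequence \(c_j\downarrow0\), so that \(m_j=\lceil (A/c_j)^{1/(2\delta)}\rceil\) ranges over a nondecreasing sequence of integers tending to infinity. Distinct values of \(c\) yielding the same \(m\) use the same pilot data and so the same \(V_m(\hat k_m)\); only the deterministic factors change, and those converge. Hence the relevant events are indexed by the integer \(m\). Since \(\sum_m e^{-K_2 m}<\infty\), the Borel--Cantelli lemma shows that only finitely many of these events occur almost surely. Letting \(\epsilon\downarrow0\) along a countable sequence then gives \(T_m/n_0(c)\to\rho\) almost surely.

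A subtle point is that \(p_m\) and \(\Sigma_{0m}\) vary with \(m\), so the samples form a triangular array. Borel--Cantelli does not require independence across \(m\), and the constant \(K_1\) in Lemma~\ref{lem:rel_cont_v} is uniform in \(n\) under Assumptions~\ref{as:spec_subg} and~\ref{ass:gap_k0}, so the summability argument goes through.
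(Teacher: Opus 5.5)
Your proposal is correct and follows essentially the same route as the paper. You discard the maximum with \(m\) via Assumption~\ref{as:A1}, replace \(cn_0(c)^2/A\) by \(\xi_m\{1+o(1)\}\) using Lemma~\ref{lem:asymp_equi} and Assumption~\ref{as:A2}, and reduce to the exponential bound of Lemma~\ref{lem:rel_cont_v} at sample size \(m\) followed by Borel--Cantelli; your indexing of the Borel--Cantelli step by the integer \(m\) is, if anything, cleaner than the paper's direct application to events indexed by the continuous parameter \(c\).

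One bookkeeping remark applies equally to the paper's phrase ``by the definition of \(n_0(c)\)''. Lemma~\ref{lem:asymp_equi} is stated under Assumption~\ref{ass:oracle_flatness}, which is not a hypothesis here, but the scaling \(cn_0(c)^2/(A\xi_{n_0(c)})\to1\) actually follows from Assumption~\ref{ass:non_incr_tail_mass} alone: \(\xi_n\downarrow\xi_\infty>0\), and comparing \(R^\star_{n_0(c)}\ge A\xi_\infty/n_0(c)+cn_0(c)\) with \(R^\star_{n'}\le(2+\epsilon)\sqrt{A\xi_\infty c}+c\) at \(n'=\lceil\sqrt{A\xi_\infty/c}\,\rceil\) (valid once \(\xi_{n'}\le(1+\epsilon)\xi_\infty\)) forces \(n_0(c)\sqrt{c/(A\xi_\infty)}\to1\).
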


\begin{proof}
Let \(\epsilon\in(0,1)\). From the definition of $T_m$ it follows,
{\small
\[
\mathbb P_0\!\left(\left|\frac{T_m}{\rho n_0(c)}-1\right|>\epsilon\right)
\le
\mathbb P_0\!\left(\sqrt{\frac{A}{c}V_m(\hat k_m)}<(1-\epsilon)n_0(c)\right)
+\mathbf 1\!\left\{\frac{m}{\rho n_0(c)}>1+\epsilon\right\}
+\mathbb P_0\!\left(\sqrt{\frac{A}{c}V_m(\hat k_m)}>(1+\epsilon)n_0(c)-\frac1\rho\right).
\]
}
By Assumption~\ref{as:A1}, \(m=o\{n_0(c)\}\), so the indicator vanishes for sufficiently small \(c\). Moreover, by the definition of \(n_0(c)\) and Assumption~\ref{as:A2}, $(n_0(c)^2c)/A
=\xi_{n_0(c)}\{1+o(1)\}
=\xi_m\{1+o(1)\},$ while \(\{(1+\epsilon)n_0(c)-1/\rho\}^2=(1+\epsilon)^2n_0(c)^2\{1+o(1)\}\). Hence, after possibly replacing \(\epsilon\) by a smaller positive constant,
\[
\mathbb P_0\!\left(\left|\frac{T_m}{\rho n_0(c)}-1\right|>\epsilon\right)
\le
\mathbb P_0\!\left(
\frac{|V_m(\hat k_m)-\xi_m|}{\xi_m}\ge\epsilon
\right)
\]
for all sufficiently small \(c\). The result is now an immediate consequence of Lemma~\ref{lem:rel_cont_v} and the Borel--Cantelli lemma.
\end{proof}

\bibliographystyle{cas-model2-names}
\bibliography{name.bib}

\end{document}